\newcommand{\beginsupplement}{%
        \setcounter{table}{0}
        \renewcommand{\thetable}{S\arabic{table}}%
        \setcounter{figure}{0}
        \renewcommand{\thefigure}{S\arabic{figure}}%
        \setcounter{page}{1}
     }
\newcommand{\bbS}{\ensuremath\mathbb{S}} 
\newcommand{\bbR}{\ensuremath\mathbb{R}} 
\theoremstyle{plain}
\newtheorem{prop}{Proposition}
\theoremstyle{definition}
\newtheorem{defi}{Definition}
\theoremstyle{remark}
\DeclareMathOperator{\xbold}{\mathbf{x}}
\DeclareMathOperator{\ybold}{\mathbf{y}}
\DeclareMathOperator{\zbold}{\mathbf{z}}
\DeclareMathOperator{\ebold}{\mathbf{e}}
\DeclareMathOperator{\dbold}{\mathbf{d}}
\DeclareMathOperator{\ubold}{\mathbf{u}}
\DeclareMathOperator{\wbold}{\mathbf{w}}
\DeclareMathOperator{\Xbold}{\mathbf{X}}
\DeclareMathOperator{\Zbold}{\mathbf{Z}}
\DeclareMathOperator{\1bold}{\mathbf{1}}
\DeclareMathOperator{\Ibold}{\mathbf{I}}
\DeclareMathOperator{\Hbold}{\mathbf{H}}
\DeclareMathOperator{\Jbold}{\mathbf{J}}
\DeclareMathOperator{\Gbold}{\mathbf{G}}
\DeclareMathOperator{\Tbold}{\mathbf{T}}
\DeclareMathOperator{\Phibold}{\mathbf{\Phi}}
\DeclareMathOperator{\Xibold}{\mathbf{\Xi}}
\DeclareMathOperator{\clr}{\text{clr}}
\DeclareMathOperator{\alr}{\text{alr}}
\DeclareMathOperator{\ilr}{\text{ilr}}
\DeclareMathOperator{\Cov}{\text{Cov}}
\newcommand{\norm}[1]{\left\lVert#1\right\rVert}
\title{A new class of $\alpha$-transformations for the spatial analysis of Compositional Data}
\author{Lucia Clarotto$^{a,*}$, Denis Allard$^b$, Alessandra Menafoglio$^c$ \\
\small{$^a$MINES ParisTech, PSL University, Centre de Geosciences, 77300 Fontainebleau, France}\\
\small{$^b$Biostatistics and Spatial Processes (BioSP), INRAE, 84914 Avignon, France}\\
\small{$^c$MOX, Department of Mathematics, Politecnico Milano, Milano, Italy}\\
\small{$^*$Corresponding author: lucia.clarotto@mines-paristech.fr}
}
\date{}
\begin{document}
	\maketitle
	
\begin{center}
\begin{minipage}{0.8\linewidth}

\noindent {\textbf{Abstract}}

Georeferenced compositional data are prominent in many scientific fields and in spatial statistics. This work addresses the problem of proposing models and methods to analyze and predict, through kriging, this type of data. To this purpose, a novel class of transformations, named the Isometric $\alpha$-transformation ($\alpha$-IT), is proposed, which encompasses the traditional Isometric Log-Ratio (ILR) transformation. It is shown that the ILR is the limit case of the $\alpha$-IT as $\alpha$ tends to 0 and that $\alpha=1$ corresponds to a linear transformation of the data. Unlike the ILR, the proposed transformation accepts 0s in the compositions when $\alpha>0$.  Maximum likelihood estimation of the parameter $\alpha$ is established.  Prediction using kriging on $\alpha$-IT transformed data is validated  on synthetic spatial compositional data, using prediction scores computed either in the geometry induced by the $\alpha$-IT, or in the simplex. Application to land cover data shows that the relative superiority of the various approaches w.r.t. a prediction objective depends on whether the compositions contained any zero component. When all components are positive, the limit cases (ILR or linear transformations) are optimal for none of the considered metrics.  An intermediate geometry,  corresponding to the $\alpha$-IT with maximum likelihood estimate, better describes the dataset in a geostatistical setting. When the amount of compositions with 0s is not negligible, some side-effects of the transformation gets amplified as $\alpha$ decreases, entailing poor kriging performances both within the $\alpha$-IT geometry and for metrics in the simplex.

\medskip

\noindent \textbf{Keywords}: Geostatistics, Kriging, Isometric Log-Ratio (ILR) transformation, Compositions with 0s.
\end{minipage}
\end{center}
	
\newpage

\section{Introduction}\label{sec:intro}

In statistics, compositional data are quantitative descriptions of the parts of some whole, conveying relative information. Mathematically, compositional data are represented by points in a simplex, i.e. vectors with non-negative coordinates whose sum is constant. In this paper, we are interested in geostatistics or spatial statistics for compositional data, which is the area of statistics developing methods to analyze and predict, through kriging, compositional data associated with spatial or spatio-temporal phenomena. 

Georeferenced compositional datasets arise in varied fields of research, from geology to economics to chemistry to sociology, although most studies were historically concerned with topics in the geosciences.  In this setting, it took a long time to find a solution to the problem of how to perform a proper statistical analysis of \textit{closed} data -- namely data with a constant-sum constraint -- by taking into account the consequence of compositional constraints on correlations. Because some standard statistical techniques lose their applicability and classical interpretation when applied to compositional data, new techniques were needed. The first theoretically sound solution was proposed in the 1980's, when John Aitchison \citep{aitchison} built a consistent theory based on log-ratio transformations of compositional data. Later developments have shown that the mathematical foundation of a proper statistical analysis for this type of data is based on the definition of a specific geometry on the simplex, referred to as the \textit{Aitchison geometry}. Based on it and on the Principles of Compositional Data Analysis \citep{ilr-egozcue}, a relatively large body of literature has established a complete mathematical framework for statistical analysis which is nowadays widely accepted by the statistical community.  Spatial statistics has also been widely treated in the compositional community \citep{pawlo-olea, tolosana-missing, tolosana-06, tolosana-08,  tolosana-10, tolosana-11, tolosana-boogart}, by following the methods developed in the Aitchison geometry, i.e., by applying the log-ratio transformations that allow one to respect the Principles of Compositional Data Analysis. We refer to \citet{historical-review} for a complete review of the historical evolution of spatial analysis of compositional data through the Aitchison geometry. 

Even though the Aitchison's approach is nowadays mainstream in the analysis of compositional data, a number of authors have pushed forward alternative viewpoints, arguing that the choice of the appropriate method for the statistical data analysis should not be determined \textit{a priori} from a set of mathematical principles, but that this choice should rather depend, at least in part,  on the data \citep{scealy}. In this vein, \citet{tsagris-alpha} proposed a new family of transformations, called $\alpha$-transformations, parameterized by a constant $\alpha$. This parameter allows one to control the \textit{degree of transformation} applied to the data, ranging from a linear transformation ($\alpha=1$) to a log-ratio transformation ($\alpha=0$). In this setting, the parameter $\alpha$ is chosen in a data-driven manner, thus allowing one to render the approach application-specific. Note that the use of $\alpha$-transformations also enables one to deal with the presence of 0s in the compositions, unlike the log-ratio approach which is only suitable for strictly positive compositions. Beside these aspects, the approach based on $\alpha$-transformations proved effective in real studies, both in classification \citep{tsagris-classification} and in regression \citep{tsagris-regression}.

This paper follows the line pioneered by \citet{tsagris-alpha} and aims to establish a methodological framework for the statistical analysis of spatial compositional data, while finding a balance between the data-driven approach of \citet{tsagris-alpha} and that of the compositional community. In this vein, a novel set of transformations is considered, based on the concept of $\alpha$-contrasts -- which generalizes that of log-contrasts upon which the Aitchison geometry is based. We show that, similarly to \citet{tsagris-alpha}, our approach coincides with that of \citet{ilr-egozcue} for specific choices of the parameter $\alpha$, while attaining in general a more flexible framework than that of the Aitchison simplex. Besides, we shall also establish an explicit link between the covariance structure induced by the proposed class of transformations and that defined under the Aitchison geometry, opening broad perspectives of potential application on a wide range of covariance-based methods for exploratory and inferential data analyses.
In fact, the approach we propose is very general and not limited to spatial datasets. However, for the sake of brevity, this work will mainly focus on the problem of spatial analysis of compositional data with the lens of spatial prediction (kriging), which shall drive the formulation of the transformation, and particularly the choice of the “best" parameterization.

The remaining of this work is organized as follows: Section \ref{sec:state_art} introduces the theoretical concepts of compositional data, focusing on the log-ratio transformations, the spatial statistics methodologies applied in this field and the $\alpha$-transformations proposed by \citet{tsagris-alpha}. Section \ref{sec:alpha_ct_it} explores the new class of transformations considered in this paper, i.e., the Centered and Isometric $\alpha$-transformations ($\alpha$-CT and $\alpha$-IT), by discussing their properties, especially in a geostatistical setting. A maximum likelihood estimation method is proposed which maximizes the Gaussianity of the transformed data –    and, as a consequence, the  kriging  performances. Section \ref{sec:simulation_study} is interested in the application of the $\alpha$-IT to a simulated spatial dataset, in order to evaluate the improvement of this transformation over the classical Aitchison transformations. Section \ref{sec:copernicus} conducts a geostatistical analysis of land cover data, following the approach analyzed throughout the previous sections. Special attention is given to the analysis of data in presence of 0-parts in the compositions. Results show that when all parts are positive, the limit cases (ILR or linear transformations) are optimal for none of the considered metrics.  An intermediate geometry,  corresponding to the $\alpha$-IT with maximum likelihood estimate better describes the dataset in a geostatistical setting. When the amount of compositions with 0s is not negligible, some side-effects of the transformation gets amplified as $\alpha$ decreases, entailing poor kriging performances both within the $\alpha$-IT geometry and for metrics in the simplex. Finally, Section \ref{sec:discussion} reviews the main points of the paper.

\section{State-of-the-art analysis of spatial compositions}
\label{sec:state_art}
\subsection{Compositional Data analysis in the Aitchison simplex}\label{sec:coda}
\label{sec:usualCODA}
In this section we provide a brief overview of the key concepts underlying the analysis of compositional data through the Aitchison geometry. We refer the reader to, e.g.,  \citet{pawlo-egozcue} for a deeper account on the subject.

A column vector, $\xbold=[x_1,\dots,x_D]^\top$, is defined as a $D$-part composition when all its components are positive real numbers carrying only relative information. The sample space of compositional data is the simplex, defined as
\begin{equation}
\bbS^D=\{\xbold=[x_1,\dots,x_D]^\top \ | \  x_i>0, \ i =1,\dots,D; \ \sum_{i=1}^Dx_i=\kappa\},
\end{equation}
\noindent where $\kappa$ is an arbitrary constant which can be set to 1 without loss of generality. A $D$-part compositional vector $\zbold \in (0,\infty)^D$ can be represented as a point of the simplex $\bbS^D$ by use of the closure operator
\begin{equation}
    C: C(\zbold) = \xbold, \quad x_i = \frac{z_i}{\sum_{i=1}^D z_i} \quad \forall i=1,\dots,D.
    \label{eq:closure}
\end{equation}
The perturbation of a composition $\xbold \in \bbS^D$ by a composition $\ybold \in \bbS^D$ is
\begin{equation*}
\xbold \oplus \ybold = C \left( [x_1y_1, \dots, x_Dy_D]^\top \right).
\end{equation*}
The power transformation of a composition $\xbold \in \bbS^D$ by a constant $\alpha \in \bbR$ is
\begin{equation*}
\alpha \odot \xbold = C \left( [x_1^{\alpha}, \dots, x_D^{\alpha}]^\top \right).
\end{equation*}
In particular, $(-1) \odot \xbold = C \left( [1/x_1, \dots, 1/x_D]^\top \right)$. The neg-perturbation is then defined in the following way: $\xbold \ominus \ybold = \xbold \oplus (-1 \odot \ybold)$.
The simplex $(\bbS^D,\oplus,\odot)$, equipped with the perturbation operation and the power transformation, is a vector space \citep{aitchison, billheimer-01, pawlowsky-01}. This implies that all properties of translation and scalar multiplication hold. Moreover, by adding an inner product, a norm and a distance on the vector space, the simplex becomes an Euclidean metric space (i.e., a finite-dimensional Hilbert space). To refer to the properties of $(\bbS^D,\oplus,\odot)$ as an Euclidean metric space, we call it \textit{Aitchison geometry} on the simplex. We report here some definitions of the Aitchison geometry that will be useful for this work. The inner product of $\xbold, \ybold \in \bbS^D$ is
\begin{equation}
    \langle \xbold,\ybold \rangle_a = \frac{1}{2D}\sum_{i=1}^D \sum_{j=1}^D \ln\frac{x_i}{x_j}\ln\frac{y_i}{y_j},
    \label{eq:inner_product}
\end{equation}
and the norm of $\xbold \in \bbS^D$ is $\norm{\xbold}_a = \langle \xbold,\xbold \rangle_a$. The Aitchison distance between $\xbold$ and $\ybold \in \bbS^D$ is 
\begin{equation}
    d_a(\xbold, \ybold) = \norm{\xbold \ominus \ybold}_a = \sqrt{\frac{1}{2D}\sum_{i=1}^D \sum_{j=1}^D \left(\ln\frac{x_i}{x_j} - \ln\frac{y_i}{y_j}\right)^2}.
    \label{eq:aitchison_dist}
\end{equation}

\citet{aitchison} used the fact that for compositional data absolute quantities and units are irrelevant -- as interest lies in relative proportions of the components measured -- to introduce transformations based on log-ratios. We refer the reader to \citet{aitchison,principle-coord,buccianti} and references therein for a detailed account. Here, we focus on the  
Centered Log-Ratio transform (CLR) and on the Isometric Log-Ratio transform (ILR) which will be necessary for the sequel. 

The Centered Log-Ratio transformation (CLR) is defined as
\begin{equation}
\begin{aligned}
&\clr: \bbS^D \rightarrow \mathbb{H} \subset \bbR^D\\
\clr(\xbold)&=\left[\ln {\frac {x_1}{g(\xbold)}}, \cdots, \ln {\frac {x_D}{g(\xbold)}}\right]^\top,\label{eq:clr}
\end{aligned}
\end{equation}
where $g(\xbold)$ is the geometric mean of $\xbold$, $g(\xbold)=\left(\prod_{i=1}^{D} x_i \right)^{1/D}$. 

The CLR transform is an injective transformation between $\bbS^D$ and $\bbR^D$ which preserves distances, in the sense that $d_a(\xbold, \ybold)= d_e(\clr(\xbold), \clr(\ybold))$, with $d_e$ the Euclidean distance. However, the CLR is characterized by a constraint on the transformed sample, as the sum of the components of $\clr(\xbold)$ is 0 by definition.  The transformed sample thus lie on a hyper-plane, $\mathbb{H}$, which goes through the origin of $\bbR^D$ and is orthogonal to the vector of ones $[1, \dots, 1]^\top$. 
The CLR transformation cannot be directly associated with an orthogonal coordinate system in the simplex. For this reason, \citet{ilr-egozcue} introduced a new transformation, called Isometric Log-Ratio (ILR), defined as follows. 

Any vector $\xbold \in \bbS^D$ can be written as
\begin{equation*}
\xbold = \bigoplus_{i=1}^D\ln{x_i \odot \wbold_i},
\label{eq:generating_clr}
\end{equation*} 
where $\{\wbold_1,\dots,\wbold_D\}$ are the generator vectors $\wbold_i = C (\exp(\dbold_i))$, $i = 1,  \dots,D$, where $\dbold_i$ is the unit  vector of $\bbR^D$ associated to the $i$th coordinate. Omitting one vector of the generating system, a basis is obtained, e.g. $\{\wbold_1,\dots,\wbold_{D-1}\}$, which is not orthonormal. However, a new basis, orthonormal with respect to the inner product, can be obtained using the Gram-Schmidt procedure, so that it is possible to express a composition $\xbold \in \bbS^D$ as
\begin{equation*}
\xbold = \bigoplus_{i=1}^{D-1} x_i^{*} \odot \ebold_i, \quad x_i^{*} = \langle \xbold, \ebold_i \rangle_{a},
\end{equation*}
where $\{\ebold_1,\dots,\ebold_{D-1}\}$ is a generic orthonormal basis of the simplex $\bbS^D$ and $\langle \cdot, \cdot \rangle_{a}$ is the inner product in the Aitchison geometry of Equation \eqref{eq:inner_product}. 

The Isometric Log-Ratio (ILR) of a composition $\xbold$ is defined as
\begin{equation*}
\begin{aligned}
&\ilr: \bbS^D\rightarrow \bbR^{D-1}\\
\ilr(\xbold)= \xbold^* &= \left[\langle \xbold,\ebold_1\rangle_{a} ,\dots ,\langle \xbold,\ebold_{D-1}\rangle_{a} \right]^\top.
\end{aligned}
\end{equation*}
The ILR is an isomorphism between $(\bbS^D, d_a)$ and $(\bbR^{D-1}, d_e)$ which preserves distances (i.e., it is an isometry), and it can be retrieved from the CLR transform with the following equality:
\begin{equation*}
\ilr(\xbold) = \Hbold_D \clr(\xbold), 
\end{equation*}
where $\Hbold_D$ is the $(D-1,D)$ Helmert matrix whose rows are $\clr(\ebold_i)$. $\Hbold_D$ satisfies $\Hbold_D \Hbold_D^\top = \Ibold_{D-1}$ and $\Hbold_D^\top \Hbold_D = \Gbold_D$, where $\Gbold_D= (\Ibold_D - D^{-1}\Jbold_D)$ is the $D$-dimensional centering matrix and $\Jbold_D$ is the $(D,D)$ matrix of ones.

\subsection{Geostatistics for Compositional Data}\label{sec:geostat}

Throughout this section, we consider a $D$-part composition $\Xbold(s)$ with $X_i(s)>0, \ \text{for} \ i =1,\dots,D,\ \forall s \in \mathcal{D} \subset \bbR^d$. We further assume that it is second-order stationary, i.e. the expected values of the pairwise log-ratios at each point $s$ exist and do not depend on $s$, and the cross-covariances between every pairwise log-ratio at two different points $s_1$ and $s_2$ exist and depend only on the directional vector $h=s_2-s_1$. Such a compositional random field is said to be \textit{second-order Log-Ratio (LR) stationary}.  Let $\{s_j: j=1,\dots,n\}$ be a set of $n$ spatial locations in a spatial domain $\mathcal{D}$ containing georeferenced compositional data $\{\Xbold(s_j)\}_{j=1}^n=\{[X_1(s_j),\dots, X_D(s_j)]^\top\}_{j=1}^n$ in $\bbS^D$. Now, let $\Xbold(s_0)$ be the unobserved vector of compositional data at the prediction location $s_0$. The task of predicting $\Xbold(s_0)$ using linear combinations of $\{\Xbold(s_j)\}_{j=1}^n$ is known as cokriging in the geostatistics literature. We refer to \citet{wackernagel2003multivariate} for a general introduction to multivariate geostatistics and to cokriging.

Direct implementation of cokriging on $\Xbold(s_0)$ is  impractical for several reasons. For instance, the associated covariance matrix must satisfy the closure relations induced by the constant sum constraint, $\sum_{j=1}^D C_{ij}(h) = 0$ for all $h \in \bbR^d$ and all $i=1,\dots,D$,  which result in singular cokriging matrices. One way around would be to use cokriging on $(D-1)$ parts and deduce the last part from the closure property, but this approach also shows drawbacks. Predicted and simulated compositional vectors are expected to be elements of the simplex $\bbS^D$. Generally speaking, direct kriging or cokriging of the proportions, if achievable, cannot guarantee this property in all generality since kriging does not impose  non-linear constraints such as positivity \citep{walvoort2001compositional}.

In an ordinary cokriging setting, closure to 1 can be enforced by imposing identical kriging weights for all parts, since in this case
$$\sum_{i=1}^D X_i^\star(s_0)= \sum_{i=1}^D \sum_{j=1}^n \lambda_j X_i(s_0) = \sum_{j=1}^n \lambda_j \sum_{i=1}^D  X_i(s_0) = \sum_{j=1}^n \lambda_j =1,$$
where $[\lambda_1,\dots,\lambda_n]^\top$ is the unique vector of weights and $X_i^\star(s_0)$ is the cokriging of $X_i(s_0)$.
\citet{allard2018means} have shown that identical weights are obtained for all $s_0$ and all $\{\Xbold(s_j)\}_{j=1}^n$  if and only if the multivariate covariance model for $\Xbold(s)$ is proportional, i.e. it is the product of a covariance matrix and a single spatial covariance function. On the other hand, imposing positivity is possible, but requires quadratic programming \citep{walvoort2001compositional}.

Moreover, working directly on $\Xbold(s_0)$ does not account for the relative nature of parts which, following the above developments, should be properly acknowledged. To sum up, even though direct cokriging is in theory possible, it is not guaranteed to be optimal.
In \citet{tolosana-06}, an approach for geostatistics of compositional data based on the application of the principle of working in coordinates using Isometric Log-Ratio representations is presented in detail. For a recent review, see \citet{historical-review}. Following this approach, compositional data belonging to $\bbS^D$ are first transformed to a set of $(D-1)$ unbounded scores by means of the ILR. Then, the multivariate geostatistical techniques (e.g., covariance modeling, kriging, stochastic simulation) are applied to the scores. Finally, the resulting interpolated or simulated scores are back-transformed to obtain values in $\bbS^D$. Since this technique applies a transformation to the compositional dataset before any geostatistical method,  the spatial structure must be defined on the transformed dataset \citep{pawlo-olea}, as we describe in the following.

Let $\Xbold(s)$ be a second-order LR stationary compositional random field. The \textit{variation matrix}, $\Tbold(h)=[\tau_{ij}(h)]_{i,j=1}^D$ of $\Xbold(s)$ is the $(D, D)$ matrix whose elements are the autocovariances of the corresponding log-ratios
\begin{equation*}
\tau_{ij}(h)  = \Cov \left[\ln \left(\frac{X_i(s)}{X_j(s)}\right), \ln \left(\frac{X_i(s + h)}{X_j(s + h)}\right) \right].
\end{equation*}
The variation matrix is symmetric in the indices and in $h$ for any two components of $\Xbold(s)$. 

The covariance matrix of the Centered Log-Ratio (CLR) transformed random field $\Zbold_{\clr}(s)=\clr(\Xbold(s))$, called the \textit{CLR cross-covariance matrix}, is the $(D, D)$ matrix $\Xibold(h) = [\xi_{ij}(h)]_{i,j=1}^D$ which is the covariance between the elements of the CLR transformed data and the same elements lagged by $h$
\begin{equation*}
\xi_{ij}(h)= \Cov \left[\Zbold_{\clr}(s), \Zbold_{\clr}(s+h) \right]_{ij}= \Cov\left[\ln\left(\frac{X_i(s)}{g(\Xbold(s))}\right), \ln\left(\frac{X_j(s+h)}{g(\Xbold(s+h))}\right)\right].
\end{equation*}

The covariance matrix of the Isometric Log-Ratio (ILR) transformed data $\Zbold_{\ilr}(s)=\ilr(\Xbold(s))$, called the \textit{ILR cross-covariance matrix} (or \textit{coordinate cross-covariance matrix} in \citet{tolosana-06}), is the $(D-1, D-1)$ matrix $\Phibold(h)= [\phi_{ij}(h)]_{i,j=1}^{D-1}$ with
\begin{equation*}
 \phi_{ij}(h) = \Cov \left[\Zbold_{\ilr}(s), \Zbold_{\ilr}(s+h) \right]_{ij}.
\end{equation*} 

These cross-covariance matrices are related to each other in the following way 
\citep{pawlo-olea}:
\begin{equation}
\Xibold(h) + \Xibold^\top(h)  = - \Gbold_D \Tbold(h) \Gbold^\top_D,
\label{trans_xi_t}
\end{equation}
where $\Gbold_D$ is the centering matrix of dimension $D$ defined in Section \ref{sec:usualCODA}.
Moreover, the following relation between $\Xibold(h)$ and $\Phibold(h)$ can be proven by linearity: the ILR cross-covariance matrix $\Phibold(h)$ and the CLR cross-covariance matrix $\Xibold(h)$ satisfy
\begin{equation*}
\Phibold(h) = \Hbold_D \Xibold(h)  \Hbold_D^\top.
\end{equation*}

\subsection{The $\alpha$-transformation}\label{sec:alpha_trans}

In the last few years, \citet{tsagris-alpha} have proposed a new class of transformations, called $\alpha$-\textit{transformations}, that encompasses the Aitchison transformation in the sense that it allows to deal with 0-values in the compositions and that it retrieves the Aitchison geometry when $\alpha$ tends to 0. The $\alpha$-transformation, for $\alpha \neq 0$, of any compositional vector $\xbold \in \bbS^D$ is the mapping 
\begin{equation}
\begin{aligned}
&A_{\alpha} :\bbS^D \rightarrow \bbR^{D-1}\\
\zbold = A_{\alpha}&(\xbold) = \alpha^{-1} \Hbold_D \left(D\ubold_\alpha(\xbold) - \boldsymbol{1}_D \right),  \qquad \ubold_{\alpha}(\xbold) = C(\xbold^\alpha),
\end{aligned}
\label{eq:alpha_trans_tsagris}
\end{equation}
\noindent where $C(\cdot)$ is the closure operator defined in Equation \eqref{eq:closure} and the powering is applied component-wise. 

Detailed analyses of this transformation are available in \citet{tsagris-alpha}, \citet{tsagris-regression} and \citet{tsagris-classification}. Two important benefits of the $\alpha$-transformation are that it is well-defined for any $\alpha>0$ for compositions containing 0s and that it tends to the ILR transform as $\alpha$ approaches 0.

The $\alpha$-transformation was shown to yield good results when used to analyze compositional datasets, especially for classification problems \citep{tsagris-classification}. However, when analyzing the peculiarities of this transformation, \citet{tsagris-stewart} clarified that the images of the $\alpha$-transformation lie in a codomain that has a simplex shape, except for $\alpha=0$; in this case the codomain is the entire space $\bbR^{D-1}$. The simplex shape is due to the fact that a closure operator is introduced in the transformation, which allows for a closed form for the inverse transform. Beside this convenient property, the presence of the closure operation \eqref{eq:closure} does not seem to be mathematically necessary. Moreover, when embedded in a geostatistical setting, the spatial covariance structure is no longer explicitly connected to the variation matrix as in Equation \eqref{trans_xi_t}, which introduces some difficulties for the modeling of the spatial covariance that is needed for spatial interpolation. In order to overcome these limitations, we introduce a new family of $\alpha$-transformations in the next Section, that we name Centered and Isometric $\alpha$-transformations.

\section{The Centered and Isometric $\alpha$-transformations for spatial Compositional Data}
\label{sec:alpha_ct_it}
\subsection{The Centered and Isometric $\alpha$-transformations}\label{sec:alpha_it}

We here introduce a new class of transformations called \textit{Centered} and \textit{Isometric} $\alpha$\textit{-transformations} ($\alpha$-CT and $\alpha$-IT). Let us define
$\bbS_0^D=\{\xbold=[x_1,\dots,x_D]^\top \ | \  x_i \geq 0, \ i =1,\dots,D; \ \sum_{i=1}^Dx_i=1\}$ the simplex that admits one or more 0-values in its components.

\begin{defi}
Let $\alpha > 0$. The Centered $\alpha$-transformation ($\alpha$-CT) of a compositional vector $\xbold \in \bbS_0^D$ is the mapping $A_{\alpha-CT}  :  \bbS_0^D \rightarrow \bbR^{D}$
\begin{equation}
\ubold_{\alpha-CT} = A_{\alpha-CT}(\xbold)  =  \alpha^{-1} \Gbold_D \xbold^\alpha, 
\label{eq:alfa_ct}
\end{equation}
where $\boldsymbol{G}_D$ is the $(D,D)$ centering matrix defined above. The Isometric $\alpha$-transformation ($\alpha$-IT) of a compositional vector $\xbold \in \bbS_0^D$ is the mapping $A_{\alpha-IT} :\bbS_0^D \rightarrow \bbR^{D-1}$
\begin{equation}
\zbold_{\alpha-IT} = A_{\alpha-IT}(\xbold) 
= \alpha^{-1} \Hbold_D \Gbold_D \xbold^\alpha = \alpha^{-1} \Hbold_D \xbold^\alpha,
\label{eq:alfa_it}
\end{equation}
where $\Hbold_D$ is the $(D-1, D)$ Helmert matrix defined in Section \ref{sec:coda}.
\end{defi}

To better understand the $\alpha$-CT, it is useful to make the link between Equation \eqref{eq:alfa_ct} and the CLR transformation \eqref{eq:clr} more explicit. Indeed, the $\alpha$-CT of a compositional vector $\xbold$ in $\bbS^D$ reads
\begin{equation}
\ubold_{\alpha-CT} =  \left[\alpha^{-1} \left( x_1^\alpha-\frac{1}{D}\sum_{i=1}^Dx_i^\alpha\right),\dots, \alpha^{-1} \left( x_D^\alpha-\frac{1}{D}\sum_{i=1}^Dx_i^\alpha\right)
\right]^\top,
\label{eq:alfa_ct_bis}
\end{equation}
whereas the CLR transform of $\xbold$ in \eqref{eq:clr} is equivalent to 
\begin{equation}
\clr(\xbold)  =  \left[\left( \ln(x_1)-\frac{1}{D}\sum_{i=1}^D\ln(x_i)\right), \dots, \left( \ln(x_D)-\frac{1}{D}\sum_{i=1}^D\ln(x_i)\right)
\right]^\top.
\label{eq:clr_bis}
\end{equation}
In fact, similarly as for the CLR transform, the $\alpha$-CT of $\xbold$ operates a centering of a (power) transformation of $\xbold$ with respect to its average value, thus yielding a transformed vector $\ubold_{\alpha-CT}$ characterized by a zero-sum constraint. Hence, both $\ubold_{\alpha-CT}$ and $\clr(\xbold)$ lie on the same hyper-plane $\mathbb{H}\subset\bbR^D$ defined in Section \ref{sec:coda}. Analogously as for the ILR transform, the $\alpha$-IT then identifies a set of $(D-1)$ coordinates over an orthonormal basis of $\mathbb{H}$. The following Proposition sheds further light on the link between the proposed class of transformations and those in use under the Aitchison geometry. 

\begin{prop}
The CLR transform is retrieved from the Centered $\alpha$-transformation when $\alpha \to 0$ and the ILR transform is retrieved from the Isometric $\alpha$-transformation when $\alpha \to 0$, provided that $\xbold \in \bbS^D$.
\end{prop}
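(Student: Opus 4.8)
The plan is to exploit the elementary limit $\lim_{\alpha\to 0}\alpha^{-1}(t^\alpha-1)=\ln t$, valid for every $t>0$, and to push it through the fixed (hence continuous) linear maps $\Gbold_D$ and $\Hbold_D$. Everything reduces to a componentwise Taylor expansion of $x_i^\alpha=e^{\alpha\ln x_i}$ around $\alpha=0$, which is why the hypothesis $\xbold\in\bbS^D$ (all components strictly positive) is needed.

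First I would rewrite the $\alpha$-CT in a form amenable to the limit. Since $\Gbold_D\boldsymbol{1}_D=\boldsymbol{0}$ (because $\Jbold_D\boldsymbol{1}_D=D\boldsymbol{1}_D$, so $\Gbold_D\boldsymbol{1}_D=\boldsymbol{1}_D-\boldsymbol{1}_D=\boldsymbol{0}$), we have
\begin{equation*}
\ubold_{\alpha-CT}=\alpha^{-1}\Gbold_D\xbold^\alpha=\Gbold_D\left(\alpha^{-1}(\xbold^\alpha-\boldsymbol{1}_D)\right),
\end{equation*}
and the $i$th entry of the bracketed vector is $\alpha^{-1}(x_i^\alpha-1)$. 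For $\xbold\in\bbS^D$ every $x_i>0$, so this entry converges to $\ln x_i$ as $\alpha\to 0$ — it is the derivative at $0$ of $\alpha\mapsto x_i^\alpha=e^{\alpha\ln x_i}$, or equivalently one application of L'Hôpital's rule. Because $D$ is finite, componentwise convergence gives $\alpha^{-1}(\xbold^\alpha-\boldsymbol{1}_D)\to[\ln x_1,\dots,\ln x_D]^\top$, and continuity of multiplication by the fixed matrix $\Gbold_D$ yields $\ubold_{\alpha-CT}\to\Gbold_D[\ln x_1,\dots,\ln x_D]^\top$. The last step is to identify this limit with the CLR transform: by \eqref{eq:clr_bis}, $\clr(\xbold)_i=\ln x_i-D^{-1}\sum_{j=1}^D\ln x_j=(\Gbold_D[\ln x_1,\dots,\ln x_D]^\top)_i$, which establishes the first claim.

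For the $\alpha$-IT I would simply note that $\Hbold_D\Gbold_D=\Hbold_D$ (since $\Gbold_D=\Hbold_D^\top\Hbold_D$ and $\Hbold_D\Hbold_D^\top=\Ibold_{D-1}$, so $\Hbold_D\Gbold_D=\Hbold_D\Hbold_D^\top\Hbold_D=\Hbold_D$), hence $\zbold_{\alpha-IT}=\alpha^{-1}\Hbold_D\xbold^\alpha=\Hbold_D\ubold_{\alpha-CT}$. Letting $\alpha\to 0$ and using the limit just established together with continuity of $\Hbold_D$ gives $\zbold_{\alpha-IT}\to\Hbold_D\clr(\xbold)=\ilr(\xbold)$, the last equality being the identity $\ilr(\xbold)=\Hbold_D\clr(\xbold)$ recalled in Section~\ref{sec:coda}.

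There is no serious obstacle here; the only point deserving care is that the limit $\alpha^{-1}(x_i^\alpha-1)\to\ln x_i$ is finite, which is exactly where $\xbold\in\bbS^D$ is used: if some $x_i=0$, the corresponding entry of $\alpha^{-1}\xbold^\alpha$ stays $0$ for all $\alpha>0$ while $\ln x_i=-\infty$, so no CLR/ILR limit exists — consistent with those transforms being undefined on $\bbS_0^D\setminus\bbS^D$. One may optionally record the rate: $x_i^\alpha-1=\alpha\ln x_i+\frac{\alpha^2}{2}(\ln x_i)^2+O(\alpha^3)$, so the convergence is $O(\alpha)$, but this is not needed for the statement.
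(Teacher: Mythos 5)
Your proof is correct and follows essentially the same route as the paper: both arguments reduce to the elementary limit $\alpha^{-1}(t^\alpha-1)\to\ln t$ and then pass it through the fixed linear maps $\Gbold_D$ and $\Hbold_D$. The only cosmetic difference is that you factor $\Gbold_D$ out (using $\Gbold_D\boldsymbol{1}_D=\boldsymbol{0}$) before taking the limit, whereas the paper adds and subtracts $1$ inside each component to split $\alpha^{-1}(x_i^\alpha-m_\alpha(\xbold))$ into two such limits; these are the same computation arranged differently.
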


\begin{proof}
Since both the ILR transform and the $\alpha$-IT are defined as the multiplication of the Helmert matrix $\Hbold_D$ by the CLR transform and the $\alpha$-CT respectively, it is sufficient to prove that the $\alpha$-CT tends to the CLR transform when $\alpha \to 0$. From the definition of the geometric mean, we get $\ln g(\xbold) = D^{-1} \sum_{i=1}^D \ln x_i$, which does exist since $\xbold \in \bbS^D$. Notice that $\Gbold_D \xbold^\alpha = \xbold^\alpha - m_\alpha(\xbold)\boldsymbol{1}_D$, where $\boldsymbol{1}_D$ is a $D$-vector of ones and where the $\alpha$-mean $m_\alpha(\xbold)$ is 
$m_\alpha(\xbold)  = 1/D \sum_{i=1}^D x_i^\alpha$.
From \eqref{eq:alfa_ct} we get
\begin{equation*}
\begin{aligned}
\lim_{\alpha \to 0} u_{i, \alpha-CT} &=
\lim_{\alpha \to 0} \frac{x_i^\alpha - m_\alpha}{\alpha} = 
\lim_{\alpha \to 0} \left(\frac{x_i^\alpha - 1}{\alpha} - \frac{m_\alpha(\xbold) -1}{\alpha}\right) \\
&  = \ln(x_i) - \ln(g(\xbold)) = \ln\left(\frac{x_i}{g(\xbold)}\right) = \clr(\xbold)_i.
\end{aligned}
\end{equation*}
Hence, we have 
\begin{equation*}
\lim_{\alpha \to 0} \ubold_{\alpha-CT} = \clr(\xbold)
\end{equation*}
and
\begin{equation*}
\lim_{\alpha \to 0}\zbold_{\alpha-IT} = \lim_{\alpha \to 0}\Hbold_D \ubold_{\alpha-CT} = \Hbold_D \clr(\xbold) = \ilr(\xbold).
\end{equation*}
\end{proof}

This result establishes that the $\alpha$-CT and the $\alpha$-IT are a generalization of the CLR and ILR transform, respectively. Notice that, when $\alpha=1$, these transformations boil down to linear transformations of the compositions. On $[0,1]$, the parameter $\alpha$ thus offers a modeling flexibility that allows to interpolate smoothly between linear and log-transformations and therefore, hopefully, better adapt to the data. Other positive values of $\alpha$ are also possible. Negative values of $\alpha$ can also be considered for the $\alpha$-IT and $\alpha$-CT, provided that the compositions are in $\bbS^D$, i.e., that no 0-values occur. From now on, we will restrict the use of the $\alpha$-CT and $\alpha$-IT to $\alpha\geq0$. 

It is worth emphasizing that the $\alpha$-CT and the $\alpha$-IT (with $\alpha > 0$) can be applied to any composition of $\bbS_0^D$, including those lying on the border of the simplex. This point is an important improvement with respect to the usual CLR and ILR transforms that cannot be applied to compositions with one or more 0s. This point will be further discussed when analyzing the Copernicus Land Cover dataset in Section \ref{sec:spatial_with0}.

\subsection{Inverse of the Isometric $\alpha$-transformation}
\label{sec:back-transform}
Let us consider a vector $\zbold$ belonging to the codomain of $A_{\alpha-IT}$, with $\alpha > 0$.  In order to obtain the composition $\xbold$ such that $\zbold = A_{\alpha-IT}(\xbold)$, one must solve Equation \eqref{eq:alfa_it}. Multiplying  both sides of the equation by $\Hbold_D^\top$, we get that the composition $\xbold$ solves
\begin{equation}
\Hbold_D^\top \zbold = \alpha^{-1} \Gbold_D \xbold^\alpha,
\label{eq:numerical_alfainv}
\end{equation}
owing to the fact that $\Hbold_D^\top \Hbold_D = \Gbold_D$. Since $\Gbold_D$ is not invertible, Equation \eqref{eq:numerical_alfainv} cannot be solved directly. Notice that the $\alpha$-transformation in \eqref{eq:alpha_trans_tsagris} can easily be inverted because $\ubold_\alpha(\xbold)$ is a centered vector (thus with $(D-1)$ degrees of freedom), whilst here $\xbold^\alpha$ is unconstrained, with $D$ degrees of freedom. Let us define $Q_{\zbold}(\ybold) = \norm{\Hbold_D^{\top}\zbold - \alpha^{-1} \Gbold_D \ybold^\alpha}$. Then, 
$$\xbold = A_{\alpha-IT}^{-1}(\zbold) = \arg \min_{\ybold\in \mathbb{S}_0^D} Q_{\zbold}(\ybold).$$ 
When $\zbold$ belongs to the codomain of $A_{\alpha-IT}$, the minimum of $Q_{\zbold}$ is 0 and $\xbold \in \bbS^D$. Otherwise, the minimum $Q_{\zbold}$ is larger than 0 and it is achieved on the border of  $\bbS_0^D$, i.e., with at least one part of $\xbold$ being equal to 0. In practice, the minimum is found by using the function {\tt nlminb} in {\tt R} on $\wbold$, with $\ybold(\wbold) = \exp \wbold/ C(\exp \wbold)$, thus guaranteeing the positivity of all components of $\ybold$ and hence of $\xbold$.

The $\alpha$-IT only admits a numerical form for the inverse transform, but, as seen above, this does not hamper its use. If we wanted to have a closed-form of the inverse transformation, but still admit 0-values in the compositions, we should use either the $\alpha$-transformation \eqref{eq:alpha_trans_tsagris} introduced in \citet{tsagris-alpha} (which admits an explicit inverse as mentioned in Section \ref{sec:alpha_trans}) or the ALR Box-Cox transformation $\zbold_{\alr-BC}=[z_{i,\alr-BC}]^\top$, introduced by \citet{barcelo-pawlo}, inspired by the ALR transform, $\alr(\xbold)=\left[\ln {\frac {x_1}{x_D}}, \cdots, \ln {\frac {x_{D-1}}{x_D}}\right]^\top$ \citep{aitchison}, and defined, for each $i= 1,\dots,D-1$, as
\begin{equation}
z_{i,\alr-BC}= \frac{(x_i/x_D)^\alpha - 1}{\alpha}.
\label{eq:bcalr}
\end{equation}
The classical ALR transform is recovered from Equation \eqref{eq:bcalr} as $\alpha \to 0$. 
Although the two proposed Box-Cox-like transformations admit an inverse transform, the $\alpha$-transformation seems harder than necessary, since it introduces the closure operator \eqref{eq:closure} in the direct transform, and the ALR Box-Cox transform has the same issue of the classical ALR transform: it depends on the chosen denominator $x_D$ and cannot deal with 0-values in $x_D$. 

One may readily see similarities between the $\alpha$-CT and the Box-Cox transformation, defined for any vector $\xbold$ in $\mathbb{R}^D$ as $\ubold_{BC} = A_{BC}(\xbold)=\frac{\xbold^{\alpha} - \1bold_D}{\alpha}$. The difference between $\ubold_{BC}$ and $\ubold_{\alpha-CT}$ relies in the constant which is subtracted to $\xbold^{\alpha}$, namely $\1bold_D$ in the case of the Box-Cox and the $\alpha$-mean $m_\alpha(\xbold)\1bold_D$ for the $\alpha$-CT. However, when multiplying $\ubold_{BC}$ and $\ubold_{\alpha-CT}$ by the Helmert matrix $\Hbold_D$, the transformed composition is projected on the hyper-plane orthogonal to vectors colinear to $\1bold_D$. Hence, the Isometric coordinates $\Hbold_D \ubold_{BC}$ and $\Hbold_D \ubold_{\alpha-CT}$ coincide.

In Figure \ref{fig:difference_alfa_it}, we plot the codomains of the $\alpha$-IT (with a `shield' shape) along with the codomains of the $\alpha$-transformation (with a triangular shape), when these transformations are applied on the 3-dimensional simplex $\bbS^3$ for different values of $\alpha$. When $\alpha=0$, the codomains of the two transformations coincide with $\bbR^2$. For the other values of $\alpha$, one can notice that the codomains of the $\alpha$-IT appear `smaller in size' than those of the $\alpha$-transformation. This just reflects on a different scale for the transformed data if using the  $\alpha$-IT or the $\alpha$-transformation, which has no relevant impact on their usability in practice.

\begin{figure}[!htp]
	\centering
	\includegraphics[width=0.48\linewidth]{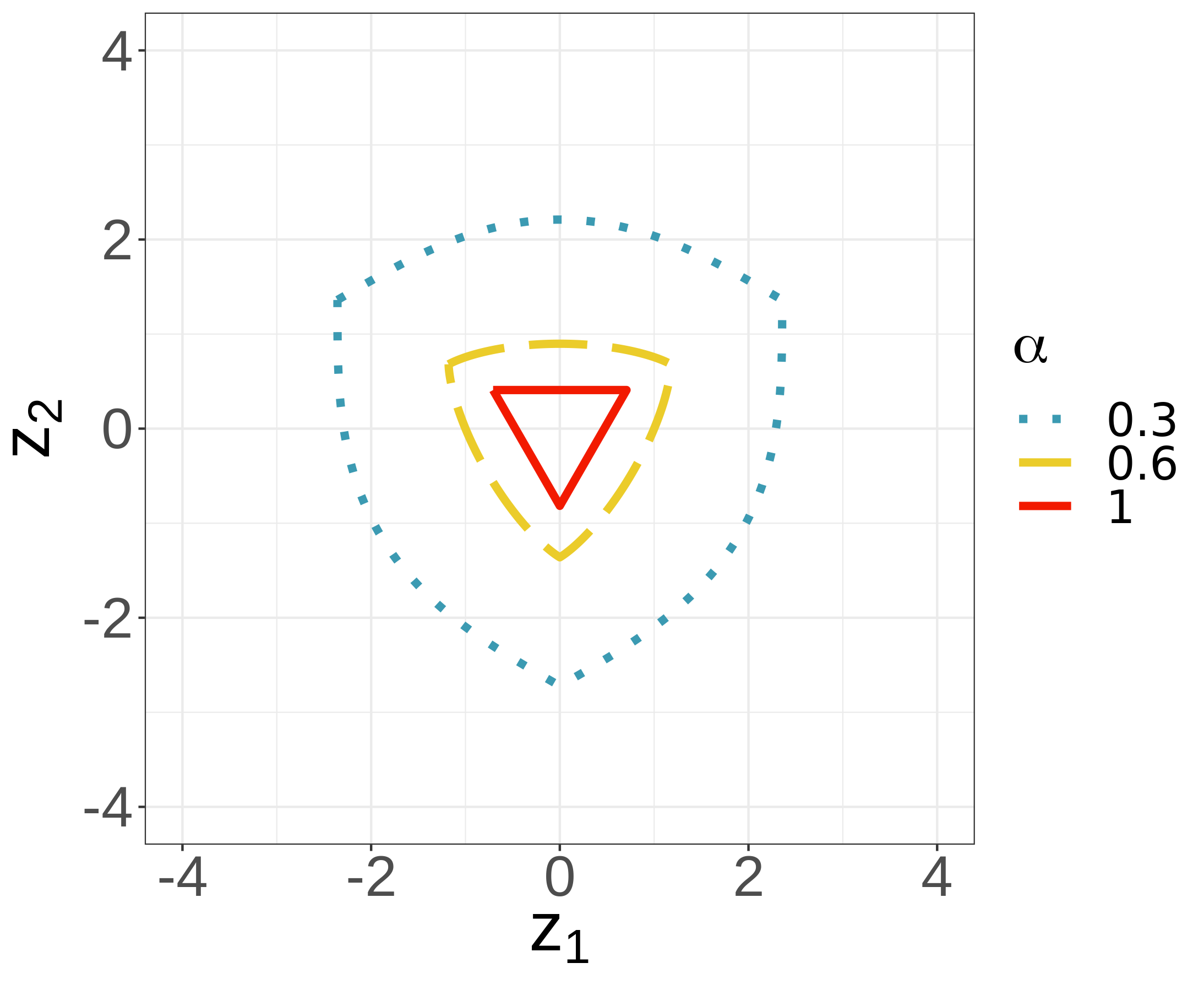}
	\includegraphics[width=0.48\linewidth]{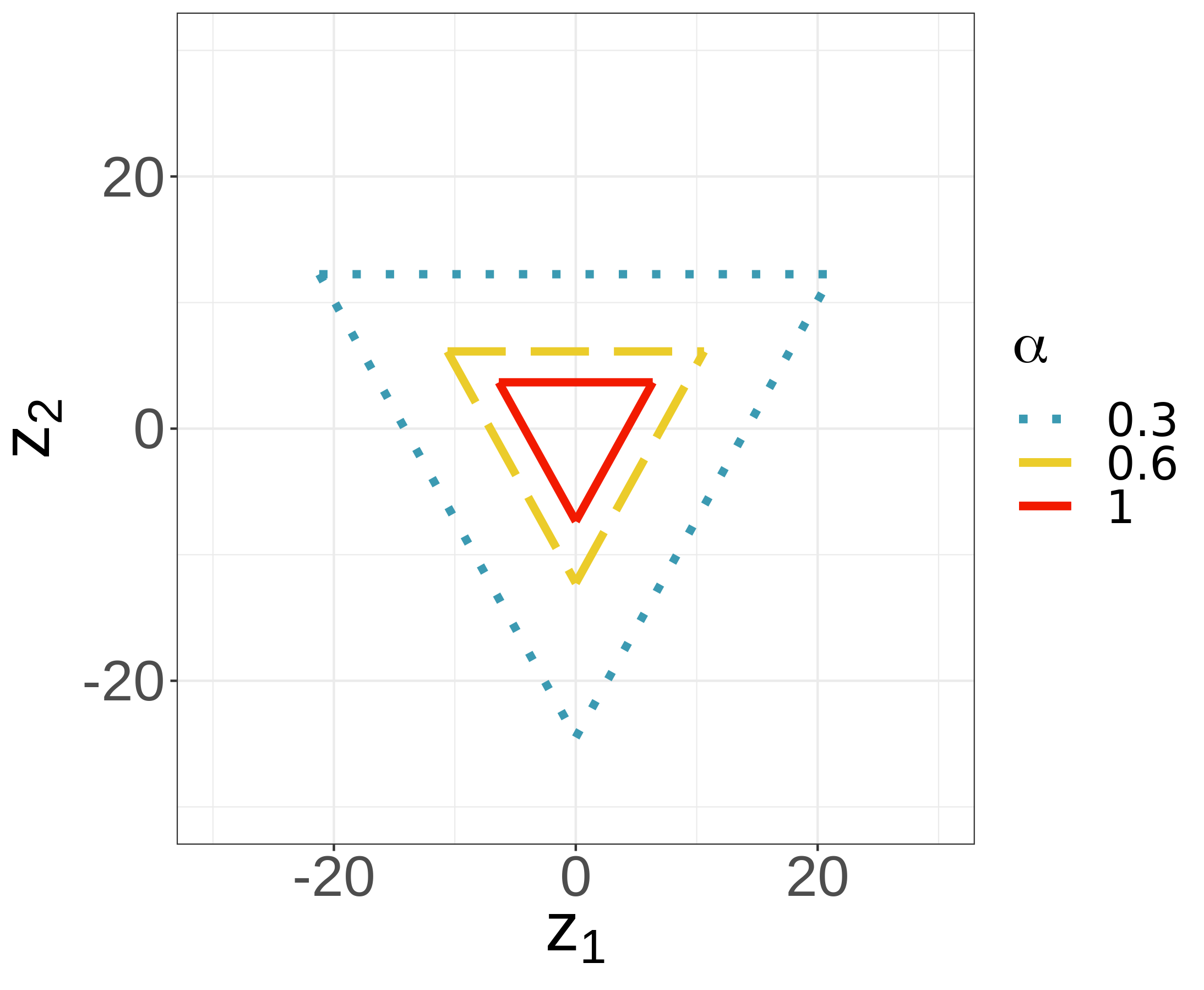}
	\caption{Codomains of the $\alpha$-IT (left) and the $\alpha$-transformation (right) for different values of $\alpha$.}
	\label{fig:difference_alfa_it}
\end{figure}

\subsection{The $\alpha$-IT-metric}

The Isometric $\alpha$-transformation of Equation \eqref{eq:alfa_ct} leads to a natural distance  between observations $\xbold, \ybold \in \bbS_0^D$, denoted by $d_{\alpha-IT}(\xbold, \ybold)$, and hereafter referred to as the \textit{$\alpha$-IT-metric}. For all values of $\alpha > 0$, $d_{\alpha-IT}(\xbold, \ybold)$ is defined as 
\begin{equation}
d_{\alpha-IT}(\xbold, \ybold) = \norm{A_{\alpha-IT}(\xbold) - A_{\alpha-IT}(\ybold)},
\label{eq:alfaitdist}
\end{equation}
where $\norm{\cdot}$ denotes the usual Euclidean distance (in $\bbR^{D-1}$).
When $\alpha\to 0$ we retrieve the Aitchison’s distance \eqref{eq:aitchison_dist}, while when $\alpha=1$ we retrieve the Euclidean distance $d_{1}(\xbold, \ybold) = [ \sum_{i=1}^D \left(x_i-1/D -y_i + 1/D\right)^2]^{1/2} = [\sum_{i=1}^D \left(x_i-y_i\right)^2]^{1/2}$. 

Note that one can define measures of central tendency based on the $\alpha$-IT-metric, using the concept of Fr\'echet mean -- similarly as in \citet{pawlowsky-01} and \citet{tsagris-alpha}. Indeed, a measure of central tendency for a compositional dataset $\xbold_1,...,\xbold_n \in \bbS_0^D$ can be defined as
\begin{equation}
\mathbf{m} = \arg\min_{\boldsymbol{\mu} \in \mathbb{S}^D_0} \left\{ \frac{1}{n} \sum_{i=1}^n d_{\alpha-IT}(\xbold_i, \boldsymbol{\mu}) \right\}.
\end{equation}
This is equivalent to back-transforming the sample mean of the transformed data, that is
\begin{equation}
\mathbf{m} = A_{\alpha-IT}^{-1} \left(\frac{1}{n}\sum_{i=1}^n A_{\alpha-IT}(\xbold_i)\right).
\end{equation}

\subsection{Spatial structure of the Centered and Isometric $\alpha$-transformations}\label{sec:spatial_alpha_IT}

As already mentioned, the classical way of performing a geostatistical analysis is to work on the coordinates of the transformation \citep{principle-coord, pawlo-egozcue}. As we have seen in Section \ref{sec:geostat}, we can directly relate the cross-covariance matrix of the ILR transformed data with the variation matrix of the composition. This section aims at deriving analogous results for the Isometric $\alpha$-transformation.

\begin{defi}\label{def:BC_covariance}
Consider a second-order LR stationary compositional random field $\Xbold(s)$. 
	The $\alpha$-IT \textit{covariance matrix} is the $(D-1,D-1)$ matrix $\Phibold_\alpha(h) = [\phi_{\alpha,ij}(h)]_{i,j=1}^{D-1}$, given by 
	\begin{equation}
	\Phibold_\alpha (h) = \hbox{\rm Cov}\left[\Zbold_{\alpha-IT}(s), \Zbold_{\alpha-IT}(s+h)\right] = \Hbold_D \Xibold_\alpha(h)  \Hbold_D^\top,
	\label{eq:alfa-it-cov-matrix}
	\end{equation}
	where $\Zbold_{\alpha-IT}(s)$ is the Isometric $\alpha$-transformation of $\Xbold(s)$  and $\Xibold_\alpha(h)=[\xi_{\alpha,ij}(h)]_{i,j=1}^{D}$ is the $\alpha$\textit{-CT covariance matrix} whose elements are
	\begin{equation}
	\xi_{\alpha,ij}(h) = \Cov \left[u_{i,\alpha-CT}(s), u_{j,\alpha-CT}(s+h)\right].
	\label{eq:alfa-ct-cov-matrix}
	\end{equation}
	In Equation \eqref{eq:alfa-ct-cov-matrix}, $u_{i,\alpha-CT}$ and $u_{j,\alpha-CT}$ are two components of the vector of the Centered $\alpha$-transformation of $\Xbold(s)$ (Equation \eqref{eq:alfa_ct}).
\end{defi}

\begin{prop}
The CLR covariance matrix $\Xibold(h)=[\xi_{ij}(h)]_{i,j=1}^{D}$ can be retrieved by taking the limit for $\alpha \to 0$ of the $\alpha$-CT covariance matrix $\Xibold_\alpha(h)$ of Equation \eqref{eq:alfa-ct-cov-matrix}.
The ILR covariance matrix $\Phibold(h)=[\phi_{ij}(h)]_{i,j=1}^{D-1}$ can be retrieved by taking the limit for $\alpha \to 0$ of the $\alpha$-IT covariance matrix $\Phibold_\alpha(h)$ of Equation \eqref{eq:alfa-it-cov-matrix}.
\label{prop:it_ct_covariance}
\end{prop}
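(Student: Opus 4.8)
The plan is to transfer the pointwise limit of Proposition~1 to the level of (cross-)covariances; the only genuine issue is interchanging $\lim_{\alpha\to 0}$ with the expectation defining $\Cov$. Since $\Xbold(s)$ is second-order LR stationary, its components are a.s.\ strictly positive, so Proposition~1 applies pathwise: for a.e.\ realization and every $s$, $u_{i,\alpha-CT}(\Xbold(s))\to\clr(\Xbold(s))_i$ and $\Zbold_{\alpha-IT}(s)=\alpha^{-1}\Hbold_D\Xbold(s)^\alpha\to\ilr(\Xbold(s))$ componentwise as $\alpha\to 0$. Note moreover that $\clr(\Xbold(s))_i=\tfrac1D\sum_j\ln(X_i(s)/X_j(s))$ is a fixed linear combination of pairwise log-ratios, hence square-integrable under the stationarity assumption, and the same holds for $\ilr(\Xbold(s))=\Hbold_D\clr(\Xbold(s))$, so that the target matrices $\Xibold(h)$ and $\Phibold(h)$ are well defined.

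Next I would upgrade the pathwise convergence to convergence in $L^2$. Writing $u_{i,\alpha-CT}=\tfrac{1}{\alpha D}\sum_j X_j^\alpha\bigl((X_i/X_j)^\alpha-1\bigr)$, using the closure bound $0\le X_j(s)\le 1$ (so $X_j(s)^\alpha\le 1$ for $\alpha\ge 0$), and the elementary estimate $\alpha^{-1}|e^{\alpha t}-1|\le|t|\,e^{\alpha_0|t|}$ valid for $0<\alpha\le\alpha_0$, one obtains
\[
\bigl|u_{i,\alpha-CT}(\Xbold(s))\bigr|\ \le\ \frac1D\sum_{j=1}^D\Bigl|\ln\tfrac{X_i(s)}{X_j(s)}\Bigr|\exp\!\Bigl(\alpha_0\Bigl|\ln\tfrac{X_i(s)}{X_j(s)}\Bigr|\Bigr),
\]
uniformly in $\alpha\in(0,\alpha_0]$, and a corresponding bound for the components of $\Zbold_{\alpha-IT}(s)$ after left-multiplication by the fixed matrix $\Hbold_D$. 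Under the mild assumption that these dominating random variables are square-integrable --- i.e.\ that the log-ratios admit a finite exponential moment of some order $2\alpha_0>0$, a natural strengthening of second-order LR stationarity --- the dominated convergence theorem gives $u_{i,\alpha-CT}(\Xbold(s))\to\clr(\Xbold(s))_i$ and $\Zbold_{\alpha-IT}(s)\to\ilr(\Xbold(s))$ in $L^2$, for every $s$.

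I would then conclude by continuity of the covariance form: whenever $U_\alpha\to U$ and $V_\alpha\to V$ in $L^2$, the decomposition $\Cov(U_\alpha,V_\alpha)-\Cov(U,V)=\Cov(U_\alpha-U,V_\alpha)+\Cov(U,V_\alpha-V)$ together with Cauchy--Schwarz gives $\Cov(U_\alpha,V_\alpha)\to\Cov(U,V)$. Taking $U_\alpha=u_{i,\alpha-CT}(\Xbold(s))$ and $V_\alpha=u_{j,\alpha-CT}(\Xbold(s+h))$ yields $\xi_{\alpha,ij}(h)\to\xi_{ij}(h)$ entrywise, i.e.\ $\Xibold_\alpha(h)\to\Xibold(h)$; taking $U_\alpha,V_\alpha$ to be components of $\Zbold_{\alpha-IT}(s),\Zbold_{\alpha-IT}(s+h)$ gives $\Phibold_\alpha(h)\to\Phibold(h)$. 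Equivalently, once $\Xibold_\alpha(h)\to\Xibold(h)$ is known one may simply invoke continuity of matrix multiplication together with the CLR--ILR covariance identity recalled in Section~\ref{sec:geostat}: $\Phibold_\alpha(h)=\Hbold_D\Xibold_\alpha(h)\Hbold_D^\top\to\Hbold_D\Xibold(h)\Hbold_D^\top=\Phibold(h)$.

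The main obstacle is precisely the $L^2$ step: Proposition~1 on its own yields only a pathwise limit, and pushing it through $\Cov$ requires a domination that is uniform in $\alpha$ near $0$. This is where the closure bound $X_j^\alpha\le 1$, the exponential control of $x^\alpha$ as $\alpha\to 0$, and a finite-exponential-moment hypothesis on the log-ratios come into play; everything else is routine.
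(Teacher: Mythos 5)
Your proof is correct and follows the same overall route as the paper's: reduce the ILR/$\alpha$-IT statement to the CLR/$\alpha$-CT one via $\Phibold_\alpha(h)=\Hbold_D\Xibold_\alpha(h)\Hbold_D^\top$ and continuity of matrix multiplication, then pass the pointwise limit of Proposition~1 through the covariance. The difference is in how the one nontrivial analytic step is handled. The paper's proof simply writes $\lim_{\alpha\to 0}\xi_{\alpha,ij}(h)=\Cov[\lim_{\alpha\to 0}u_{i,\alpha-CT}(s),\lim_{\alpha\to 0}u_{j,\alpha-CT}(s+h)]$ with no justification for interchanging the limit and the expectation; you correctly identify this as the real content of the proposition and supply the missing argument, via the uniform-in-$\alpha$ domination $|u_{i,\alpha-CT}|\le D^{-1}\sum_j|\ln(X_i/X_j)|\exp(\alpha_0|\ln(X_i/X_j)|)$, dominated convergence in $L^2$, and bilinearity plus Cauchy--Schwarz for the continuity of $\Cov$. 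The price is an extra hypothesis --- square-integrability of the dominating variables, i.e.\ a finite exponential moment of the log-ratios --- which is \emph{not} implied by the paper's second-order LR stationarity (that only gives second moments of log-ratios), so strictly speaking you prove the statement under a mildly strengthened assumption; you are transparent about this. Your version is the more rigorous one: the paper's argument buys brevity at the cost of leaving the limit--covariance interchange unjustified, while yours makes explicit exactly what additional integrability is needed for the proposition to hold as stated.
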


\begin{proof}
$\Phibold(h)$ (respectively $\Phibold_\alpha(h)$) is the multiplication of the Helmert matrix $\Hbold_D$ and its transpose by the $(D,D)$ matrix $\Xibold(h)$ (respectively $\Xibold_\alpha(h)$). Hence, it is sufficient to verify that the formula of the CLR covariance matrix $\Xibold(h)$ is retrieved by taking the limit of $\Xibold_\alpha(h)$ for $\alpha \to 0$:
\begin{equation*}
\begin{aligned}
\lim_{\alpha \to 0} \xi_{\alpha,ij}(h) &=  \Cov \left[\lim_{\alpha \to 0} u_{i,\alpha-CT}(s), \lim_{\alpha \to 0} u_{j,\alpha-CT}(s+h)\right]\\
&=\Cov \left[\ln X_i(s) - \ln g(\Xbold(s)), \ln X_j(s+h) - \ln g(\Xbold(s+h))\right]\\
&= \Cov \left[\ln \frac{X_i(s)}{g(\Xbold(s))}, \ln \frac{X_j(s+h)}{g(\Xbold(s+h))}\right] = \xi_{ij}(h),
\end{aligned}
\end{equation*}
where we use the fact that $\lim_{\alpha\to 0} h_\alpha(\xbold) = \ln h_0(\xbold) = \ln g(\xbold)$.
Finally, 
\begin{equation*}
\lim_{\alpha \to 0} \phi_{\alpha,ij}(h) = \phi_{ij}(h).
\end{equation*}
\end{proof}

Proposition \ref{prop:it_ct_covariance} shows that $\Xibold_\alpha(h)$ is a generalization of the CLR covariance matrix. An analogous result holds for the $\alpha$-IT covariance matrix $\Phibold_\alpha(h)$ with respect to the ILR covariance matrix $\Phibold(h)$.

When analyzing Equation \eqref{eq:alfa-ct-cov-matrix}, the difference between $u_{i,CT}(s) - u_{j,CT}(s+h)$ that is involved in the computation of the covariance matrix can be interpreted as a contrast; as $\alpha$ tends to 0, this contrast tends to a contrast of logs, i.e., a log-ratio. This observation echoes with the need for working on ratios put forward by \citet{aitchison} and by the compositional community in \citet{pawlo-olea}, \citet{tolosana-boogart} and \citet{tolosana-missing}. One could argue that, more generally, one needs to work on $\alpha$-contrasts as those reported in Equation \eqref{eq:alfa-ct-cov-matrix}. In this way, not only the Aitchison geometry and in particular the ILR transformation leads to a well-defined spatial structure, but also the geometry derived from the Isometric $\alpha$-transformation.

\subsection{Geostatistics with the Isometric $\alpha$-transformation}
\label{sec:geo_alphait}
In the spatial analysis of compositional data, we follow the approach of the compositional community which makes use of the `Principle of working in coordinates' \citep{principle-coord}, but we ground the analysis upon the $\alpha$-IT. Having transformed the data, we apply prediction methods as cokriging within the Euclidean codomain of the $\alpha$-IT. The results  obtained in $\bbR^{D-1}$ 
are finally back-transformed to the simplex through numerical inversion of the $\alpha$-IT (see Section \ref{sec:back-transform}).
We discuss on the choice of the value of $\alpha$ to use in the $\alpha$-IT in Section \ref{sec:ML}.

An important remark is that the restriction of the codomain of the $\alpha$-IT with $\alpha>0$ to a subset of $\bbR^{D-1}$ (see Figure \ref{fig:difference_alfa_it}) does not reflect on serious limitations for its use in practice.
One could argue that sometimes the geostatistical predictions would fall out of the codomain of the $\alpha$-IT, implying the actual impossibility to back-transform the results. Although possible in principle, this would rarely happen in kriging -- and, in fact, it never happened during the numerical experiments carried out in this work. Recall that ordinary kriging is an interpolation technique based on a linear combination of the data with coefficients adding up to 1. When all weights are positive,  the result of kriging is a convex combination of the data -- and it is thus guaranteed to lie within the `shield' domain of the $\alpha$-IT, for all $\alpha$. 
When some weights are negative, kriging could give interpolation results out of the convex hull of the data. This can be seen as a limitation of the approaches based on the $\alpha$-IT (as well as of the $\alpha$-transformation of \citet{tsagris-alpha}), which however can be faced, e.g., by constraining the kriging weights to be positive \citep{cressie}. This point is not investigated further in this paper, as predictions outside the codomain have never been observed.

\subsection{Estimating the parameter $\alpha$}
\label{sec:ML}

Kriging is the Best Linear Unbiased predictor in the $L^2$ sense and it is the overall optimal $L^2$ predictor when the  data being analyzed are Gaussian \citep{cressie}. In order to achieve good prediction performances, the parameter $\alpha$ of the transformation will thus be estimated using maximum likelihood, assuming that the transformed values are independent multivariate Gaussian vectors. We first consider that all compositions have positive parts, i.e., $\xbold_k \in \mathbb{S}^D, k=1,\dots,n$. Compositional vectors with at least one null part will be considered later. Since it is equivalent and more convenient to maximize the log-likelihood, the maximum likelihood estimator is thus  $\hat{\alpha}  =  \arg\max_{\alpha} {\cal L}(\alpha; \xbold_1,\dots,\xbold_n)$, with 
\begin{equation}
 {\cal L}(\alpha; \xbold_1,\dots,\xbold_n)
 =  - \frac{n}{2} \ln |\hat{\boldsymbol{\Sigma}}| - \frac{1}{2} \sum_{k=1}^n (\zbold_k -\hat{\boldsymbol{\mu}})^\top   \hat{\boldsymbol{\Sigma}}^{-1} (\zbold_k -\hat{\boldsymbol{\mu}}) + \sum_{k=1}^n \ln |\boldsymbol{J}(\xbold_k)|,
\label{eq:MLE}
\end{equation}
where $|\boldsymbol{J}(\xbold)|$ is the determinant of the Jacobian of the transformation $A_{\alpha-IT}$ and $\zbold_k = A_{\alpha-IT}(\xbold_k)$ is a $(D-1)$ Gaussian vector vector with expectation $\boldsymbol{\mu}$ and covariance matrix $\boldsymbol{\Sigma}$ -- both being common to all the observations because of stationarity. The Jacobian $\boldsymbol{J}(\xbold)$ is the $(D-1,D-1)$ matrix whose elements are 
\begin{equation}
J_{ij}(\xbold) = \frac{\partial z_i}{\partial x_j} = H_{ij} x_j^{\alpha-1} - H_{iD} (1-x_1 \cdots - x_{D-1})^{\alpha-1} 
= H_{ij} x_j^{\alpha-1} - H_{iD} x_D^{\alpha-1},
\label{eq:jacobian}
\end{equation}
with $1 \leq i,j\leq D-1$. Here, the index $k$ was dropped for convenience of notation.
In \eqref{eq:MLE}, $\hat{\boldsymbol{\mu}}$ and $\hat{\boldsymbol{\Sigma}}$ are the ML estimators of $\boldsymbol{\mu}$ and $\boldsymbol{\Sigma}$ respectively, given the multi-Gaussian vectors $(\zbold_1,\dots,\zbold_n)$.

We now consider the case where some of the components of $\xbold$ are equal to 0, implying that the Jacobian in Equation \eqref{eq:jacobian} is undefined when $\alpha \in [0,1)$. In this case, the $D$ dimensional Gaussian distribution is actually concentrated on a simplex with dimension $D' < D$. The proper way of dealing with this situation is thus to consider that the composition lies in $\mathbb{S}^{D'}$ and that the $\alpha$-IT vector $\zbold'$ is given by $\zbold'=\alpha^{-1}\Hbold_{D'}\xbold'$, where $\xbold'$ is the $D'$ dimensional subvector of $\xbold$ with positive parts only. 

The full log-likelihood to be maximized is then 
\begin{equation}
    {\cal L}_0 = {\cal L} + \sum_{i=1}^D {\cal L}_i + \sum_{i=1}^D \sum_{j=i+1}^D {\cal L}_{i,j} + \cdots.
    \label{eq:likelihood_zeros}
\end{equation}
where ${\cal L}$ is the log-likelihood of the data lying in $\mathbb{S}^D$, given in Equation \eqref{eq:MLE}. ${\cal L}_i$ is the log-likelihood of the $(D-1)$-dimensional data obtained by removing the coordinate $x_i=0$ and such that $x_j>0$, with $j \neq i$.  ${\cal L}_{i,j}$ is the log-likelihood of the $(D-2)$-dimensional data  obtained by removing the coordinates $x_i=x_j=0$ and such that $x_k>0$, with $k \not\in  \{i,j\}$. The sum continues with decreasing dimensions until the lowest possible dimension  $D'=2$, corresponding to edges, with $\mathbb{S}^2 = (0,1)$. If $D=3$, there are 4 terms to be considered in ${\cal L}_0$. When $D=4$, the number of terms increases to 11.  Notice that each log-likelihood in Equation \eqref{eq:likelihood_zeros} necessitates at least $D'$ data to be properly computed since $\hat{\boldsymbol{\Sigma}}$ must be of full rank. If there are less than $D'$ data, the contribution of this sub-simplex is simply ignored.

Note that, if the Gaussian hypothesis on the multivariate random field $\Zbold(s)$ generating the data is in force, the transformed data move away from Gaussianity as $|\alpha-\hat{\alpha}|$ increases, making kriging sub-optimal, in the sense that there exists a non-linear predictor improving on kriging in the $L^2$ sense. In this sense, $\hat \alpha$ is expected to be the value of $\alpha$ optimizing the performance of kriging.

\section{A simulation study}
\label{sec:simulation_study}

The aim of this section is to analyze the Isometric $\alpha$-transformation introduced in Section \ref{sec:alpha_it}, by applying it to simulated spatial compositional data. Here, we will evaluate the performances of this approach in comparison to the classical log-ratio transformations introduced by \citet{aitchison} and \citet{ilr-egozcue}.

\subsection{Simulation process}

We focus here on spatial compositional data in $\bbS^D$, $D=3$, but the same approach could be applied to any dimension $D \in \mathbb{N}$, up to some minor modifications. The spatial domain $\mathcal{D}$ is the square $[0,10]\times[0,10]$ filled with $2000$ random uniform locations $\{s_j: j=1, \dots, 2000\}$. At these locations, bivariate Gaussian values $\{\Zbold(s_j)\}$ are simulated from a parsimonious bivariate Whittle-Matérn Model \citep{gneiting} with covariance $C_{ij}(h)= c_{ij}W_{\nu_{ij}}(h/s_{ij})$. Each function $W_{\nu_{ij}}$ is given by
\begin{equation}
    W_{\nu_{ij}}(r)=\frac{2^{1-\nu_{ij}}}{\Gamma(\nu_{ij})}r^{\nu_{ij}} K_{\nu_{ij}} (r),
\end{equation}
where $\nu_{ij}>0$ and $K_\nu$ is the modified Bessel function of second kind. $B=100$ different realizations from this bivariate Whittle-Matérn model are drawn using the {\tt R} package {\tt RandomFields} with the following parameters: $c_{ii} = 1$,  $c_{ij, i \neq j} = 0.8$, $\nu_{ii} = \nu_{ij, i\neq j} = 0.5$, $s_{ii} = s_{ij, i\neq j} = 1$.

Using shifting and scaling of the bivariate data
according to $\Zbold_{new}(s_j) = \sigma(\Zbold(s_j) - \tilde{\zbold}), \forall j$, where $\sigma$ and $\tilde{\zbold}$ are a scaling and shifting parameter respectively, various scenarios are built from these realizations. The values $\sigma$ and $\tilde{\zbold}$ are chosen such that for a given $\alpha_0$ (hereafter set to $\alpha_0=0.2$ or $\alpha_0=0.6$) the simulated sets lie within the shield-shaped codomain of the $\alpha$-IT. Compositional data are then computed using the inverse $\alpha$-IT presented in Section \ref{sec:back-transform} for each vector $\Zbold_{new}(s_j)$, with $j=1,\dots,2000$. Thanks to the different scaling and shifting scenarios, different patterns of compositional data are created in the simplex.

Figure \ref{fig:scaled_data} shows the shifted and scaled data located in the center, at the border and in the corner of the shield-shaped codomain, along with the corresponding compositional datasets once the data have been back-transformed to the simplex with $\alpha_0$. The specific values of $\alpha_0$, $\sigma$ and $\tilde{\zbold}$ that will be used throughout this work are given in Table \ref{tab:data_pattern}.

\begin{center}
\begin{tabular}{cccccc}

\hline \hline
Pattern & & $\alpha_0=0$ & $\alpha_0=0.2$ & $\alpha_0=0.6$ & $\alpha_0=1$\\
	\hline
	Center & $\tilde{\zbold}$ & $(0,0)$ & $(0,0)$ & $(0,0)$ & $(0,0)$ \\
	 & $\sigma$ & 1 & $0.50$ & $0.15$ & 0.065 \\
	Border & $\tilde{\zbold}$ & $(-2.3,1)$ & $(-2.3,1)$ & $(-2.3,1)$ &  $(-2.3,1)$ \\
	& $\sigma$ & 1 & $0.50$ & $0.15$  & 0.065\\
	Corner &$\tilde{\zbold}$ & $(4,-3)$ & $(4,-3)$ & $(4,-3)$ & $(4,-3)$ \\
	& $\sigma$ & 1 & $0.38$ & $0.11$ & 0.045\\
	\hline 
\end{tabular}
\captionof{table}{Shifting and scaling parameters used to create patterns of data in the center, at the border and in a corner of the simplex.}
\label{tab:data_pattern}
\end{center}

\begin{figure}[!htp]
	\centering
	\includegraphics[width=0.28\linewidth]{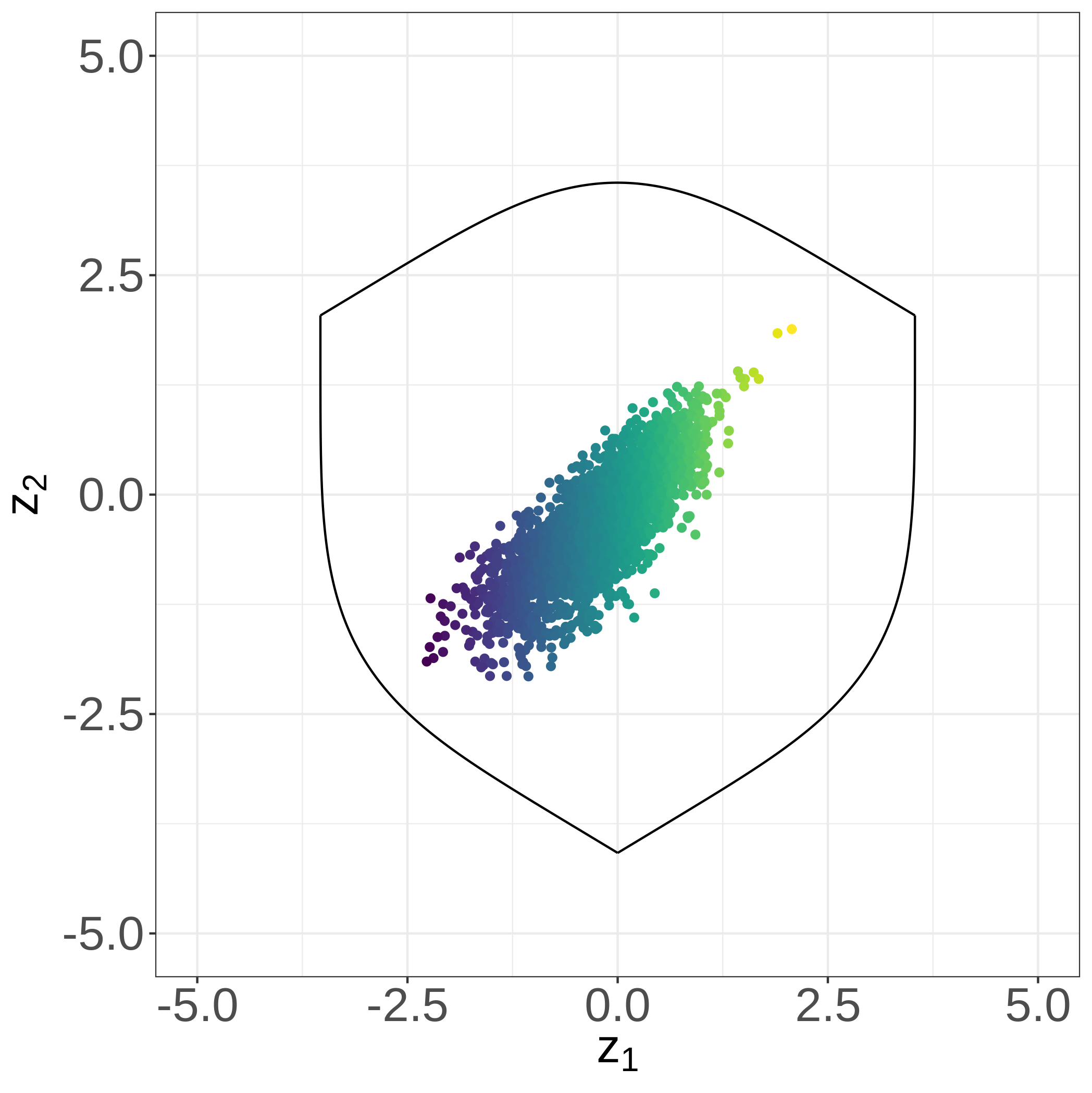}
	\includegraphics[width=0.28\linewidth]{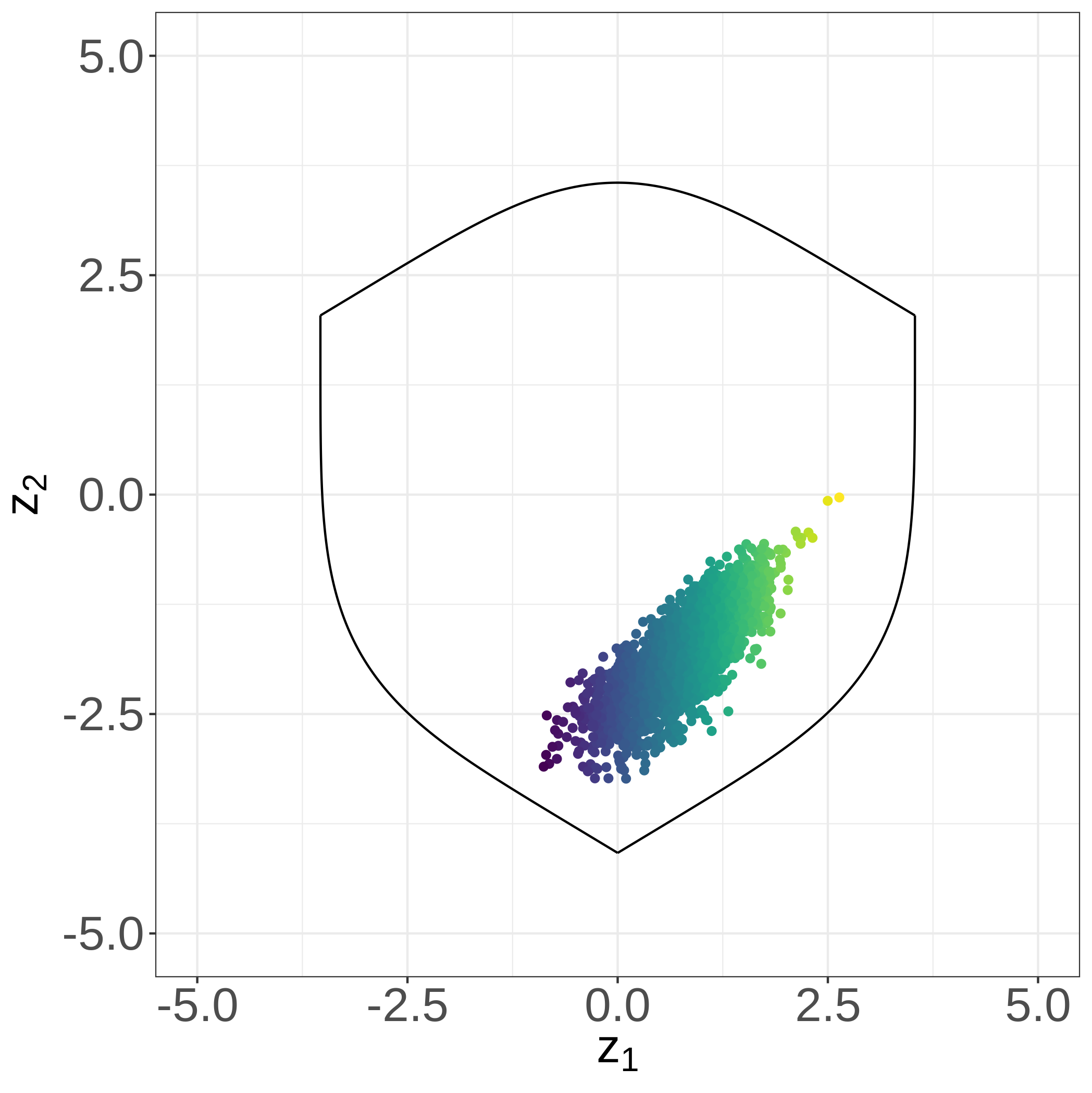}
	\includegraphics[width=0.28\linewidth]{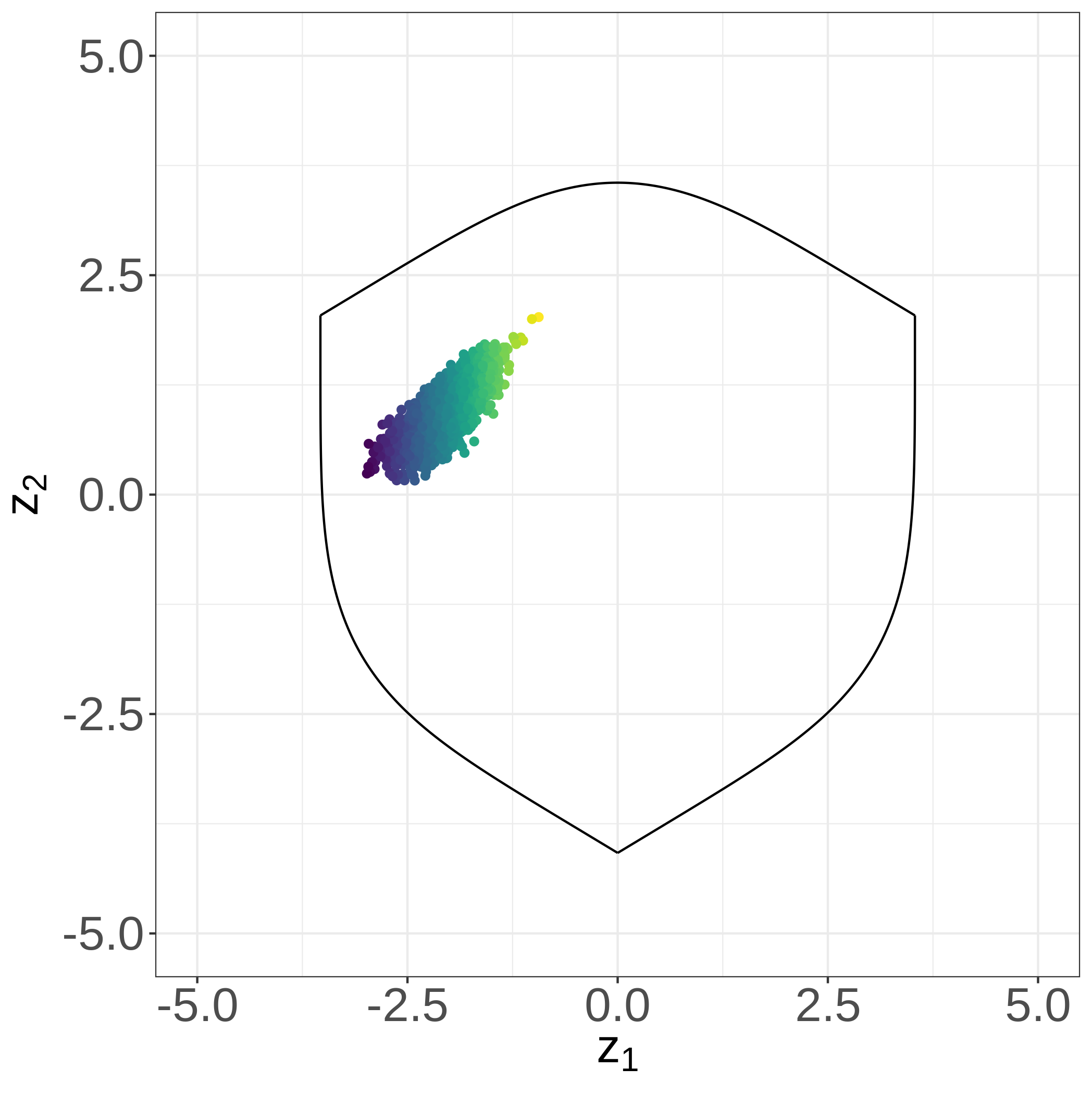}
	\includegraphics[width=0.28\linewidth]{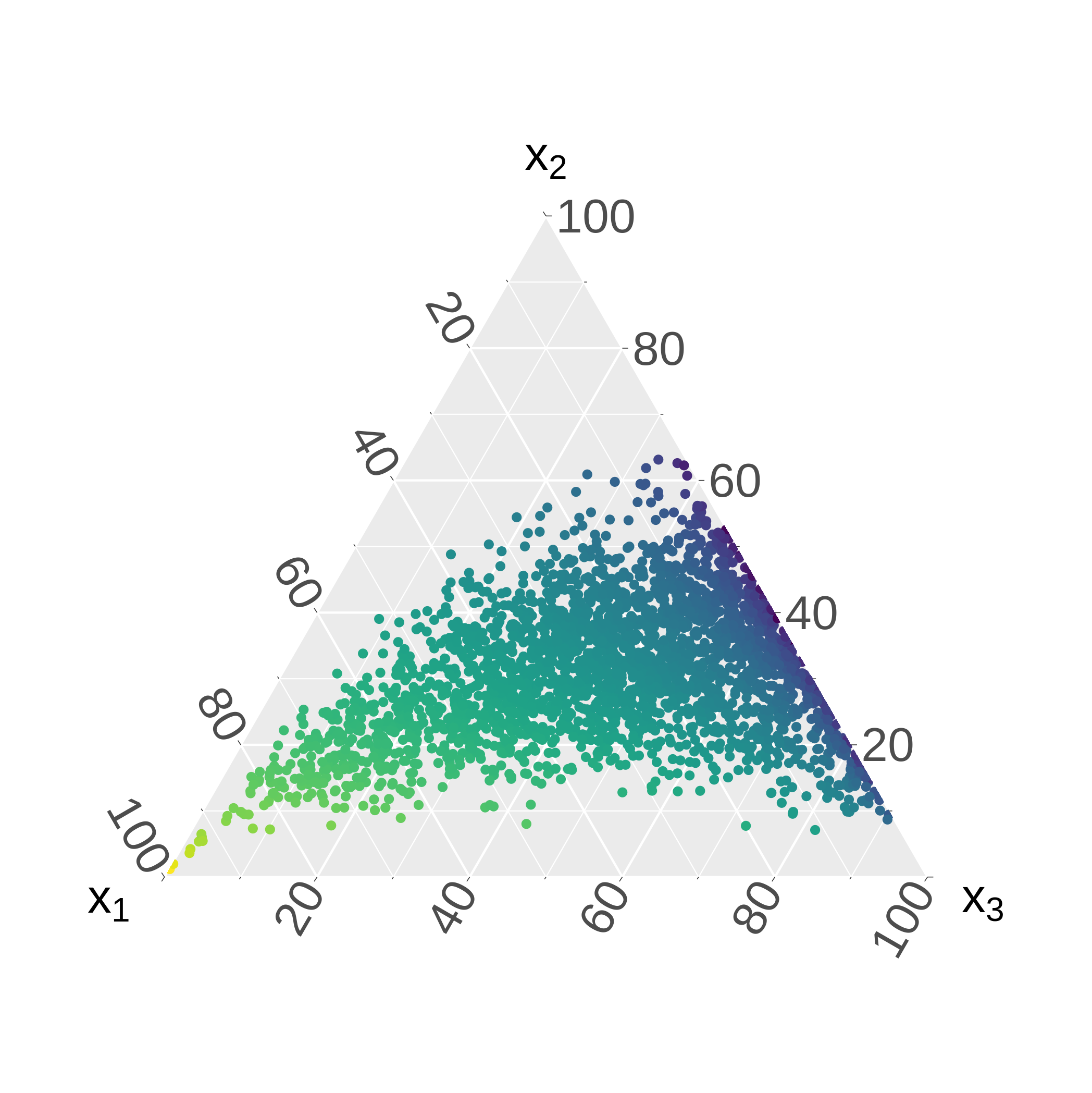}
	\includegraphics[width=0.28\linewidth]{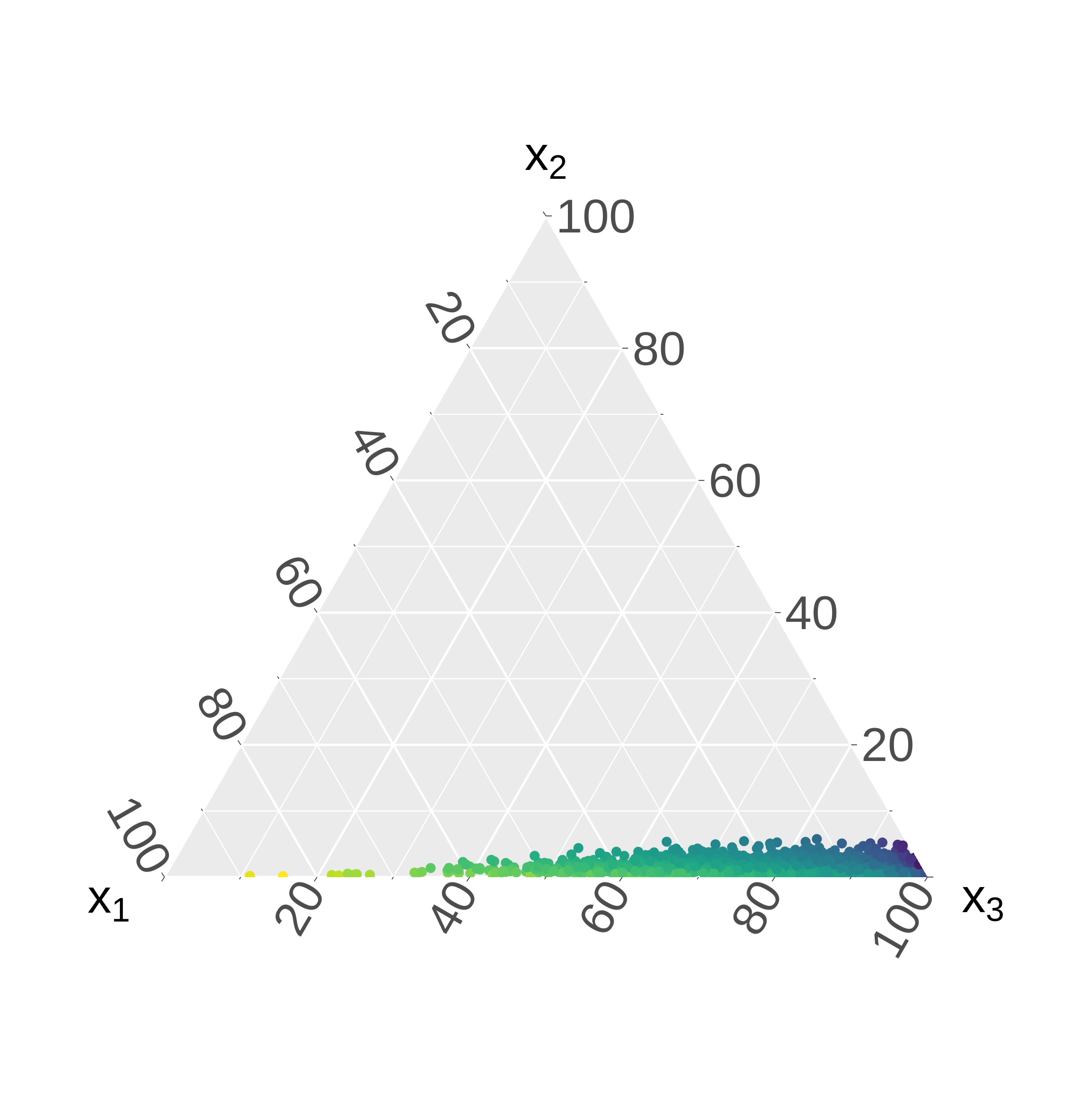}
	\includegraphics[width=0.28\linewidth]{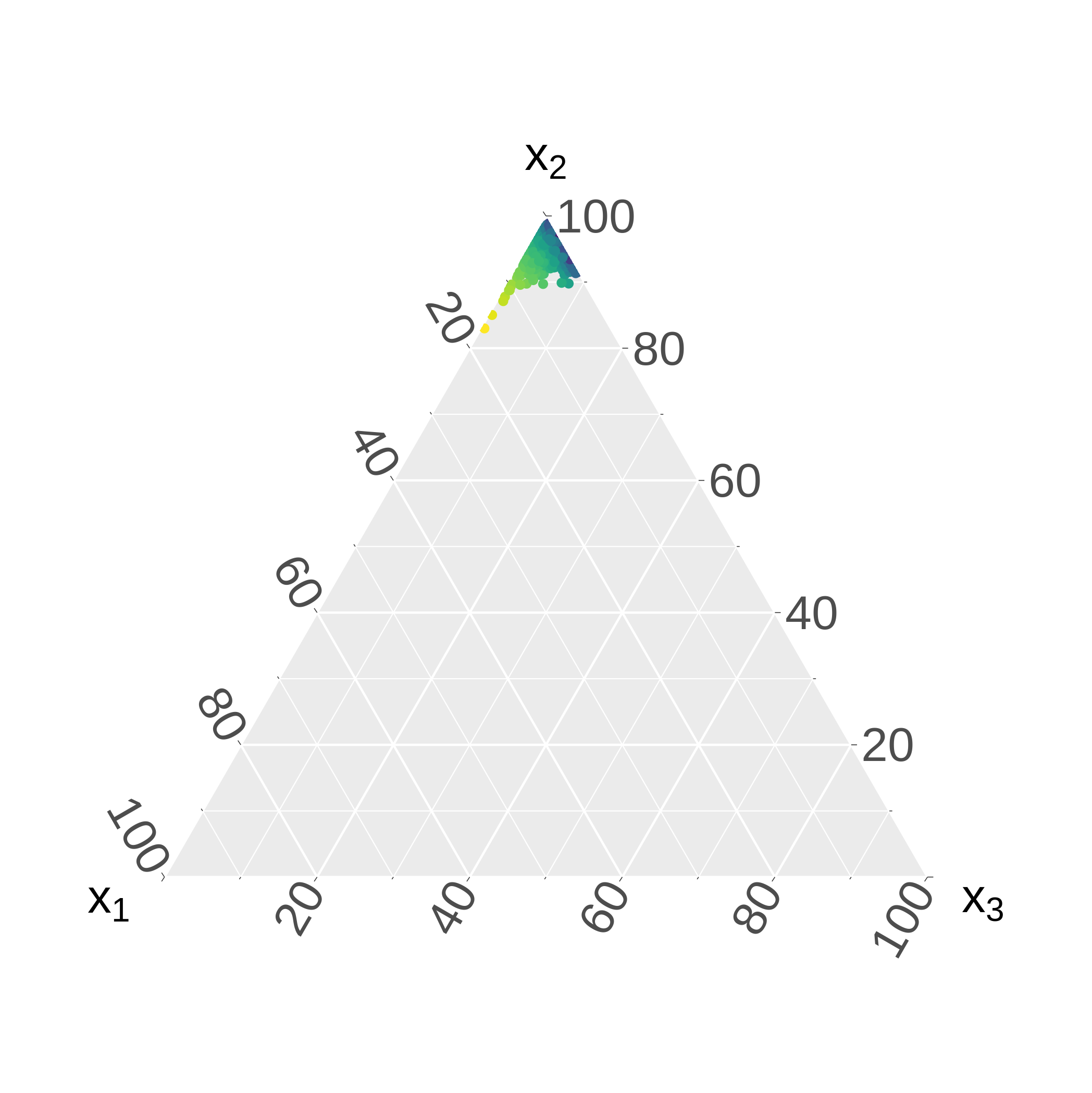}
	\caption{Top: bivariate data after shifting and scaling. The continuous line shows the limit of the codomain for $\alpha_0=0.2$. Bottom: corresponding compositional datasets once transformed with the inverse  $\alpha_0$-IT, as in Equation \ref{eq:numerical_alfainv}. Shades of color follow the $z_1$ axis.}
	\label{fig:scaled_data}
\end{figure}

To evaluate the kriging performances, each of the compositional datasets plotted at the bottom of Figure \ref{fig:scaled_data} is then divided into a training set of $n=200$ data randomly sampled from the realization and a test set made of the remaining $N=1800$ data. A bivariate proportional model is estimated on the training set and kriging is performed to predict at the test locations. Predicted values are then back-transformed into the simplex. 
On each of the $B=100$ realizations, this procedure is applied for a sequence of values of $\alpha$ ranging in $[0,1]$. Note that the locations of the training and test sets are fixed among all the $B$ different realizations.

\subsection{Distances and prediction scores}
\label{sec:distances}

Since a composition in $\bbS^D$ can be seen as a discrete distribution over $\{1,\dots,D\}$, we will use distances between distributions to assess the prediction performances. Note that, even though we here consider scores based on distances between distributions, other measures of discrepancy could be used for the same purpose, such as those based on the concept of divergence, e.g., Kullback-Leibler  divergence \citep{kullback-Leibler} or the $\alpha$-divergence \citep{medak-cressie}. Specifically, we will use the Hellinger distance and the Total Variation distance. These distances allow us to look at the prediction performances of compositional data in an objective way that does not depend upon a specific transformation of the data. These distances are defined between probability measures but for our purposes we here introduce their definition when discrete distributions are concerned. Consider two discrete probability distributions $P = (p_1, \dots, p_D)$ and $Q = (q_1,\dots, q_D)$. The Hellinger distance, $d_H$,  is defined as
\begin{equation}
    d_{H}(P,Q) = \frac{1}{\sqrt{2}} \sqrt{\sum_{i=1}^D(\sqrt{p_i}-\sqrt{q_i})^2},
    \label{eq:hellinger_discrete}
\end{equation}
and the Total Variation distance, $d_{TV}$, is
\begin{equation}
    d_{TV}(P,Q) = \frac{1}{2}\sum_{i=1}^D|p_i-q_i|.
    \label{eq:total_variation_discrete}
\end{equation}
Classical inequalities between $L_1$  and $L_2$ norms provide 
$$d_H^2(P,Q) \leq d_{TV}(P,Q) \leq \sqrt{2} d_H(P,Q).$$

For each simulation $b$, with $b=1,\dots,B$, the prediction scores will be the average of the Hellinger (respectively Total Variation) distances between the kriging prediction $\xbold_j^*$ at location $s_j \in {\cal D}$ and the corresponding true value, $\xbold_j$, with $j=1,\dots,N$. These average distances, denoted $\delta^b_H$ and $\delta^b_{TV}$, are thus 
\begin{equation}
\delta^b_H = \frac{1}{N} \sum_{j=1}^N d_H(\xbold^*_{j,b},   \xbold_{j,b}),\qquad
\delta^b_{TV} = \frac{1}{N}  \sum_{j=1}^N d_{TV}(\xbold^*_{j,b},   \xbold_{j,b}).
\label{eq:mae_metric}
\end{equation}
Finally, we also use the $\alpha$-IT-distance defined in Equation \eqref{eq:alfaitdist} and we compute
\begin{equation}
\delta^{b}_{\alpha} = \left[
\frac{1}{N} \sum_{j=1}^N d_{\alpha-IT}(\xbold^*_{j,b}, \xbold_{j,b})^2 \right]^{1/2}.
\label{eq:rmse_error}
\end{equation}
We recall that the parameter $\alpha$ used to compute this distance is not necessarily equal to the parameter $\alpha_0$.

\subsection{Results}

First, the Maximum Likelihood Estimator (MLE) is computed following the  method detailed in Section \ref{sec:ML}. Table \ref{tab:alfa_profile} reports the averages and standard deviations of the MLE estimates computed from $B=100$ realizations with $n_S = 500$ sample points, in several situations and for 4 different values of $\alpha_0$, including $\alpha_0=0$ which corresponds to an ILR model and $\alpha_0=1$ which corresponds to conditional Gaussian compositions. Overall, the MLE shows good performances. The parameter $\alpha$ tends to be slightly underestimated, at the exception of $\alpha_0=0$. In this case, the true value is at the border of the possible values for $\alpha$, so that estimation errors are one-sided. The standard deviation increases with $\alpha_0$. For $\alpha_0=0.6$ and border data, the average standard deviation have also been computed with increasing numbers of sample points, i.e. $n_S=500, 1000, 2000$. The averages on $B=100$ realizations of independent random vectors were $(0.591,0.592,0.615,0.601)$. The standard deviation of $\hat \alpha$ was equal to $(0.208,0.128,0.082,0.062)$ indicating that it decreases roughly at the $n^{-1/2}$ rate. Similar results have been obtained on further simulations (not reported here). In presence of spatial dependence, the rate of convergence was found to be slower under increasing domain asymptotics. Convergence under infill asymptotics has not been explored, this point being left to further research. An important issue was to check whether MLE is able to estimate correctly $\alpha$ when the data correspond to an ILR transformation ($\alpha=0$) or to a linear transformation ($\alpha=1$). Our findings provide evidence that this is indeed the case.

\begin{table}[hbt]
\begin{center}
\begin{tabular}{ccccc}
\hline \hline
Pattern & $\alpha_0=0$ & $\alpha_0=0.2$ & $\alpha_0=0.6$ & $\alpha_0=1$\\
	\hline
	Center & 0.025 (0.03) & 0.181 (0.09) & 0.563 (0.24) & 0.957 (0.37) \\
	Border & 0.051 (0.07) & 0.190 (0.08) & 0.574 (0.26) & 0.940 (0.41)\\
	Corner & 0.053 (0.07) & 0.178 (0.05) & 0.587 (0.18) & 0.997 (0.36)\\
	\hline 
\end{tabular}
\caption{Average (standard deviation) of $\hat \alpha$ for different configurations of data (as described in Table \ref{tab:data_pattern}). $B=100$ realizations with $n_S=500$ samples.}
\label{tab:alfa_profile}
\end{center}
\end{table}

Starting from the compositional datasets, we then explore the kriging performance for different values of $\alpha$, pretending that the true parameter $\alpha_0$ is not known. Specifically, for a given value $\alpha \in [0,1]$, the $\alpha$-IT is applied to the compositional training set, thereby defining $\zbold_\alpha= (\zbold_{1,\alpha}, \zbold_{2,\alpha})$ in $\bbR^2$. Notice that when $\alpha \neq \alpha_0$, $\zbold_\alpha$ will be different from the original simulated values, which are only retrieved  when $\alpha = \alpha_0$. A Linear Model of Coregionalization (LMC) is then fitted on $\zbold_\alpha$ using the \texttt{R} package \texttt{RGeostats}. Cokriging is then applied at the $N$ locations of the test set. Predicted values are then back-transformed to $\bbS^3$ with the parameter $\alpha$. Finally, the prediction scores, presented in Section \ref{sec:distances}, are computed. This process is repeated for several values of $\alpha$ ranging from 0 to 1. 

We first assess the performance in the (transformed) Euclidean space by computing the average kriging RMSE, which is nothing but the average over the $B$ simulations of the metric $\delta^b_\alpha$  defined in \eqref{eq:rmse_error}. Results are plotted  in Figure \ref{fig:means_rmse} for different values of $\alpha$. Each point represents the average metric $\overline{\delta}_\alpha = \sum_{n=1}^B \delta^b_\alpha / B$, normalized by $\sigma$, which is the empirical standard deviation computed on the transformed vectors $\zbold_\alpha$.  In the left panel, the data are originated from the inverse $\alpha$-IT with $\alpha_0=0.2$, while the right panel represents the data generated with $\alpha_0=0.6$. We note that the minimum of the kriging RMSE is attained around $\alpha=\alpha_0$ in the two cases. As $\alpha$ gets away from $\alpha_0$, the RMSE becomes larger.

\begin{figure}[!htp]
	\centering
	\includegraphics[width=0.48\linewidth]{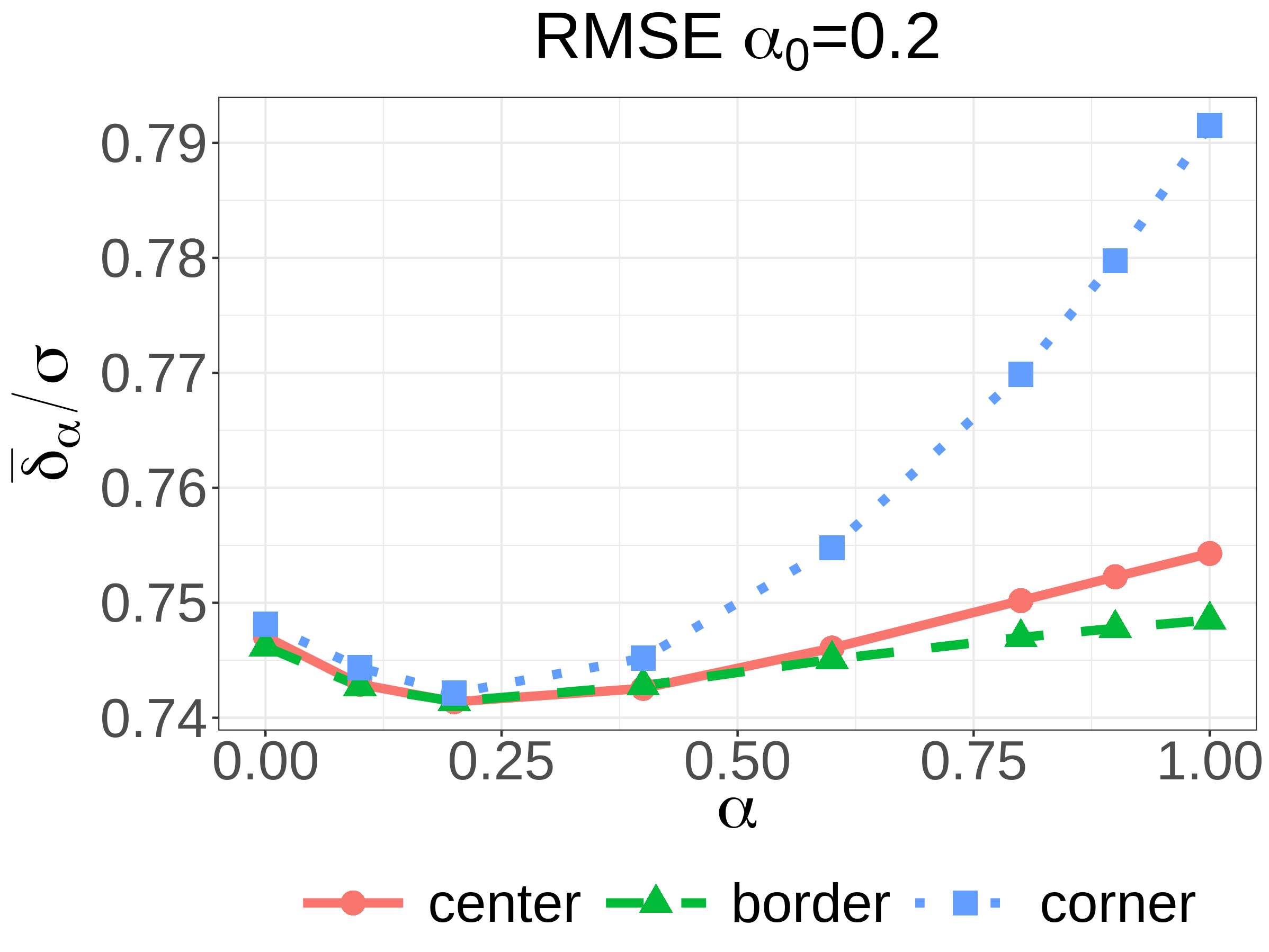}
	\includegraphics[width=0.48\linewidth]{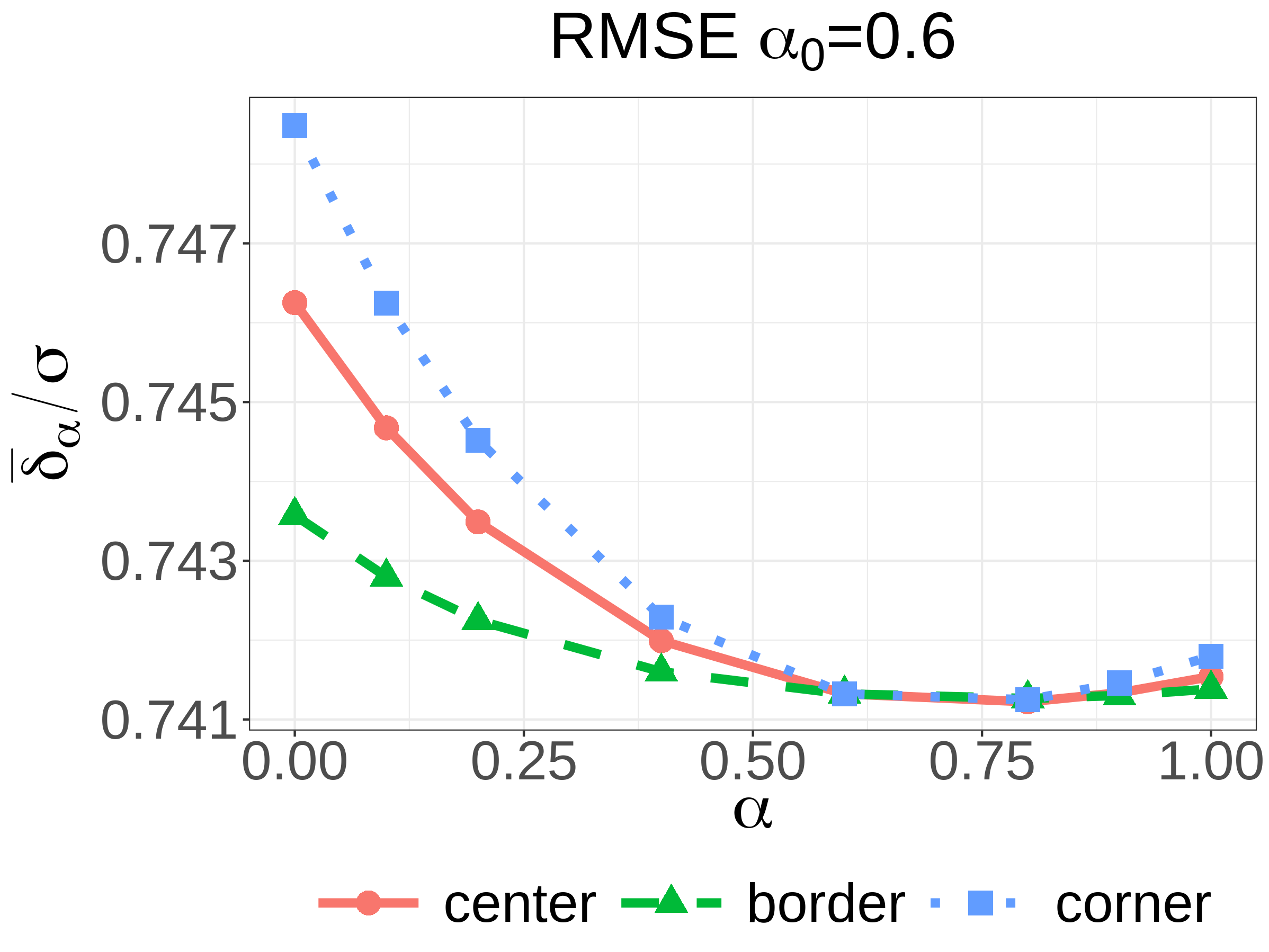}
	\caption{Average kriging RMSE
	$\overline{\delta}_\alpha /\sigma$ over $B=100$ tests for center, border and corner data in $\bbR^2$ before inverse $\alpha$-IT for $\alpha_0=0.2$ (left) and $\alpha_0=0.6$ (right).}
	\label{fig:means_rmse}
\end{figure}

In Figure \ref{fig:means_center} we plot the average of the kriging prediction scores (Total Variation and Hellinger metric defined in Equation \eqref{eq:mae_metric}) computed on the data in the simplex after inverse $\alpha$-IT  over the $B=100$ simulations for the center data with $\alpha_0=0.2$ and $\alpha_0=0.6$. Even though the difference is less marked,  this plot does indicate that a value of $\alpha=\alpha_0$ achieves the best scores. The parameter $\alpha$ resulting in minimal kriging error is thus close to the value $\alpha_0$ used to simulate the data and also close to the ML estimates reported in Table \ref{tab:alfa_profile}. These findings provide good support for the use of ML estimation for estimating the parameter $\alpha$ when analyzing data.

Prediction scores based on Total Variation and Hellinger metrics are less variable with respect to $\alpha$ than the kriging errors computed in the Euclidean space $\bbR^2$.
This is due to the fact that the inverse transform shrinks the data (both observed and kriged) into a smaller area of the simplex causing the absolute error to decrease. 

\begin{figure}[!htp]
	\centering
	\includegraphics[width=0.48\linewidth]{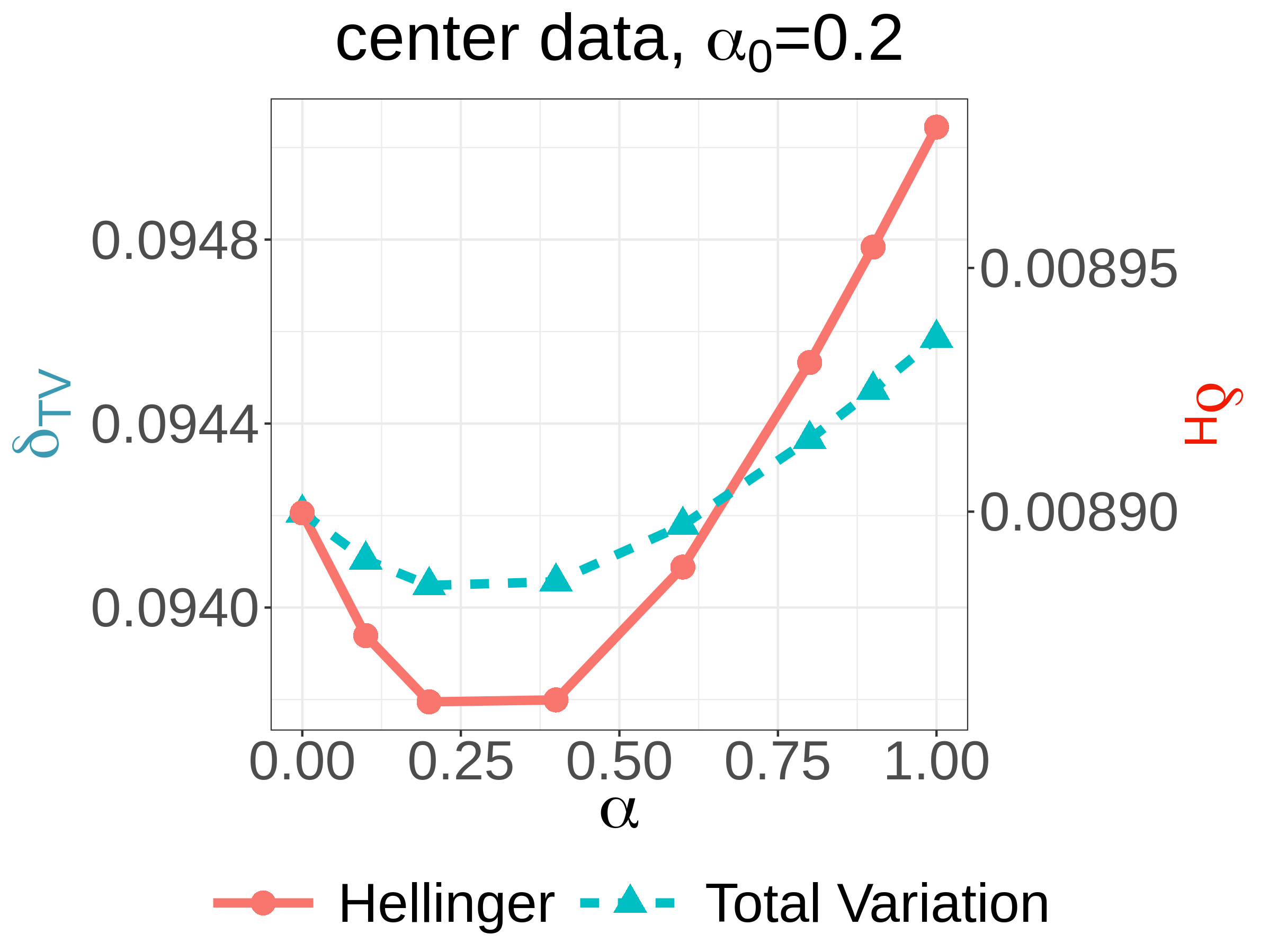}
	\includegraphics[width=0.48\linewidth]{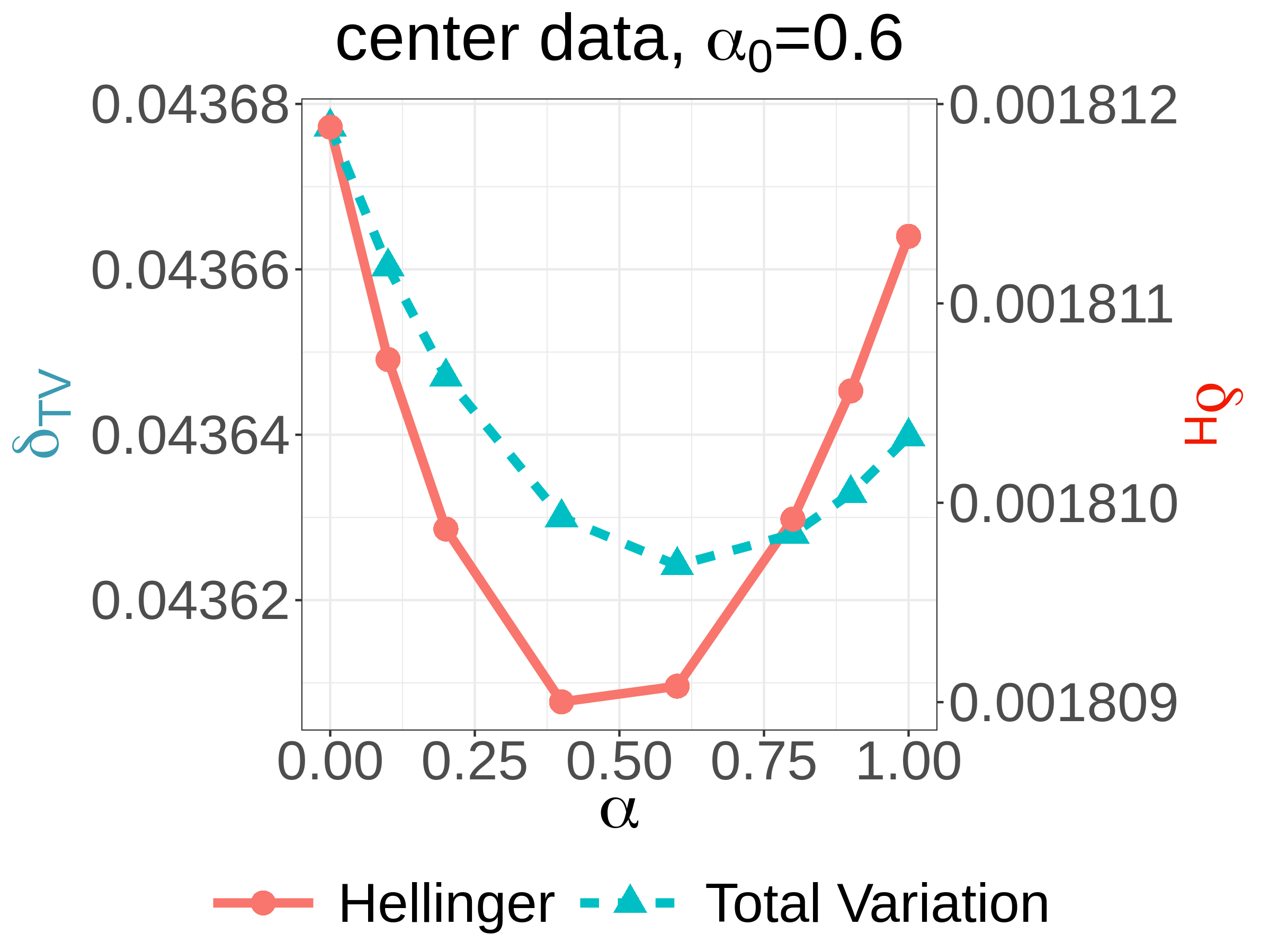}
	\caption{Average error over $B=100$ tests for center data after $\alpha$-IT with different $\alpha$ and $\alpha_0=0.2$ (left) and $\alpha_0=0.6$ (right): Total Variation metric in blue dashed lines and Hellinger metric in red continuous lines.}
	\label{fig:means_center}
\end{figure}

Figure \ref{fig:means_0_1} in the Supplementary Material shows the same result as Figure \ref{fig:means_center} for the extreme cases $\alpha_0=0$ (data simulated from inverse ILR) and $\alpha_0=1$ (data simulated from linear transformation). As expected, the plot shows that the Total Variation metric and the Hellinger metrics recover $\alpha=0$ (respectively $\alpha=1$) not far from the best value of $\alpha$ for kriging.

Figure \ref{fig:means_alphait} shows the normalized averages of the kriging errors of the center data with $\alpha_0=0.2$ and $\alpha_0=0.6$ for different Isometric $\alpha$-metrics. The result shown in the plot is less straightforward to interpret, since the minimal kriging error computed with the normalized $\alpha$-IT metric is attained close to the value $\alpha$ used in the metric instead of being close to $\alpha_0$ as one would expect. This implies that the choice of the metric influences the result of the analysis in a sort of self-confirming way and that it is thus safer to use metrics that are not related to the transformation of the compositional data, as it is the case for the Total Variation or the Hellinger metrics. In particular, this result indicates that one should not use the Aitchison metric, corresponding to $\alpha=0$, in all circumstances, irrespective of data at hand.

\begin{figure}[!htp]
	\centering
	\includegraphics[width=0.48\linewidth]{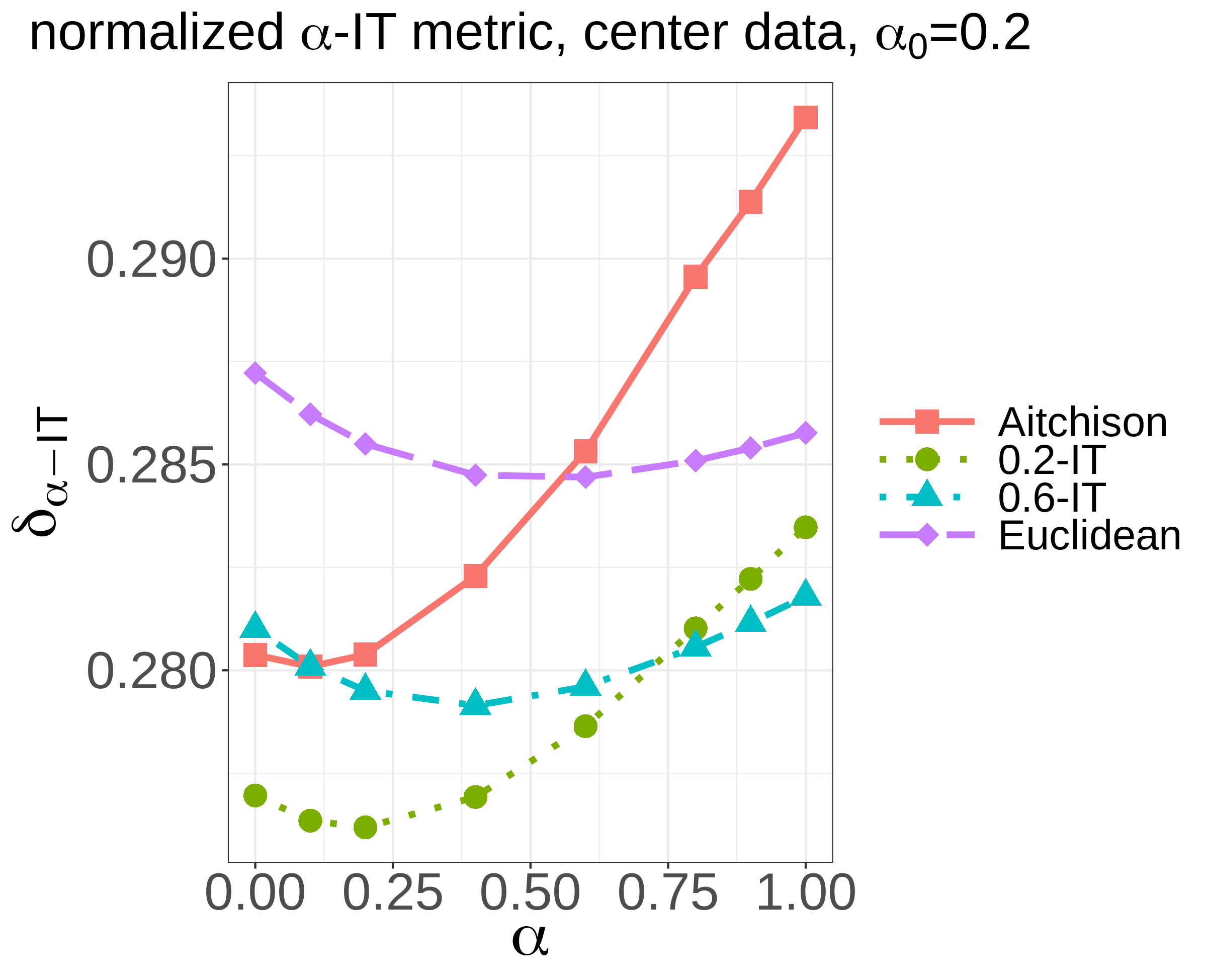}
	\includegraphics[width=0.48\linewidth]{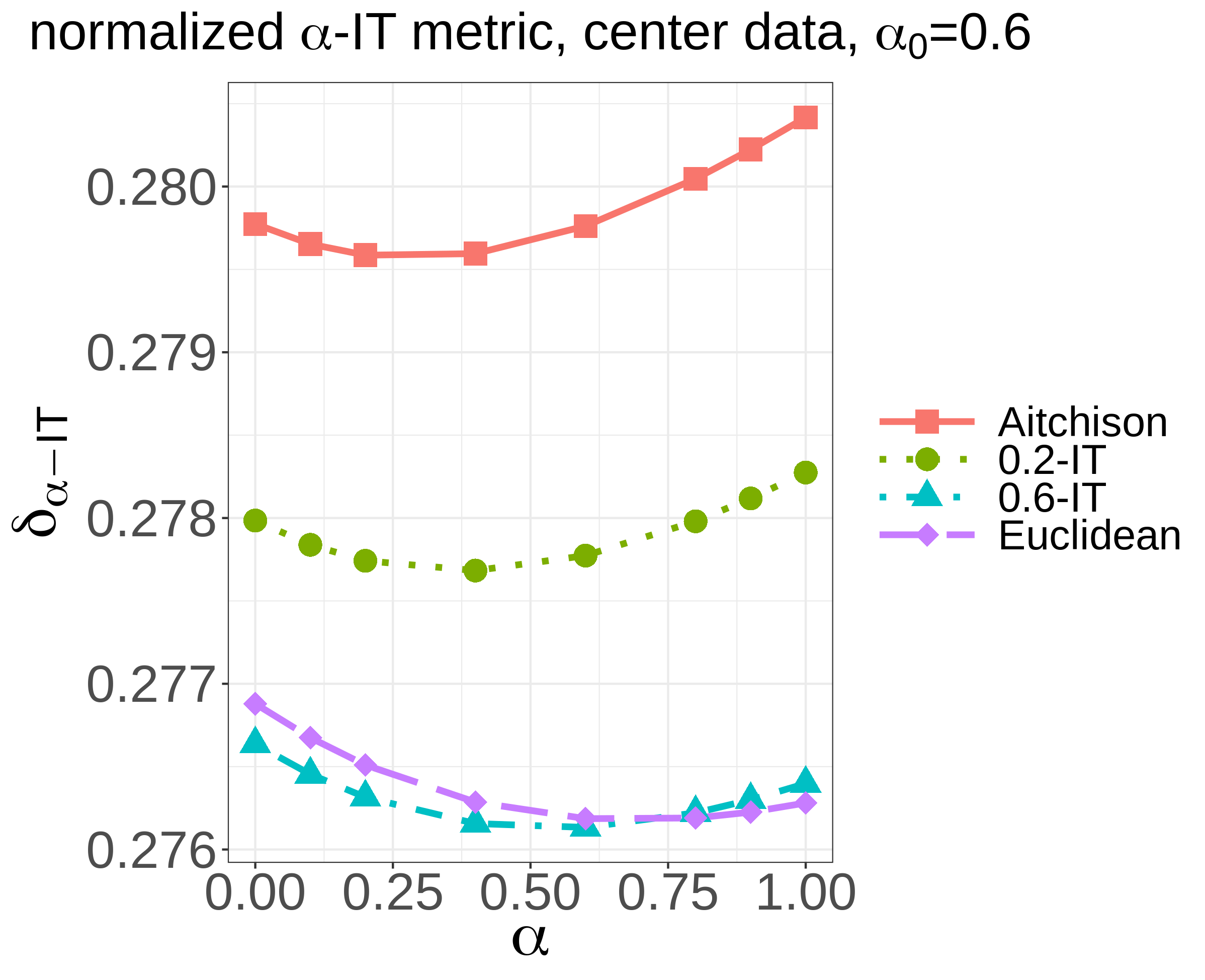}
	\caption{Average of normalized $\alpha$-IT error over $B=100$ tests for center data after $\alpha$-IT with different $\alpha$ and $\alpha_0=0.2$ (left) and $\alpha_0=0.6$ (right).}
	\label{fig:means_alphait}
\end{figure}

Finally, Figure \ref{fig:means_border_corner} in the Supplementary Material shows the same scores of Figure \ref{fig:means_center}, but for border and corner data. We remark that the evolutions of the errors are very similar to the center case. We note that in none of the plots (center, border or corner) the value $\alpha=0$, which corresponds to the classical ILR transform, leads to the lowest error in kriging.


\section{Application to the Copernicus Land Cover Map}
\label{sec:copernicus}

A geostatistical analysis of a spatial compositional dataset is now conducted following the lines presented above: cokriging of the transformed data by $\alpha$-IT followed by the back-transformation into the simplex. Using Monte Carlo cross-validation, we shall assess the performance of our approach for several values of $\alpha$, including the value $\alpha^*$ maximizing the likelihood \eqref{eq:likelihood_zeros}, even though the transformed variables are not necessarily multivariate Gaussian.

\subsection{The dataset}
The Copernicus Dynamic Land Cover Map (CGLS-LC100) \citep{copernicus} delivers a global land cover map at 100 m spatial resolution, offering a primary land cover scheme. The different proportions of covers sum to 1 for every pixel, since each of them represents the proportion of the pixel covered by that given land cover. These proportions can thus be considered as compositional data since they are positive and sum to 1. Scientists in spatial analysis of compositional data have been using land cover datasets to apply prediction techniques; recent works can be found in \citet{lungarska18}, \citet{nguyen21} and \citet{thomas-agnan21}.

For our study, we consider only a part of the Copernicus dataset that corresponds to the Po' Valley in Italy, a major geographical feature of Northern Italy. It extends approximately 650 km in an E-W direction, with an area of $46,000 \text{km}^2$. As standard practice in geostatistics, latitude/longitude coordinates are transformed into the corresponding UTM coordinate system. In a set of 2000 pixels, randomly picked in the spatial domain, we select the 4 main parts of the compositional samples, i.e., those which explain the highest proportion of the composition, and create a new compositional dataset with 4 parts, \textit{crops, shrub, grass} and \textit{tree}, belonging to the simplex $\bbS_0^4$. Since some land covers are not found in each pixel, some of the parts are equal to 0, which makes this dataset an interesting case study for assessing how performances vary with the amount of data containing parts equal to 0.

The predictive performances will be measured using a Monte Carlo cross-validation approach, in which the following procedure is repeated $B=20$ times on the set of 2000 pixels: for each $b=1,\dots,20$, a random sample of size 500 is selected to be the training set and the remaining 1500 compositions form the validation set; an LMC is fitted to the $\alpha$-IT transformed data and cokriging is performed for prediction at the validation locations. Scores are computed using the Total Variation and the Hellinger metrics introduced in Equation \eqref{eq:mae_metric}.

\subsection{Spatial analysis with positive parts}
\label{sec:spatial_no0s}

We first  consider data belonging to $\bbS^4$, i.e., data with positive values for all parts.  With this setting, a fair comparison is possible with the classical ILR transform. In Figure \ref{fig:comp_data_barplot} (right), the set of 2000 data is represented in stacked barplots, where each bar represents a compositional sample. The same data are mapped in Figure \ref{fig:comp_data}, where each of the four components varies between 0 and 1. Visual inspection confirms that the area is quite homogeneous, and that stationarity can reasonably be assumed for the purpose of this study.

The log-likelihood \eqref{eq:likelihood_zeros} is maximized with respect to $\alpha$ for each of the $B$ training samples, thereby providing $B$ estimates of $\alpha^\star$. The average of these estimates is equal to $0.120$ and their standard deviation is $0.048$, showing a small variability around the average value. It is noticeable that $\alpha=0$ lies outside the 2-standard deviations confidence interval, a fact that provides strong evidence that setting $\alpha=0$ for an ILR transform is not supported by the data. For the rest of this Section, we shall thus consider that $\alpha^*=0.12$ corresponds to the ML estimates of the $\alpha$-IT transform.

\begin{figure}[!htp]
	\centering
    \includegraphics[valign=c,width=0.4\linewidth]{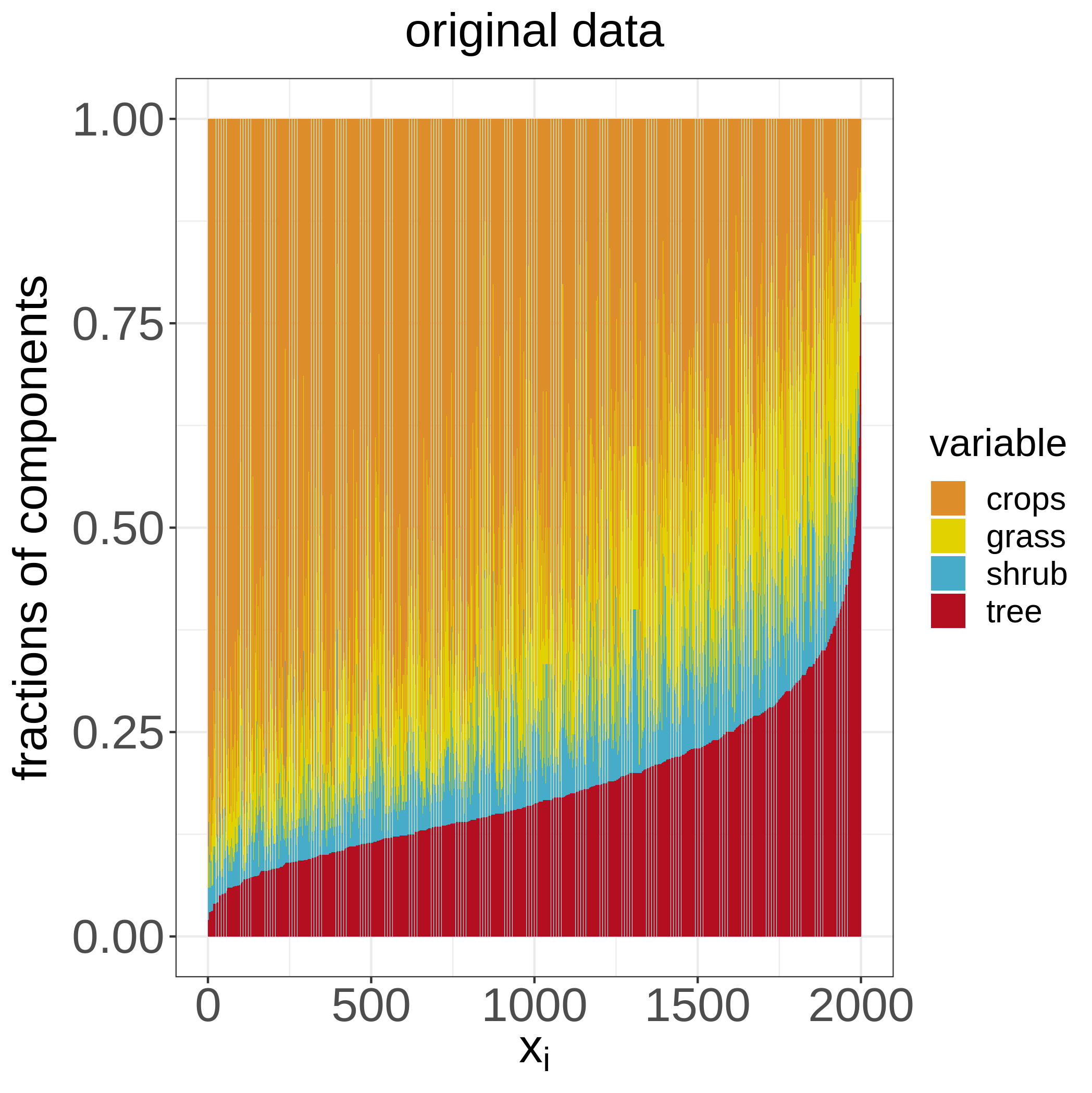}
	\caption{Stacked barplots for the compositional set of size 2000 excluding compositions with 0s.}
	\label{fig:comp_data_barplot}
\end{figure}

\begin{figure}[!htp]
	\centering
	\includegraphics[width=0.48\linewidth]{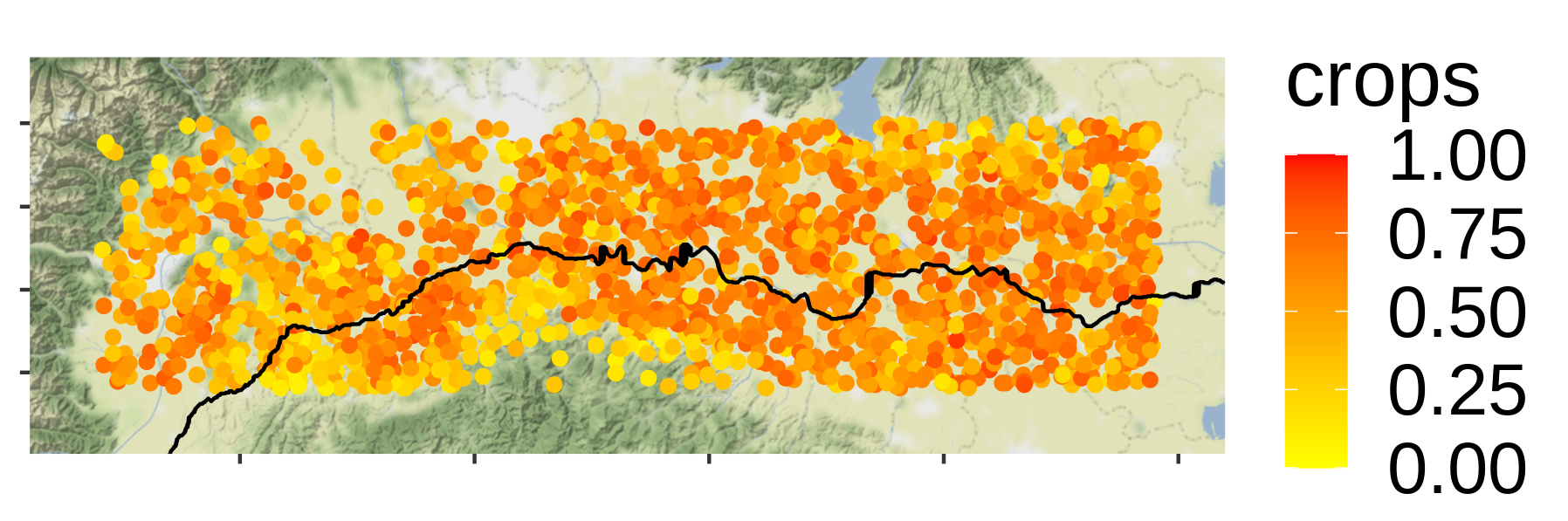}
    \includegraphics[width=0.48\linewidth]{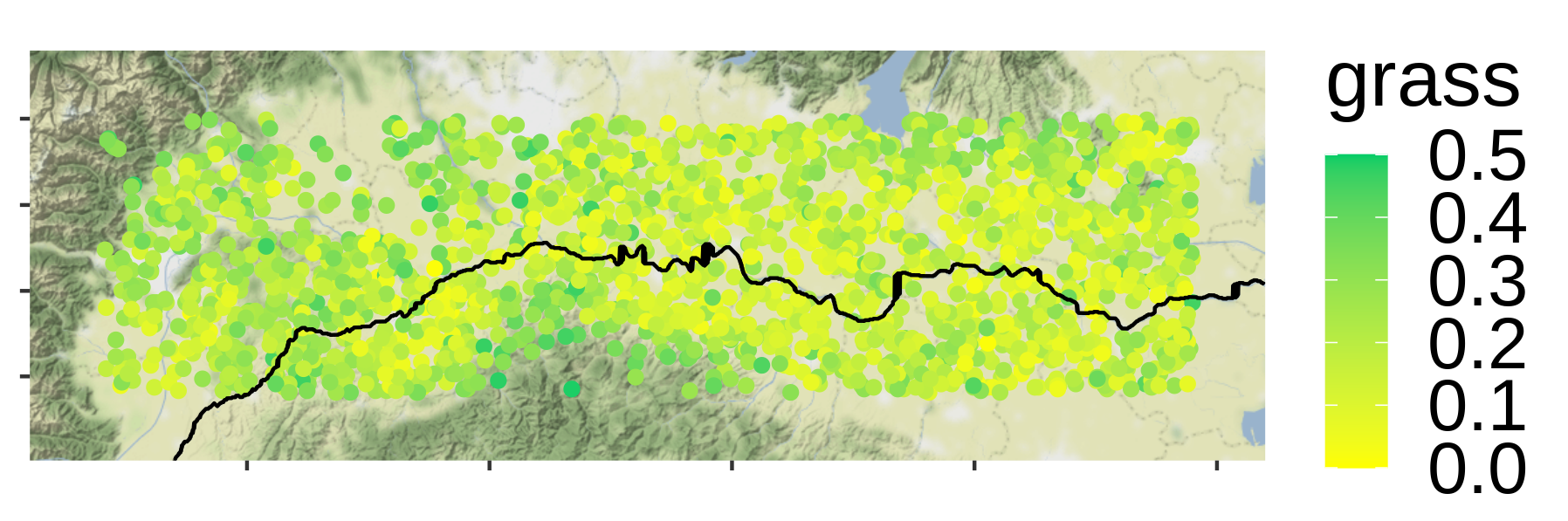}\\
    \includegraphics[width=0.48\linewidth]{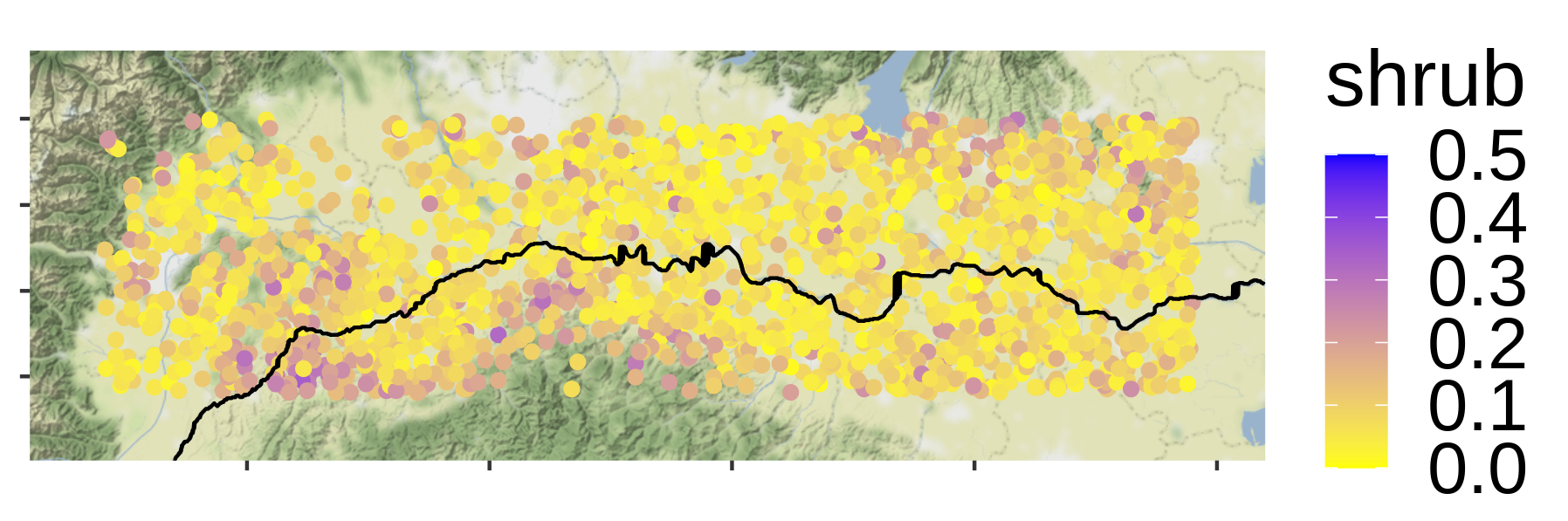}
    \includegraphics[width=0.48\linewidth]{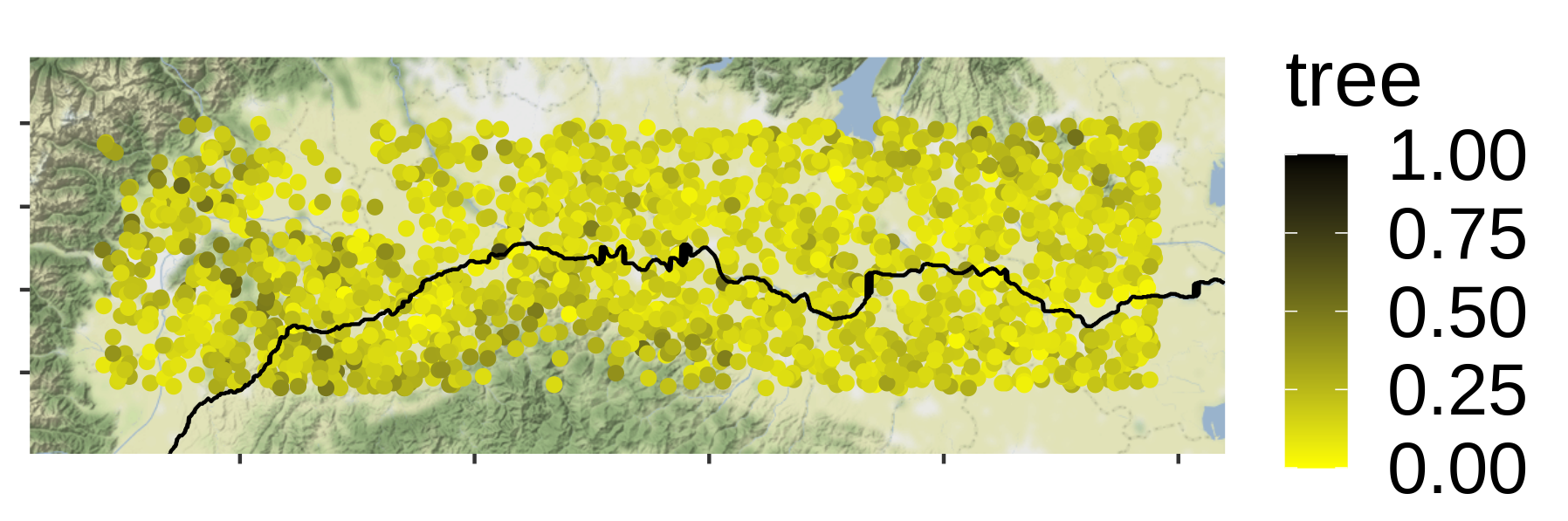}
	\caption{Compositional set of size 2000 excluding compositions with 0s.}
	\label{fig:comp_data}
\end{figure}

Figure \ref{fig:euc_data_no_zero} displays the scatterplots of the 2000 transformed data after the $\alpha$-IT with $\alpha=0$, $\alpha^\star=0.12$ and $\alpha=1$. 
When $\alpha=1$ the triangular shapes reflect that a linear transformation has been applied on the data originated from the simplex $\bbS^4$. When $\alpha=0$, one can notice that groups of data are slightly separated from the main cluster of data. We shall return to this in Section \ref{sec:spatial_with0}.

\begin{figure}[!htp]
	\centering
	\includegraphics[width=0.32\linewidth]{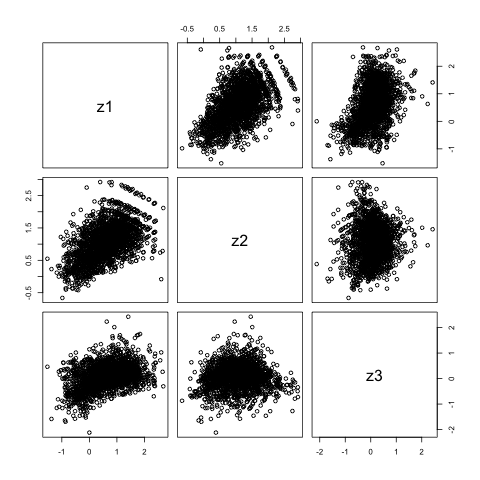}
	\includegraphics[width=0.32\linewidth]{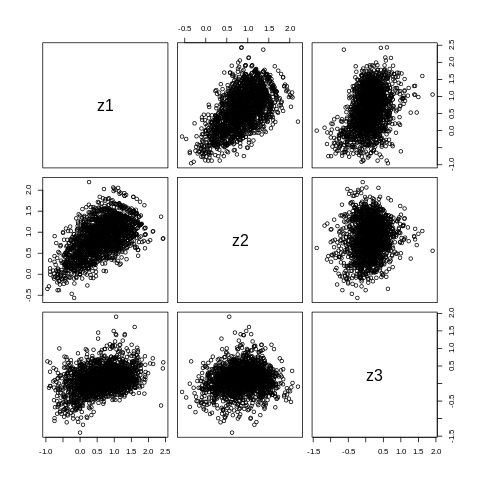}
	\includegraphics[width=0.32\linewidth]{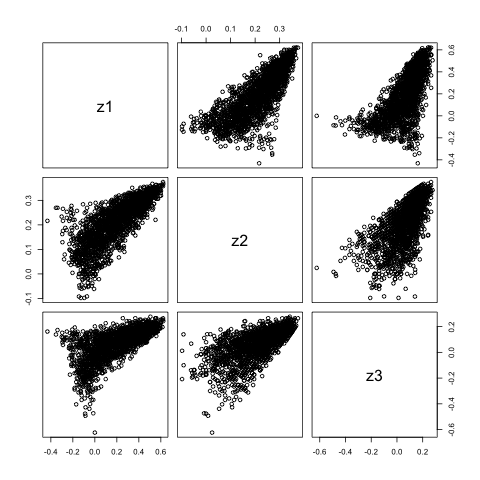}
	\caption{Scatterplot of $z_1$, $z_2$ and $z_3$ after Isometric $\alpha$-transformation with $\alpha=0$ (left),  $\alpha^\star= 0.12$ (center) and $\alpha= 1$ (right).}
	\label{fig:euc_data_no_zero}
\end{figure}

Once the data have been transformed to the Euclidean space $\bbR^3$ for each value of $\alpha \in A=\{0, \alpha^\star, 0.3, 0.5, 0.75, 1\}$, the empirical (co)variograms are computed and an LMC is fitted to the transformed data. Figure \ref{fig:cross_variogram_po_no_zero} shows the empirical and fitted (co-)variograms for $\alpha^\star=0.12$. One can observe the presence of a significant nugget effect -- larger than the partial sill in every direct or cross-variogram -- entailing a possible effect on the quality of kriging predictions. Similar fit were obtained for all values of $\alpha$. The shapes were identical, but the sill decreased as $\alpha$ increases, owing to the associated reduction of the codomain in $\bbR^3$ (see Figure \ref{fig:difference_alfa_it}).

\begin{figure}[!htp]
	\centering
	\includegraphics[width=0.32\linewidth]{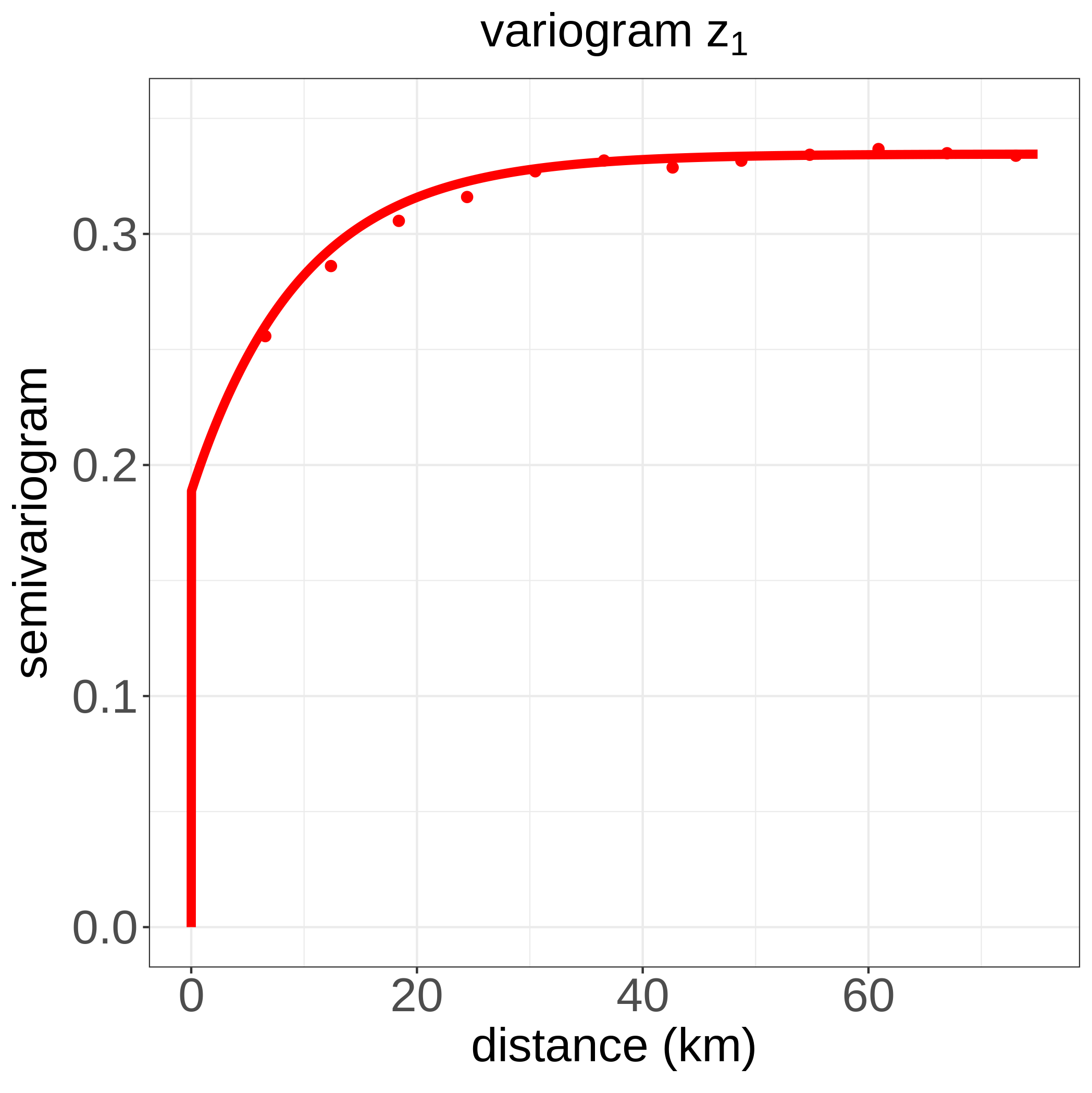}
	\includegraphics[width=0.32\linewidth]{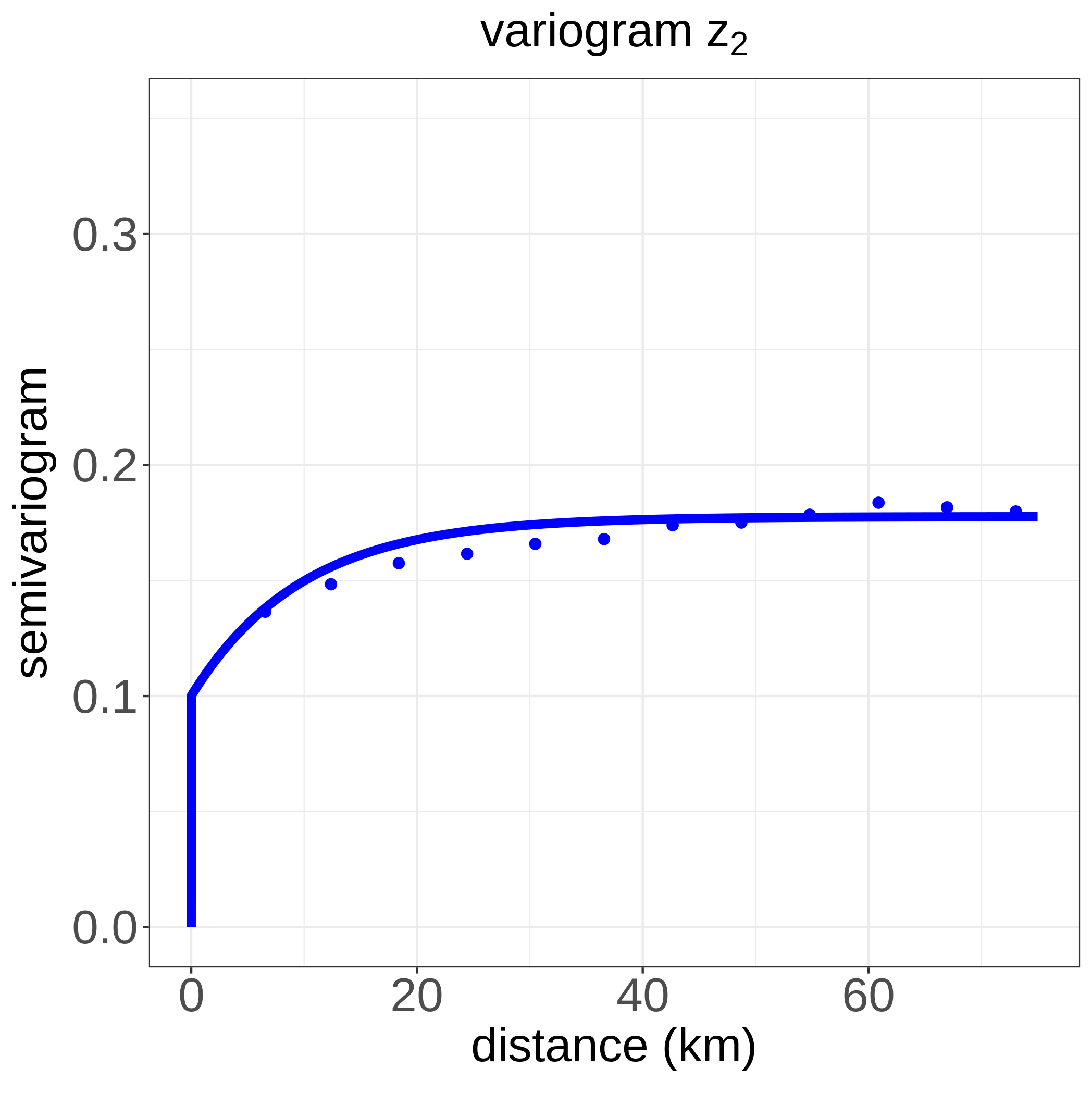}
	\includegraphics[width=0.32\linewidth]{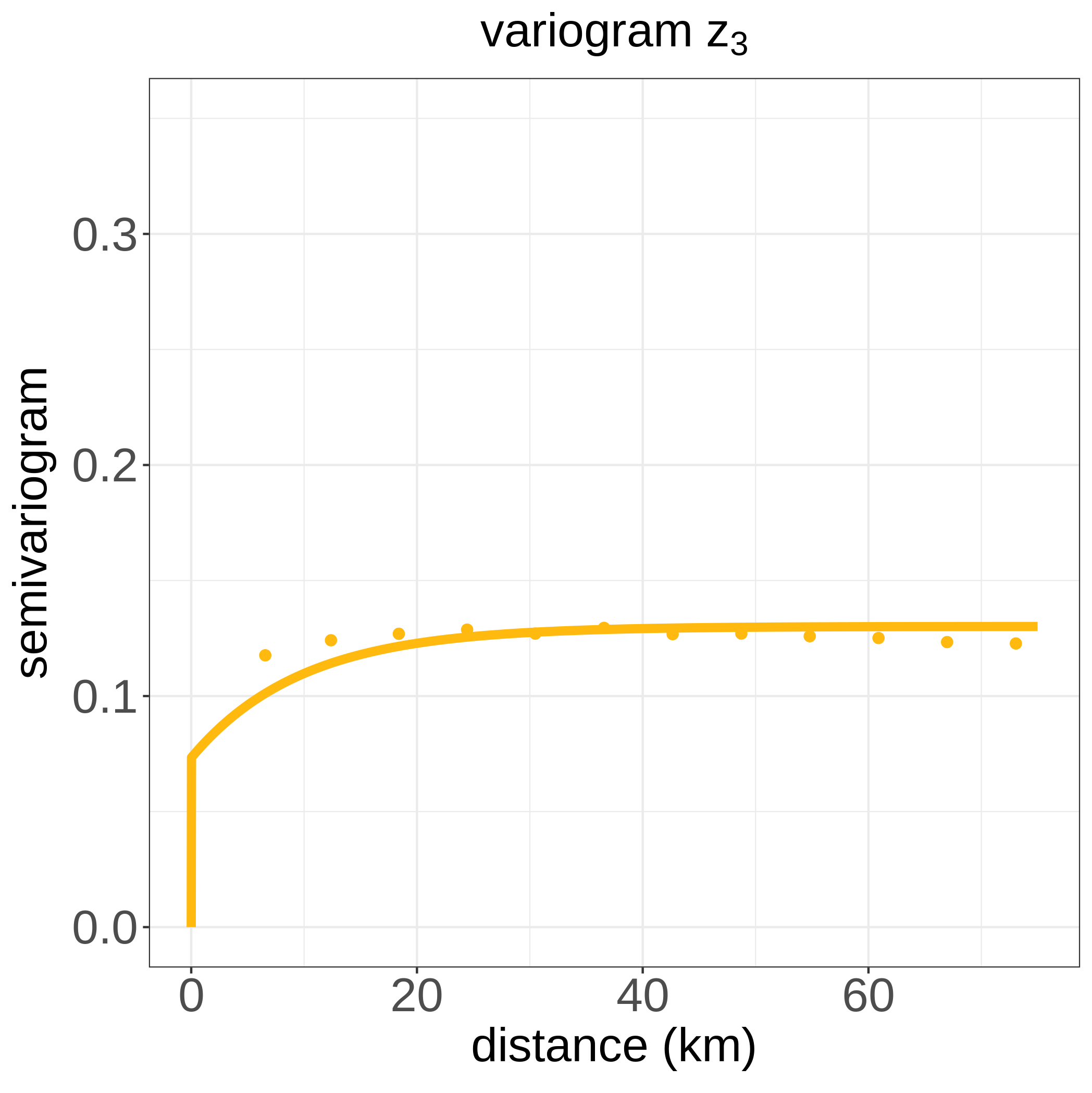}\\
	\includegraphics[width=0.32\linewidth]{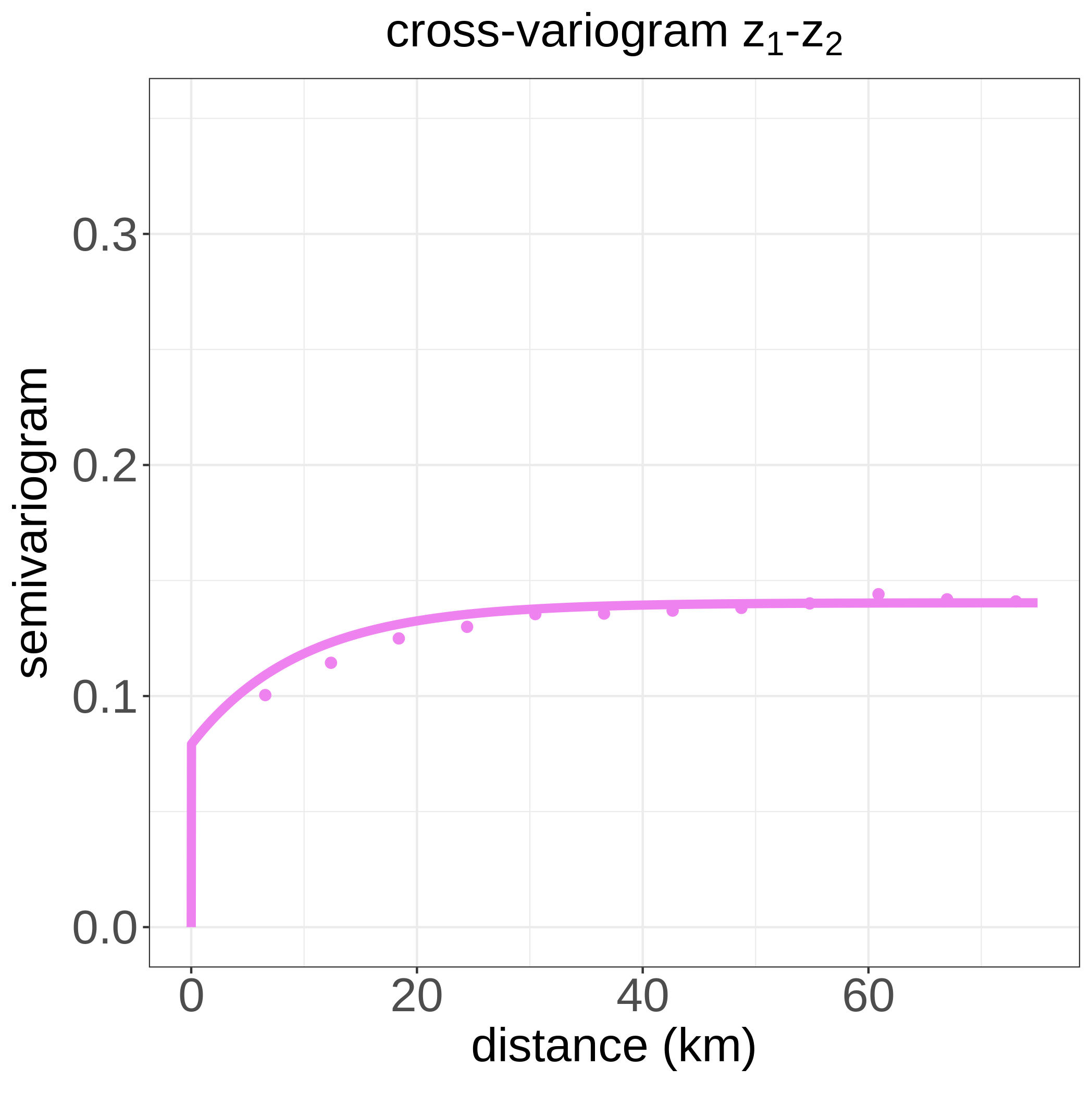}
	\includegraphics[width=0.32\linewidth]{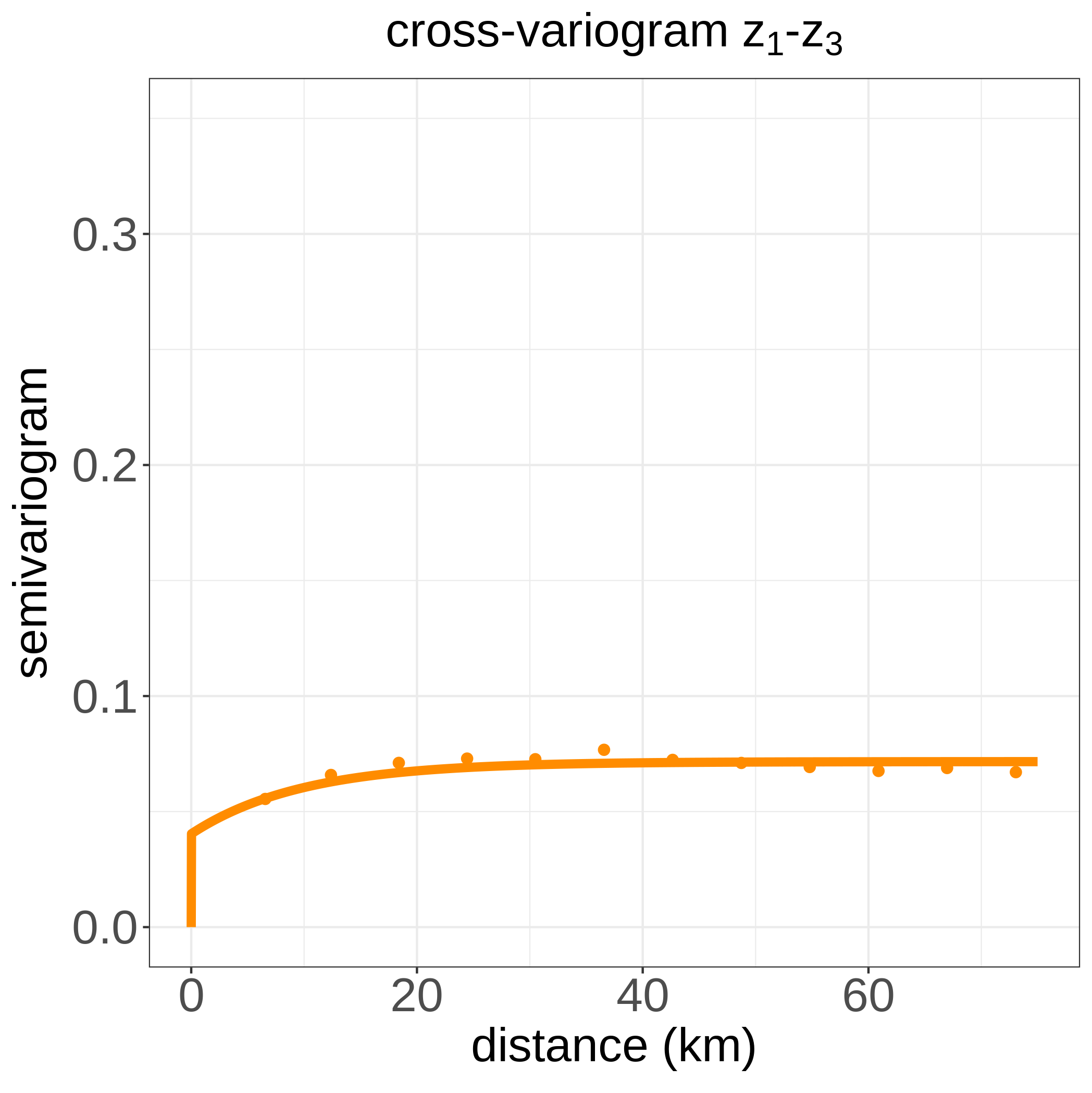}
	\includegraphics[width=0.32\linewidth]{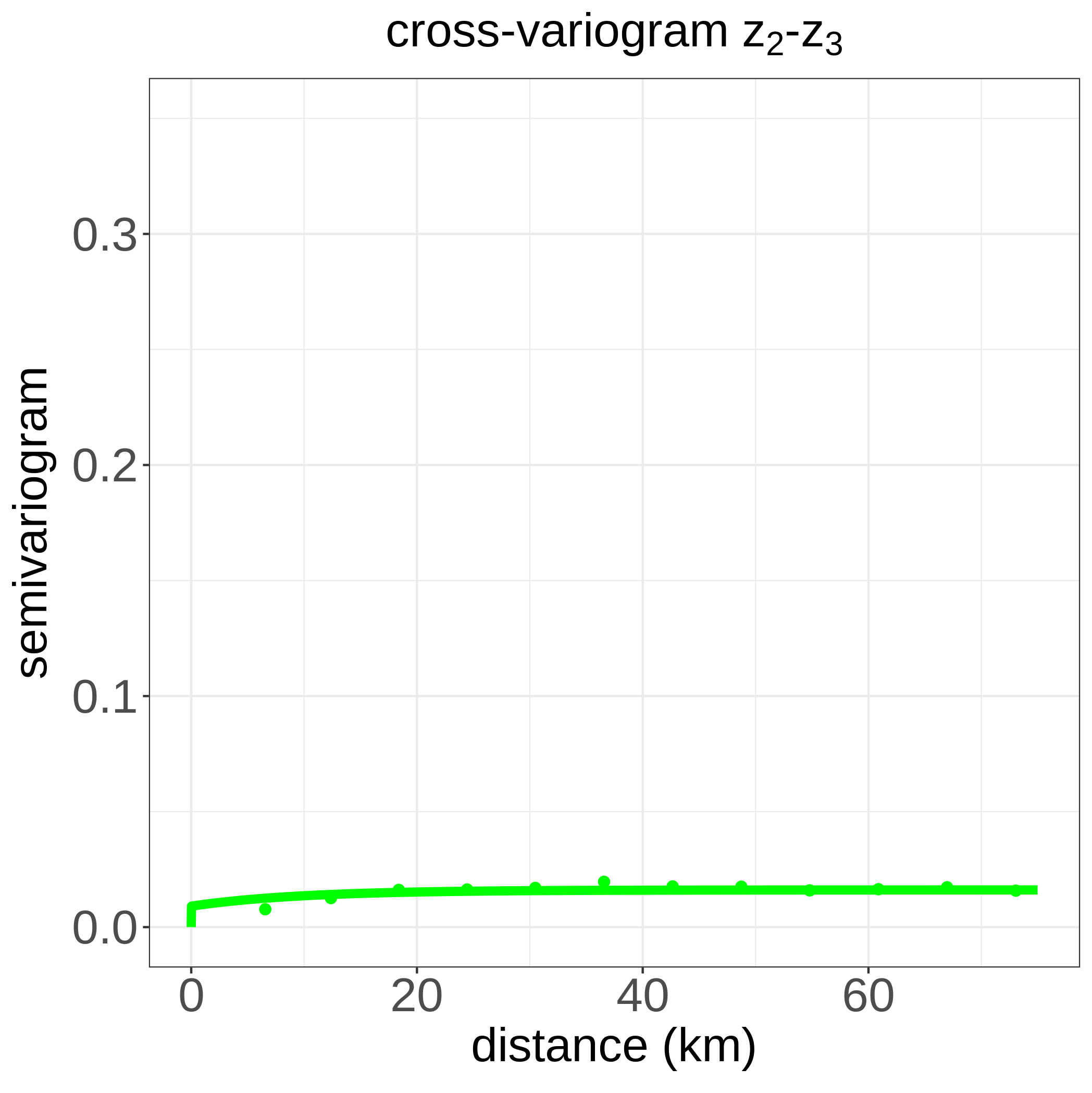}
	\caption{Variograms and cross-variograms for $z_1$, $z_2$, $z_3$ of the dataset without 0s after $\alpha$-IT with $\alpha^\star=0.12$.}
	\label{fig:cross_variogram_po_no_zero}
\end{figure}

The average of the kriging errors at the validation locations across the $B$ simulations for different values of $\alpha$ computed for the two metrics (Hellinger metric and Total Variation metric) presented in Equation \eqref{eq:mae_metric} are reported in Figure \ref{fig:mean_no_zero_tv_hel}. According to the Total Variation metric, $\alpha^\star=0.12$ leads to the best result compared to all other values of $\alpha$, including $\alpha=0$ and $\alpha=1$.  Regarding the Hellinger metric, the minimum of the kriging error among the 6 values of $\alpha$ is obtained for $\alpha=0.3$, which is not exactly the value that corresponds to the maximal proximity to Gaussianity (i.e., $\alpha^\star = 0.12$), but it is close to this value. Notice that the true minimum could be attained for some other value within the range $(0.12,0.50)$, but this has not been evaluated here. Figure \ref{fig:final_kriging_no_zero} shows the back-transformed proportions of the kriging values over a grid made of 10000 points after the $\alpha$-IT with $\alpha^\star=0.12$ given $500$ observed points. Notice that at each location, these predicted proportions are non-negative and that their sum is equal to 1. 

To summarize, this analysis has shown that the values $\alpha=0$ and $\alpha=1$, corresponding to the Aitchison and the linear transformation respectively, are not optimal for both the Hellinger and the Total Variation metrics. An intermediate geometry, corresponding to the $\alpha$-IT with $\alpha^\star=0.12$, seems to better describe the dataset in a geostatistical setting. 
 
\begin{figure}[!htp]
	\centering
	\includegraphics[width=0.5\linewidth]{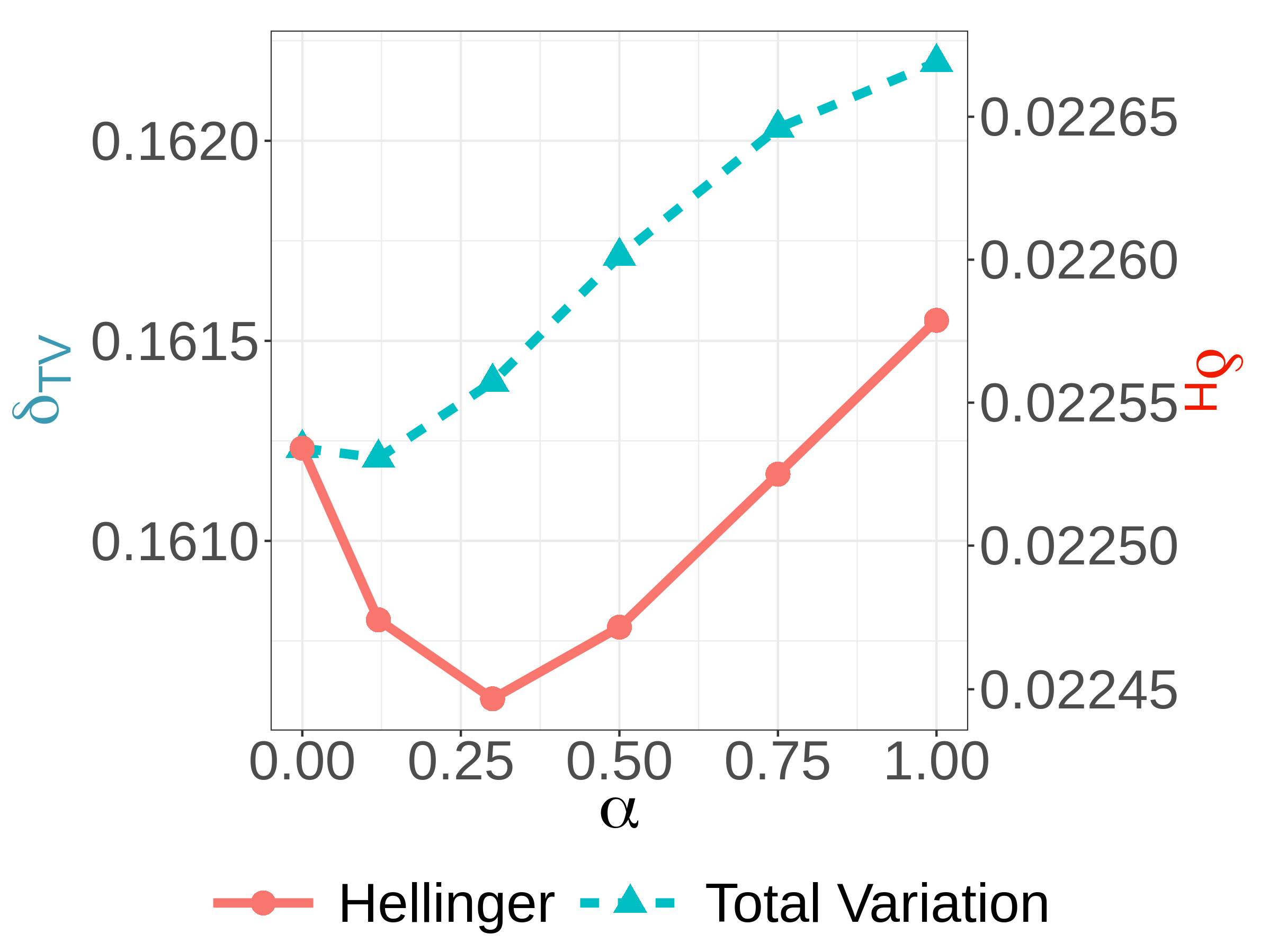}
	\caption{Average of the kriging errors: Total Variation metric in blue dashed lines and Hellinger metric in red continuous lines.}
	\label{fig:mean_no_zero_tv_hel}
\end{figure}

\begin{figure}[!htp]
	\centering
	\includegraphics[width=0.48\linewidth]{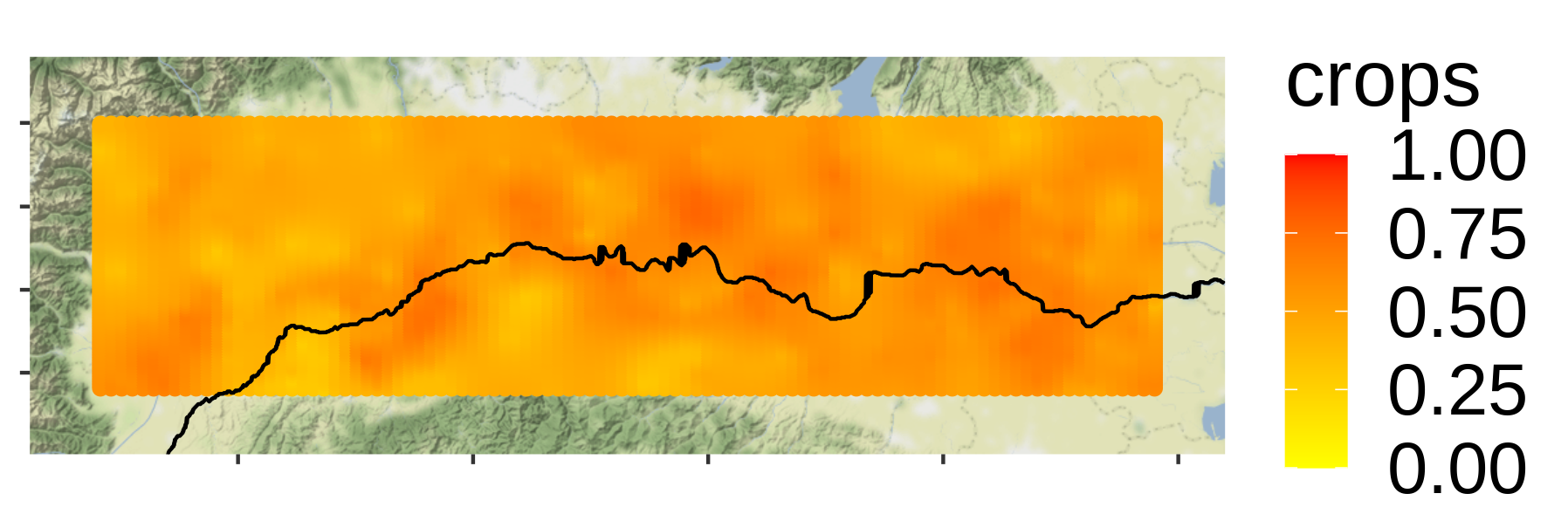}
	\includegraphics[width=0.48\linewidth]{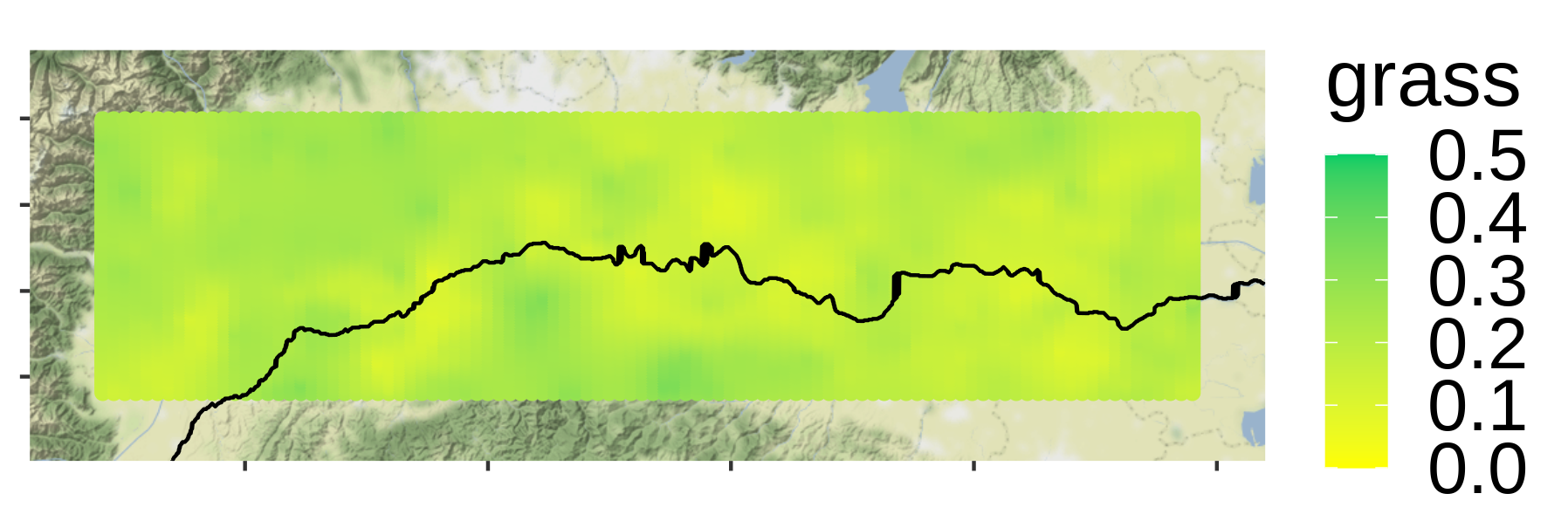}\\
	\includegraphics[width=0.48\linewidth]{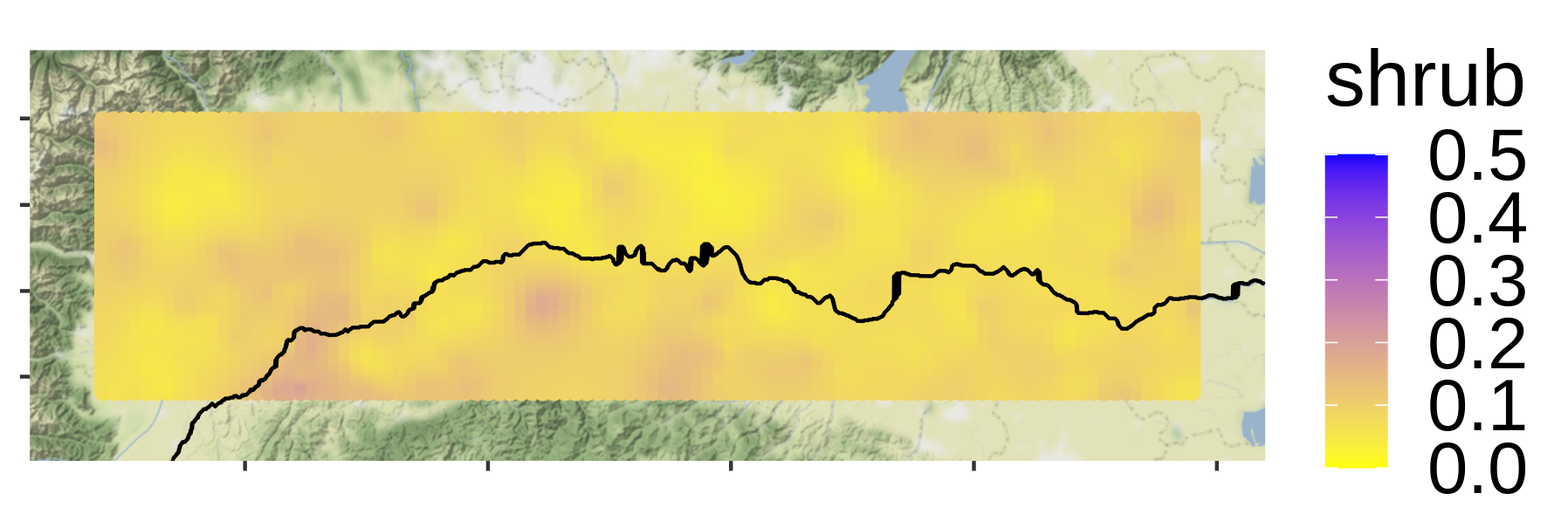}
	\includegraphics[width=0.48\linewidth]{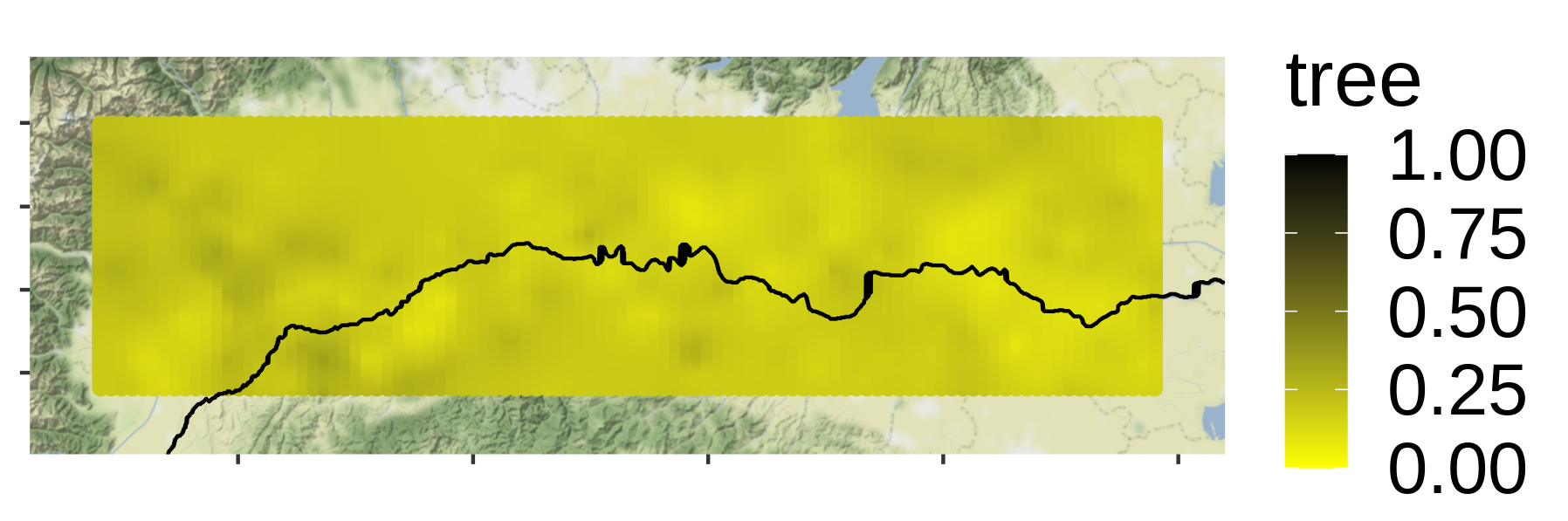}
	\caption{Kriging proportions over a grid of 10000 points with $\alpha^\star=0.12$. Proportions are positive and sum to 1.}
	\label{fig:final_kriging_no_zero}
\end{figure}

\subsection{Spatial analysis with 0s}
\label{sec:spatial_with0}

In the complete Copernicus dataset the land cover proportion is actually equal to 0 for at least one of the four parts in about half of the pixels (53\% of compositions without 0s, 31\% with 1 zero, 12\% with 2 zeros, 4\% with 3 zeros). 
A compositional geostatistical analysis thus requires to address the issue raised by null values in the vector of proportions. Dealing with 0s is a challenge for the log-ratio approach. In \citet{tolosana-missing}, the authors propose to use the undersampled cokriging, where, even if some coordinates of the observation vectors may be missing, it is possible to look for both an interpolation of full vectors at unsampled locations and the completion of the missing variables at the sampled locations. \citet{imputation_1, imputation_2, imputation_3, imputation_4} propose various methods to impute 0s with small values before the statistical analysis of the data; as an alternative, these authors suggest to remove the samples with at least one 0.

As discussed in Section \ref{sec:alpha_it}, the $\alpha$-IT (with $\alpha > 0$) can be applied to compositions that include one or more components equal to 0, thus overcoming  the theoretical limitations of the log-ratio approach. In order to test our approach on such datasets, the Monte Carlo cross-validation assessment presented in Section \ref{sec:spatial_no0s} is here repeated on the data containing 0s.  Two settings are considered. In the first setting, a moderate proportion of data (set to 10\%) presents at least one part being equal to 0. Among those, the proportion of data with one and two parts equal to 0 reflects the proportion seen in the data. In the second setting, the sample of size 2000 is drawn from the general dataset regardless of the number of null compositions. Hence, on average, about half of the data contain at least one part equal to 0. In both settings, data with a single positive part (thus equal to 1) are excluded. 

In the first setting, the average of the 20 ML estimates for $\alpha$ is equal to $0.120$ and the standard deviation is $0.036$. In the second setting, the average of ML estimates is equal to $0.089$ and the standard deviation is $0.040$. Notice that $\alpha=0$ lies outside the confidence interval in all cases.  The $\alpha$-IT is performed with $\alpha \in A'= \{0.01, \alpha^\star, 0.3, 0.5, 0.75, 1\}$, where 0.01 was chosen as an approximation of 0 (0 being not acceptable). Figure \ref{fig:cop_zeros} represents the Total Variation and the Hellinger scores of the kriged compositions as a function of $\alpha$. The errors are maximal when $\alpha$ is close to $0$. There is a local minimum at $0.12$ and $0.30$ in the first and second setting respectively. Then, the scores reach a floor when $\alpha>0.5$ and they remain almost equal up to $\alpha=1$. The same behavior can be observed on the normalized $\alpha$-IT prediction error (in the Euclidean space) instead of Total Variation or Hellinger distances (in the simplex). For example, in the second setting, $\bar{\delta_\alpha}/\sigma$ is equal to (1.64, 1.58, 1.49, 1.56, 1.42 and 1.41) as $\alpha$ goes from 0.01 to 1 in $A'$.

The results above thus show that even though the ML estimates are close to $\alpha=0.1$, the prediction scores are minimized close to $\alpha=1$, suggesting that absence of transformation of the data is a good option when it comes to kriging in presence of compositions with null parts.

\begin{figure}[!htp]
	\centering
	\includegraphics[width=0.48\linewidth]{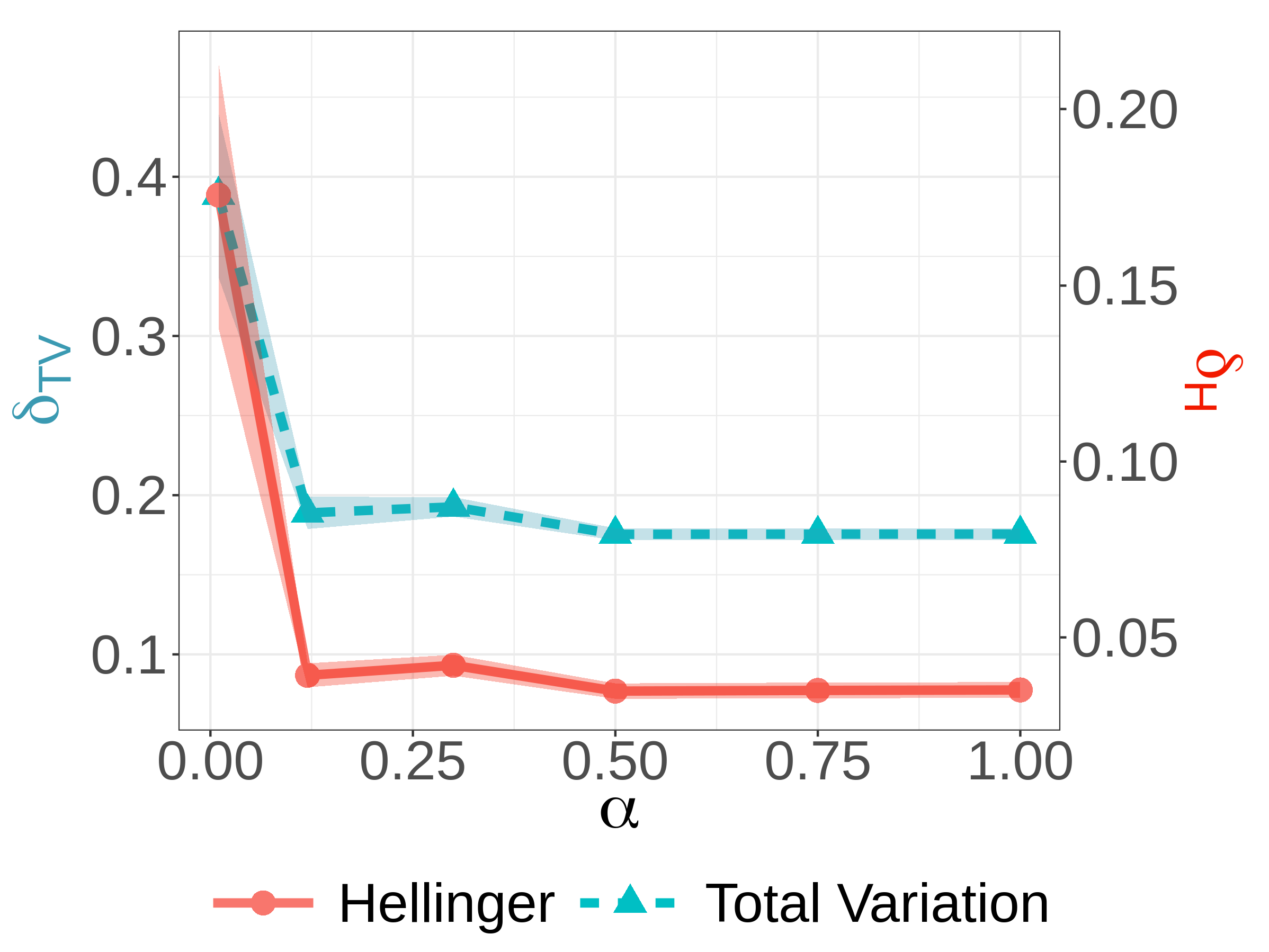} \quad
	\includegraphics[width=0.48\linewidth]{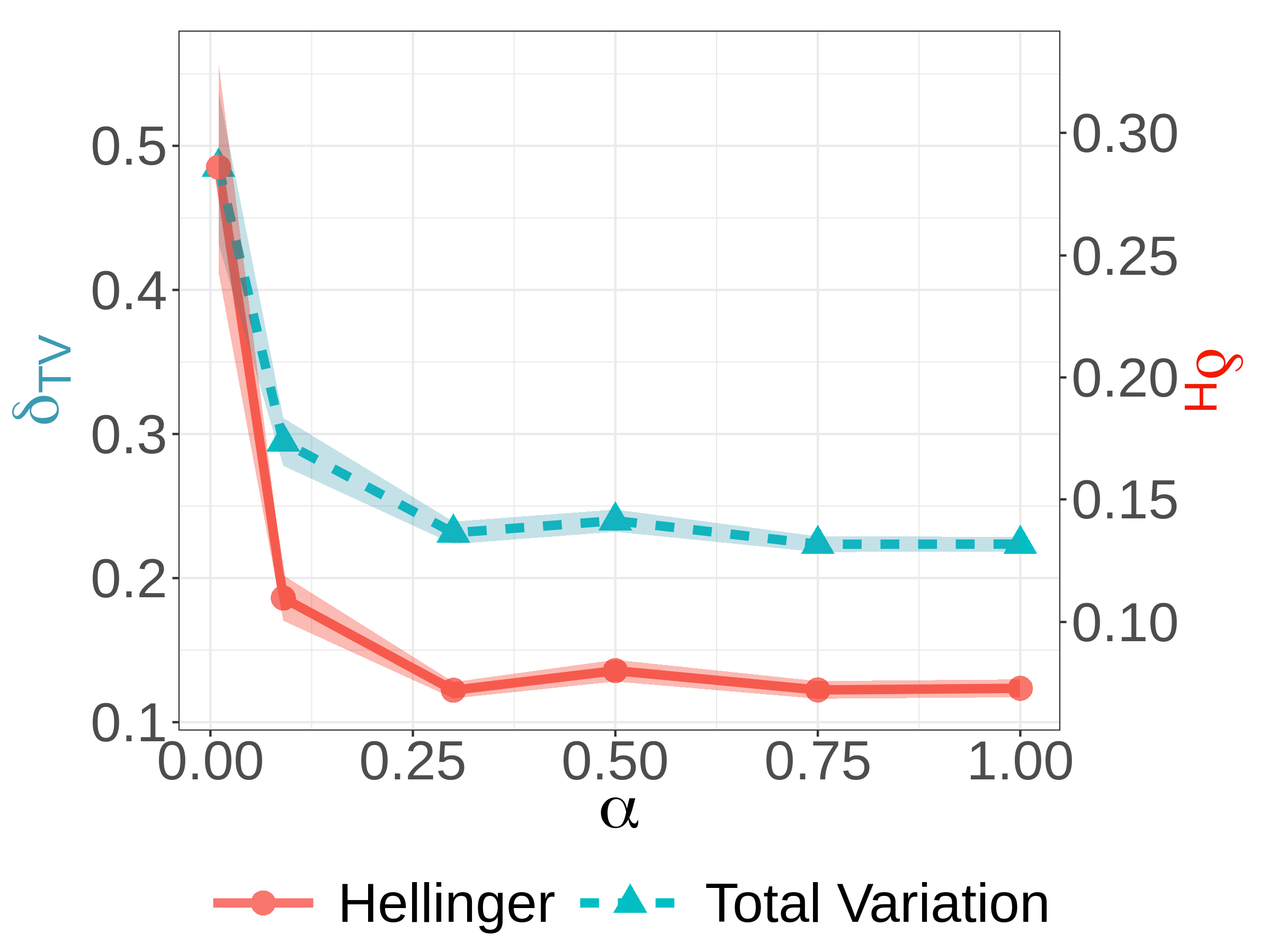}
	\caption{Hellinger and Total Variation metrics computed on prediction errors for 20 random sample datasets with 0s. Thick line: average; light shadow : $\pm 2\sigma$ envelopes. Left:  moderate (10\%) amount of 0s. Right: large amount of 0s.}
	\label{fig:cop_zeros}
\end{figure}

In order to understand this counter-intuitive behavior, Figure \ref{fig:euc_cop_zeros_3D} offer a 3D representation of the set of data with a large proportions of compositions with null parts, along with the kriged vectors in the Simplex (left) and in the 3D Euclidean space  (right). Two values of $\alpha$ are considered: in the top row, $\alpha$ is the ML estimate $\alpha^\star$, and in the bottom row, $\alpha=1$ is used.  It is clearly visible that the $\alpha$-IT produces clusters of data in the Euclidean space. These clusters are associated to different groups of data. The most numerous group lies in the center, and corresponds to data with positive values for all parts. Then, there are four clusters corresponding to the four possible groups of data with exactly one part equal to 0, i.e. data belonging to one of the facets of the tetrahedron. Finally, there are six smaller groups  corresponding to data with two parts equal to 0. These data lie on the edges of the tetrahedron. When $\alpha=1$, the transformation is linear, and these clusters are as close to each other as in the original dataset. As $\alpha$ gets smaller, the distance between these clusters increases. When $\alpha$ approaches 0, the distance between clusters tends to infinity, in relation with the fact that the logarithm of 0 is undefined.  The likelihood in Equation \eqref{eq:likelihood_zeros} is the sum of partial likelihoods computed within each group, but it does not account for the distance between groups. The maximum of the likelihood is thus reached when the groups are the closest to Gaussianity, regardless of the clustering effect. 

When kriging is computed at a target location, data in the neighborhood can belong to different clusters (some data with only positive parts, some others with one or two 0s, possibly not for the same parts). In this case, kriging, which is the weighted average of values belonging to different clusters, can be located in an area of the Euclidean space where there are no data. This is the case for the light blue points represented in Figure \ref{fig:euc_cop_zeros_3D}. 
The kriging RMSE in the $\alpha$-IT geometry will then increase as $\alpha$ decreases. Once back-transformed, these points will end up in regions in the simplex where there are no data, implying high values of the Total Variation and Hellinger metrics. Figure \ref{fig:euc_cop_zeros10pc_3D} represents the same plots for a moderate proportion of compositions with null parts. 

In conclusion, the clustering effect due to compositions with 0s is amplified as $\alpha$ decreases, entailing poor kriging performances both within the $\alpha$-IT geometry and for metrics in the simplex. Note that a similar problem is addressed in Compositional Data Analysis when analyzing the so-called \emph{essential zeros}, i.e., 0s which are not due to rounding or low sensitivity of the measurement instrument. In these cases, zero-replacement is clearly not appropriate. A strategy of analysis proposed by some authors (e.g., \citet{Aitchison-Kay2003}) consists of two separate steps, namely (\emph{i}) analyzing the patterns of zeros (i.e., the data clusters) and (\emph{ii}) analyzing, within clusters, the non-zero parts according to the Aitchison geometry. A similar strategy can be envisioned in the setting of $\alpha$-IT, including an explicit modeling of the clusters induced by the presence of 0s. A relevant advantage here would be that the geometry induced by the $\alpha$-IT is well-defined over all the compositions, thus allowing modeling across clusters, and not only within clusters.

\begin{figure}[!htp]
	\centering
    $\vcenter{\hbox{\includegraphics[width=0.48\linewidth]{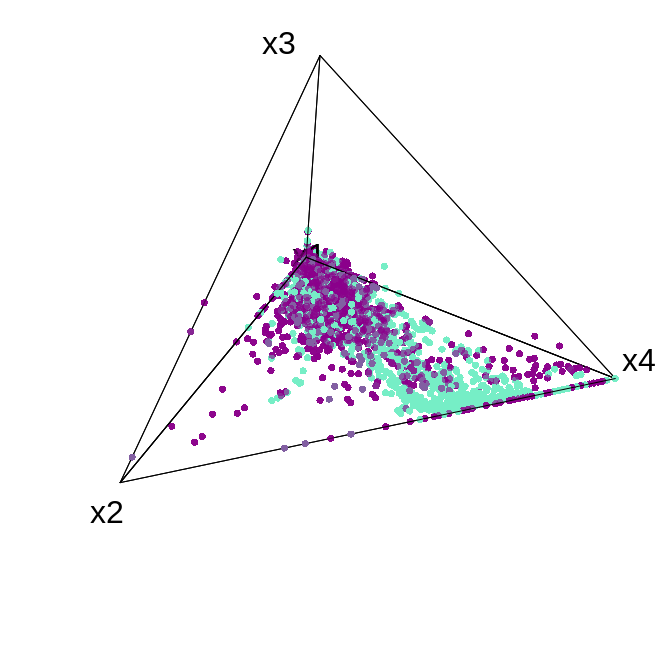}}}$
    $\vcenter{\hbox{\includegraphics[width=0.48\linewidth]{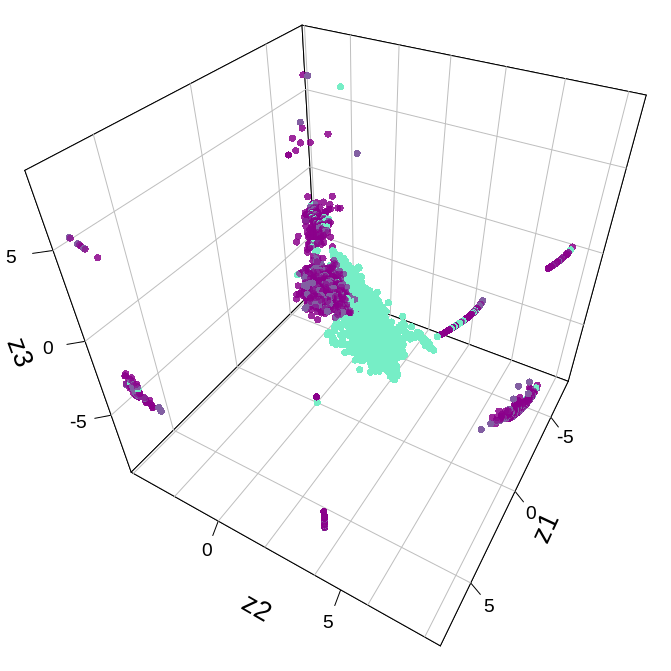}}}$\\
    $\vcenter{\hbox{\includegraphics[width=0.48\linewidth]{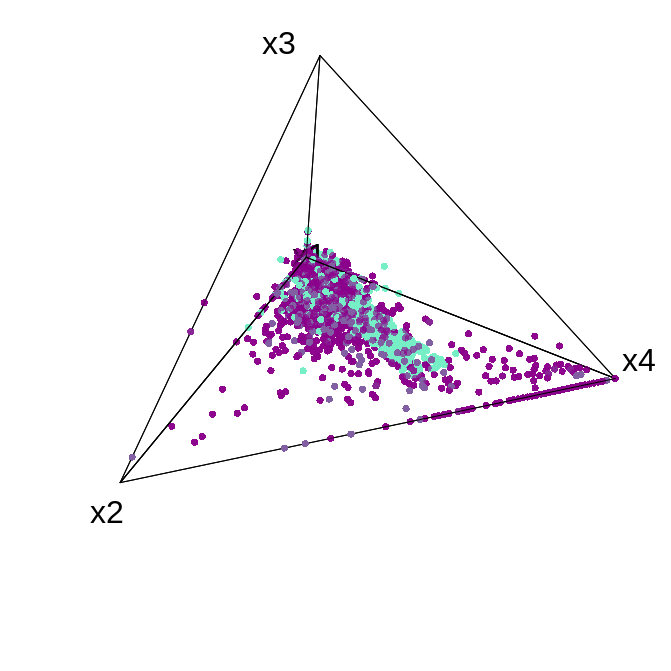}}}$
    $\vcenter{\hbox{\includegraphics[width=0.48\linewidth]{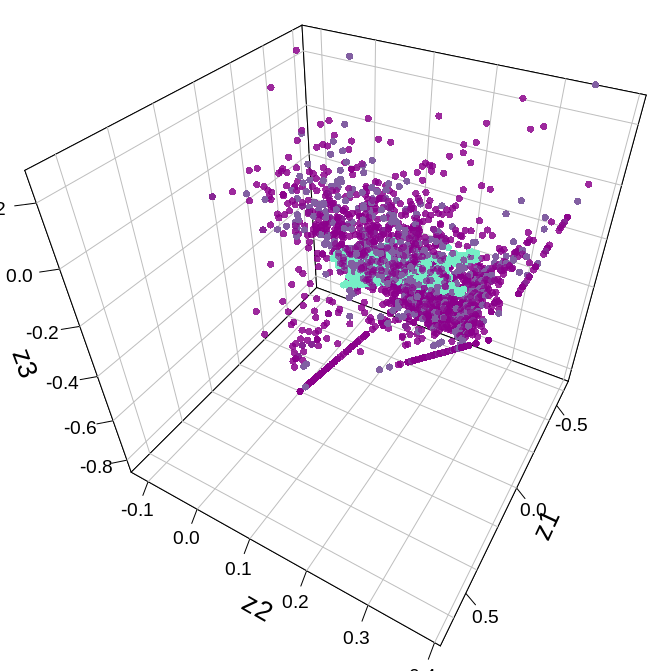}}}$
    \caption{Original data (violet) and kriging predictions (light blue)  with a high (about 50\%) fraction of data with 0s (second setting). Left: simplex. Right: Euclidean space. Top: $\alpha^\star=0.089$. Bottom: $\alpha=1$.}
	\label{fig:euc_cop_zeros_3D}
\end{figure}

\section{Discussion and conclusion}
\label{sec:discussion}

A new class of $\alpha$-transformations, named  $\alpha$-IT, has been proposed, to allow for the geostatistical analysis of compositional data. The transformation has been proved to converge to the Isometric Log-Ratio transformation as $\alpha$ approaches 0, while it reduces to a linear transformation when $\alpha=1$. In this sense, the $\alpha$-IT represents a compromise between the Aitchison geometry ($\alpha=0$) and the Euclidean one ($\alpha=1$). Nonetheless, the presence of the parameter $\alpha$ controlling the degree of transformation applied to the data offers the relevant advantage of letting the problem suggest the most appropriate transformation to use on the data. As far as kriging is concerned, we proposed a maximum likelihood estimator for $\alpha$, which maximizes the Gaussianity of the transformed data -- hence the kriging performances. On simulations of compositions originating from Gaussian random fields with inverse $\alpha$-IT, our results show that the ML estimator provides close to unbiased estimates for $\alpha$, i.e., it allows to correctly identify the $\alpha$-IT yielding to data Gaussianity and optimizing kriging performances. 

It is worth mentioning that, unlike alternative classes of $\alpha$-transformations nowadays available \citep{tsagris-alpha}, the $\alpha$-IT also allows for an explicit characterization of the covariance structure of the field according to the geometry induced by the transformation. In particular, this enables one to formulate a generalization of the so-called ILR covariance matrix (see Section \ref{sec:geostat}) -- widely-used in compositional data analysis -- while drawing a direct connection with the latter as $\alpha$ approaches 0.

Simulation results and data analyses on Copernicus land cover data confirm that the Aitchison geometry -- retrieved for $\alpha=0$ -- may not be optimal for kriging, as it proved to be outperformed by the $\alpha$-IT when $\alpha$ was set through ML. On the other hand, the $\alpha$-IT allows one to explicitly deal with compositions with 0s, for which the log-ratio approach is not well-defined. Our investigation on this point provides evidence that the use of a small value for $\alpha$ may not be appropriate in the presence of null parts, because this amplifies the grouping structure merely induced by the 0 themselves, with detrimental effect on kriging performances. Our results indeed suggest that the use of a geometry close to the Euclidean one ($\alpha \simeq 1$) may be more appropriate in these cases instead. These findings complement, from a different perspective, the theoretical results established in \citet{allard2018means}, which state that linear combinations of the data (and thus kriging with $\alpha=1$) are the only unbiased central tendency characteristics satisfying a small set of axioms, namely continuity, reflexivity, and marginal stability. 

In this work, we computed the cokriging of the transformed vector $\Zbold(s_0)$ at unsampled locations $s_0$, conditional on the vectors $\Zbold(s_j) = A_{\alpha-IT}(\Xbold(s_j))$, $j=1,\dots,n$. Ultimately, the goal is to predict the compositional vector $\Xbold(s_0)$ given the same information. However, by back-transforming the predictions from $\mathbb{R}^{D-1}$ to $\mathbb{S}^D$ with a non-linear transformation $A^{-1}_{\alpha-IT}$, one actually introduces a bias even though the prediction of $\Zbold(s_0)$ is unbiased. This issue is closely related to the above mentioned results in \citet{allard2018means}. However, it should be noted that the definition of the bias is tightly related to the geometry under consideration. Implicitly, the geometry in use when evaluating the bias on the prediction of $\Xbold(s_0)$ is the Euclidean geometry in the simplex, whilst the geometry implied by the $\alpha$-IT distance defined in Section 3.3 is equal to the Euclidean distance in the transformed space. Defining and correcting the bias is an interesting yet very broad issue  worth to be further investigated.  The problem of dealing with missing data, when some of the compositional vectors are incomplete, has not been tackled here. It is also left for future work.

More in general, we remark that, while this work focuses on the analysis of spatial compositional data with a focus on spatial prediction (kriging), the $\alpha$-IT transformation can be used for any type of compositional analysis, ranging from exploratory analysis, to classification, inference and regression. 
In these settings, one may use the $\alpha$-IT to convert the original compositional data into Euclidean data, analyze them according to the Euclidean geometry, and then transform back the results to the simplex, with the same process as that demonstrated in this work. While we generally envision a clear path of research to this scope, further research is needed to export the idea here presented to further settings, in the direction of, e.g., establishing criteria to set the optimal value for $\alpha$. Nonetheless, our $\alpha$-IT can be a promising alternative to other transformations for the very same reasons that led us to its development in the spatial setting.

\section*{Acknowledgments} The authors are grateful to  four anonymous reviewers for their very careful reading and the many valuable comments that helped to improve the manuscript.
{\tt R} scripts reproducing our analyses can be found at \url{http://github.com/luciclar/alphaIT_spatial_compositional}. 

\newpage

\bibliographystyle{apalike}
\bibliography{references}

\pagebreak

\section{Supplementary Material}
\label{sec:sup}
\beginsupplement

\begin{figure}[!htp]
	\centering
	\includegraphics[width=0.48\linewidth]{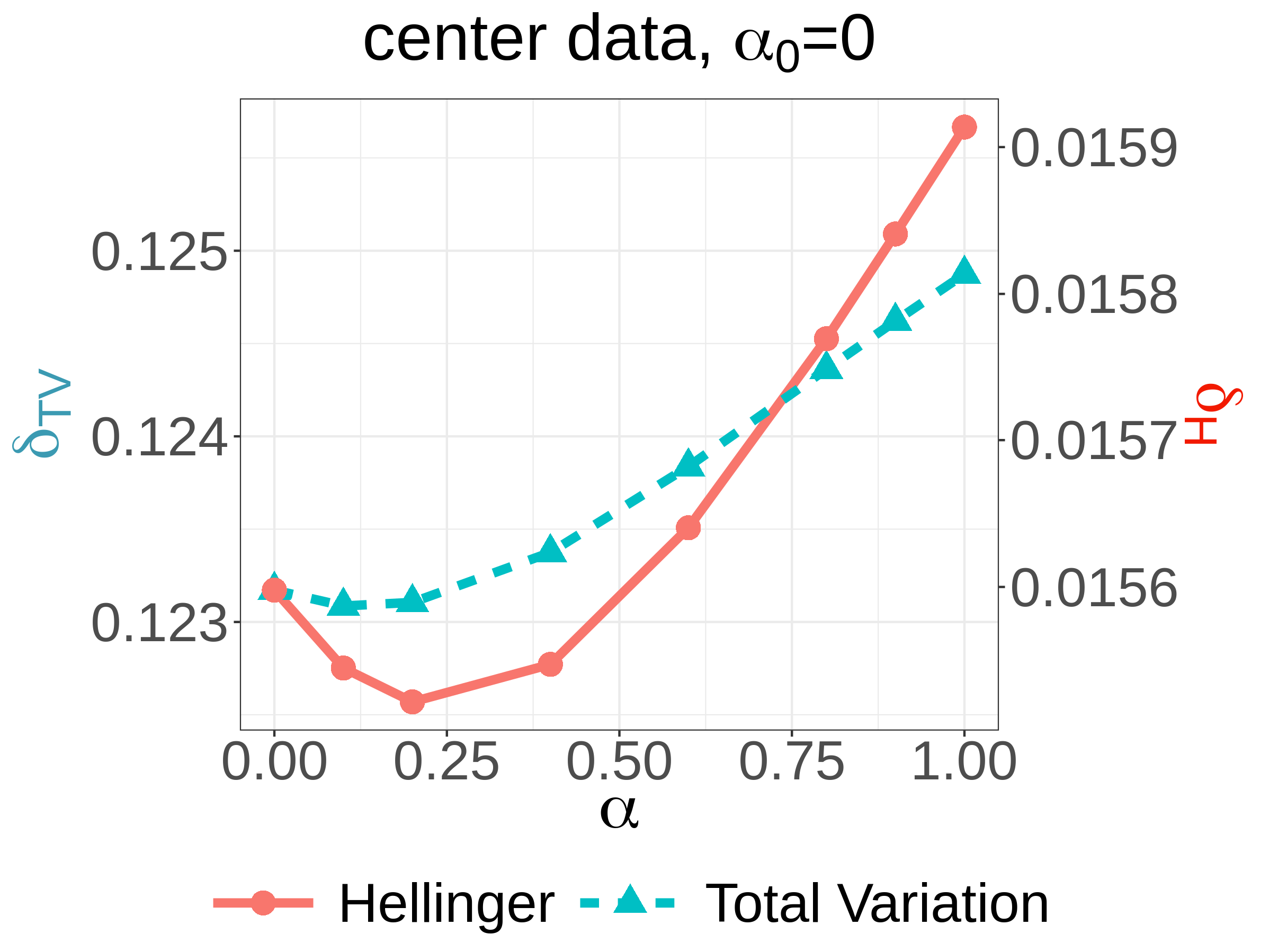}
	\includegraphics[width=0.48\linewidth]{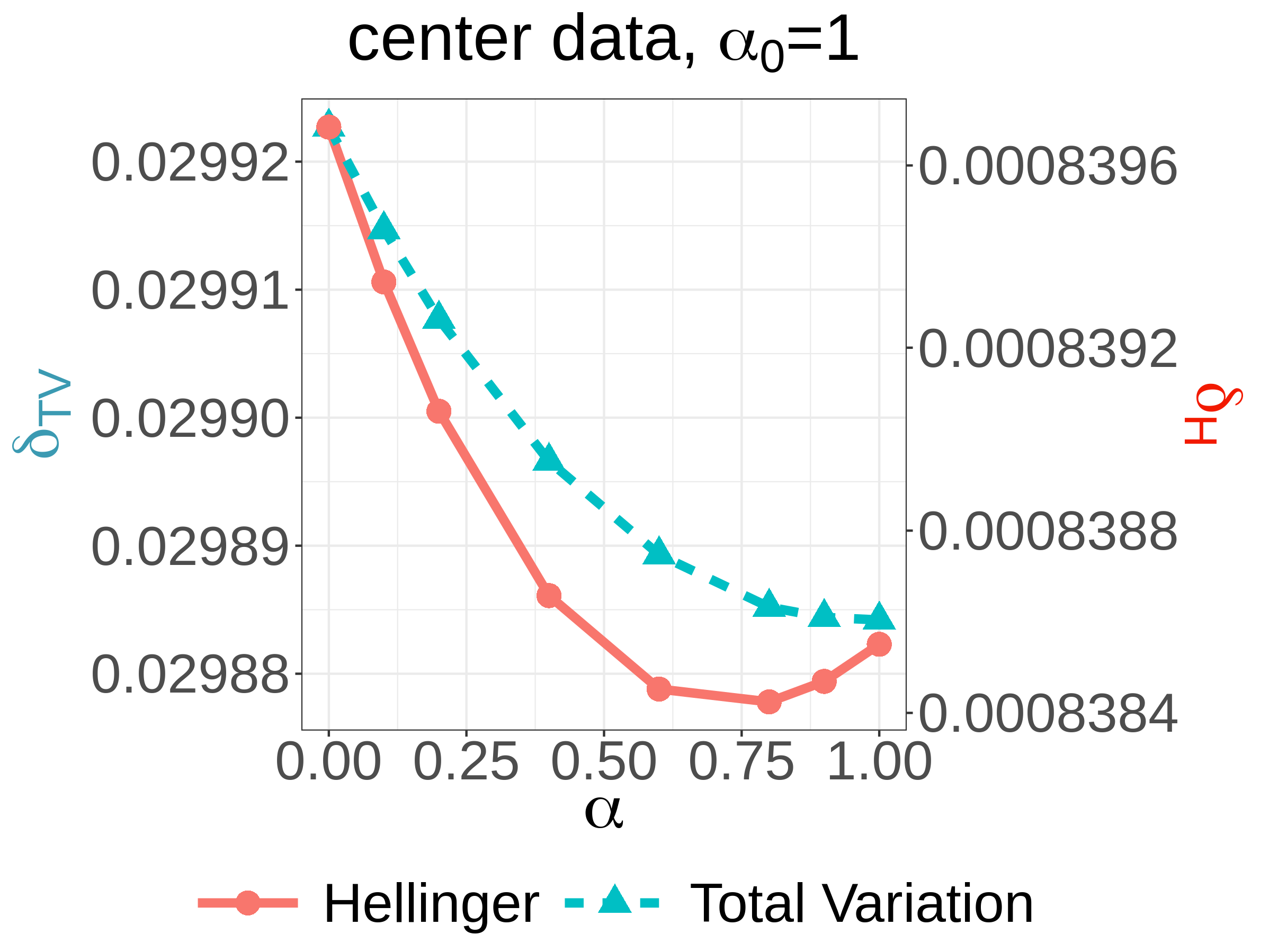}
	\caption{Average error over $B=100$ tests for center data after $\alpha$-IT with different $\alpha$ and $\alpha_0=0$ (left) and $\alpha_0=1$ (right): Total Variation metric in blue dashed lines and Hellinger metric in red continuous lines.}
	\label{fig:means_0_1}
\end{figure}

\clearpage

\begin{figure}[!htp]
	\centering
	\includegraphics[width=0.48\linewidth]{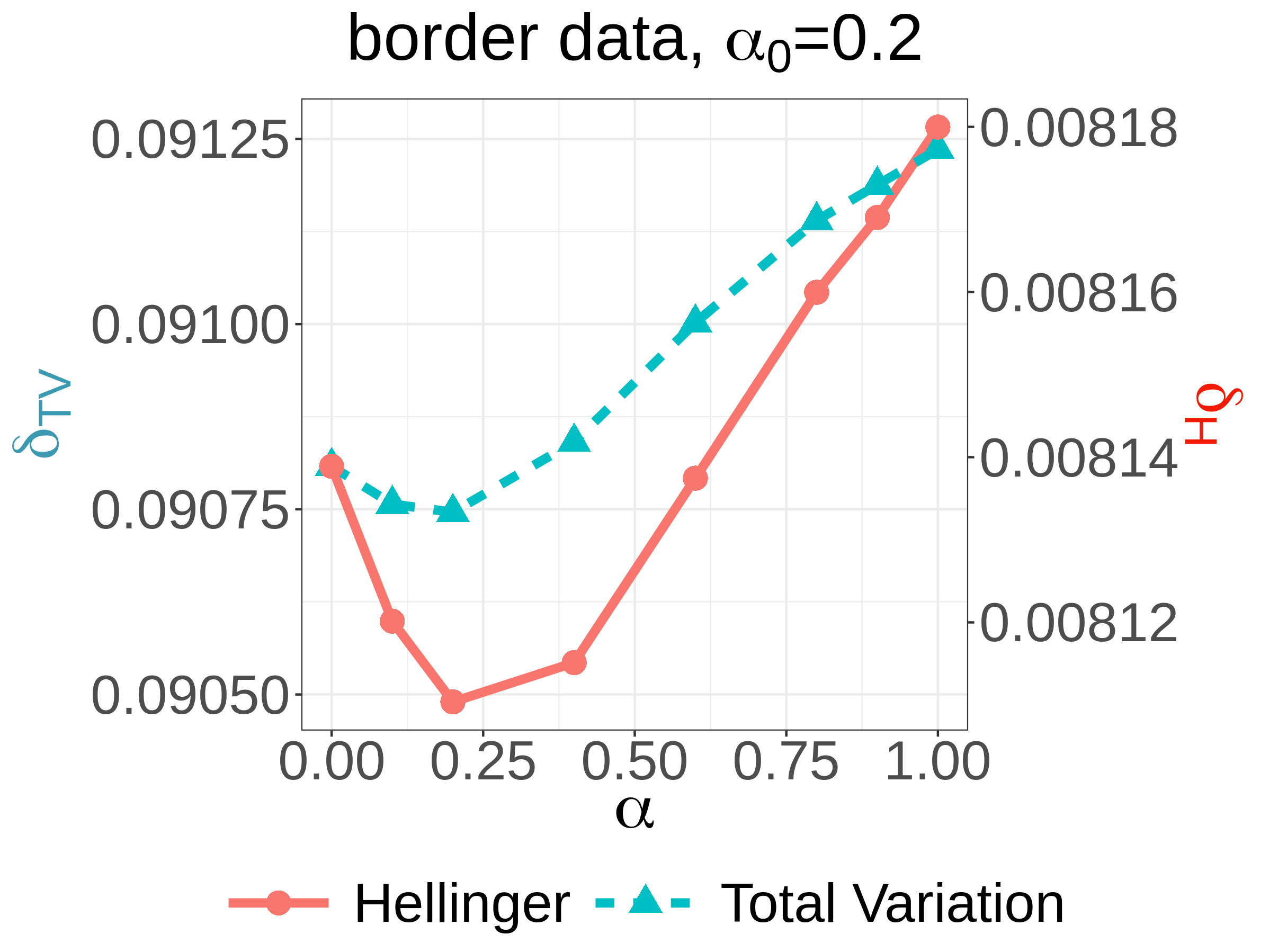}
	\includegraphics[width=0.48\linewidth]{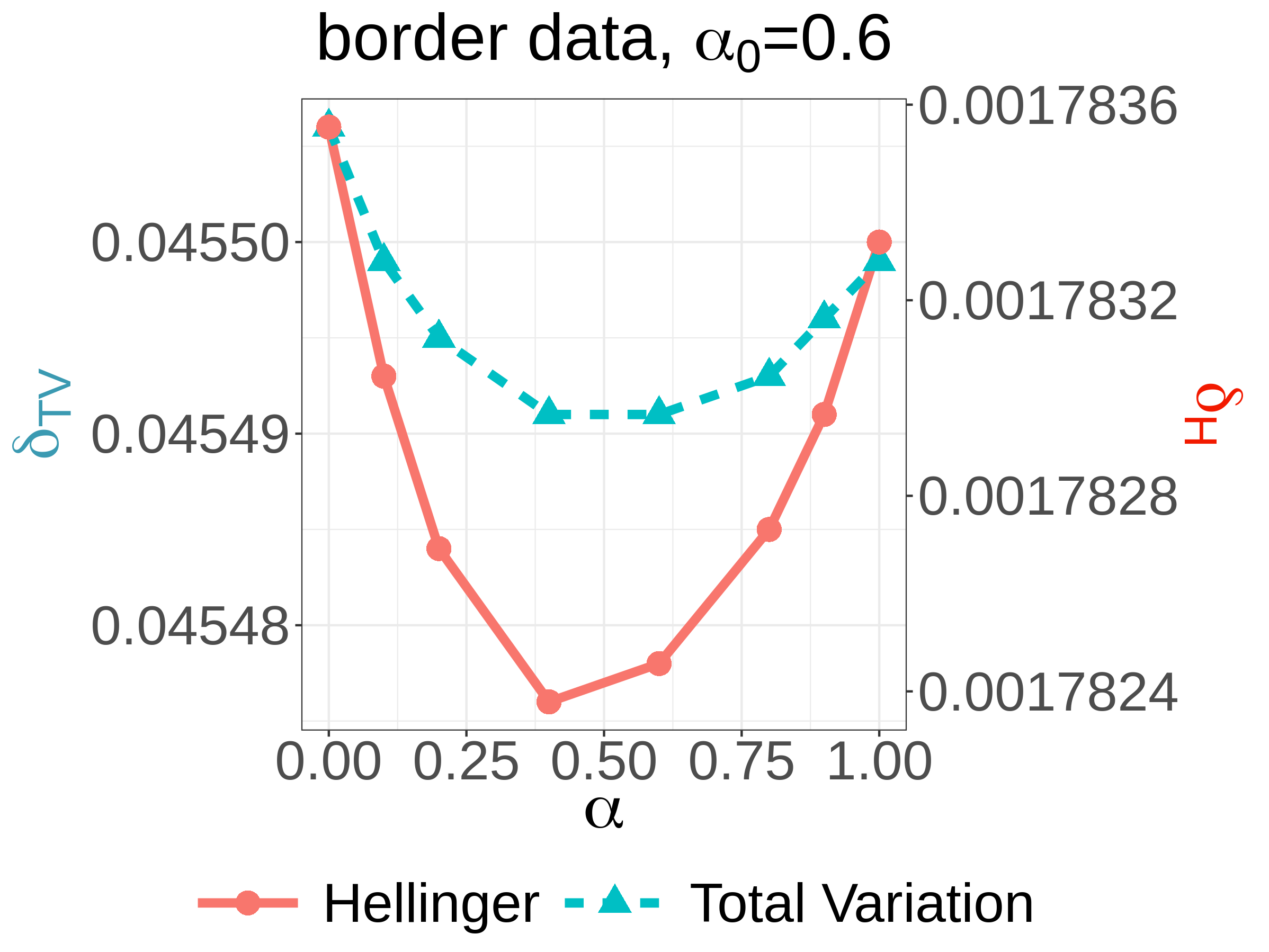}\\
	\vspace{1cm}
	\includegraphics[width=0.48\linewidth]{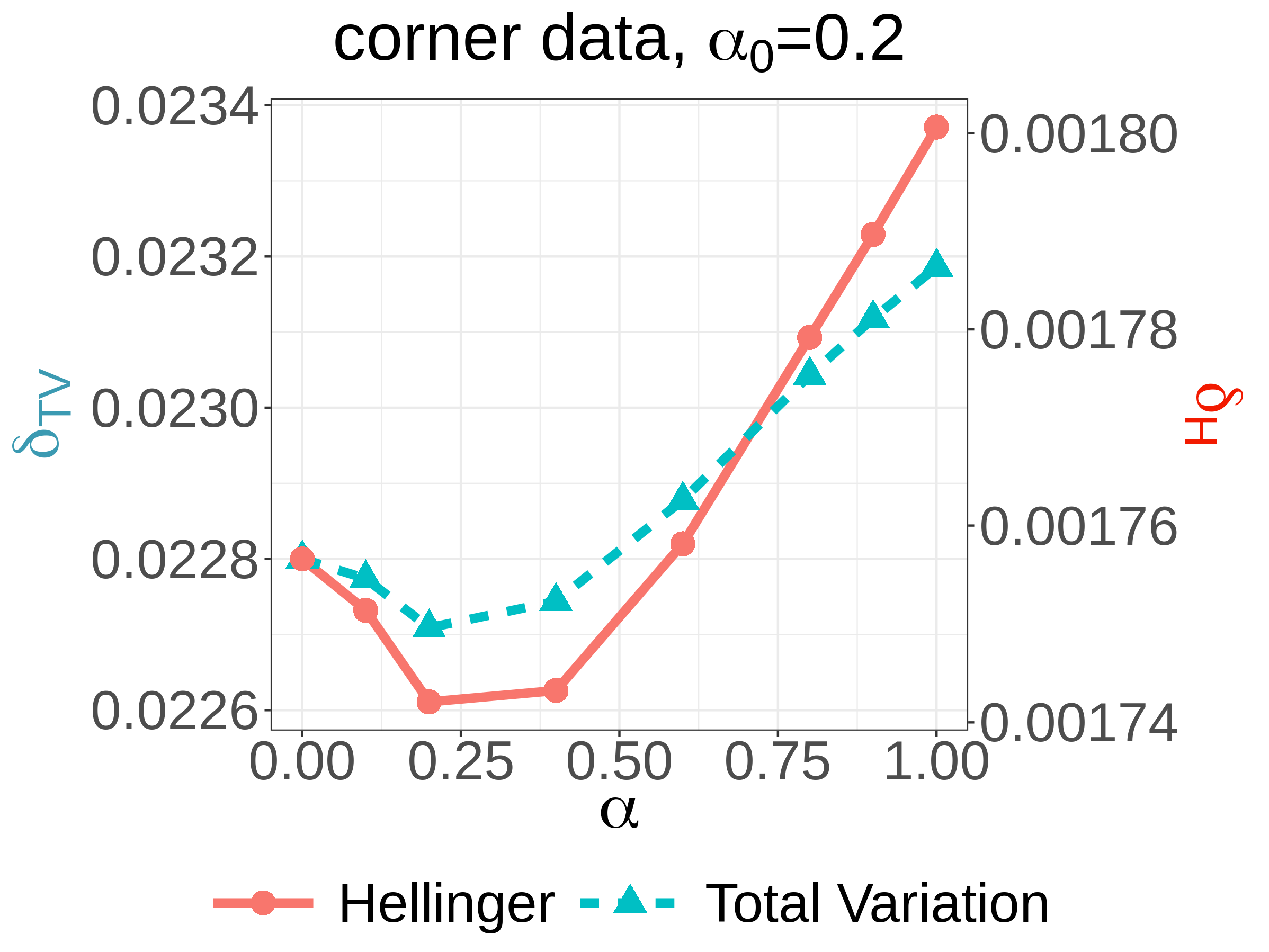}
	\includegraphics[width=0.48\linewidth]{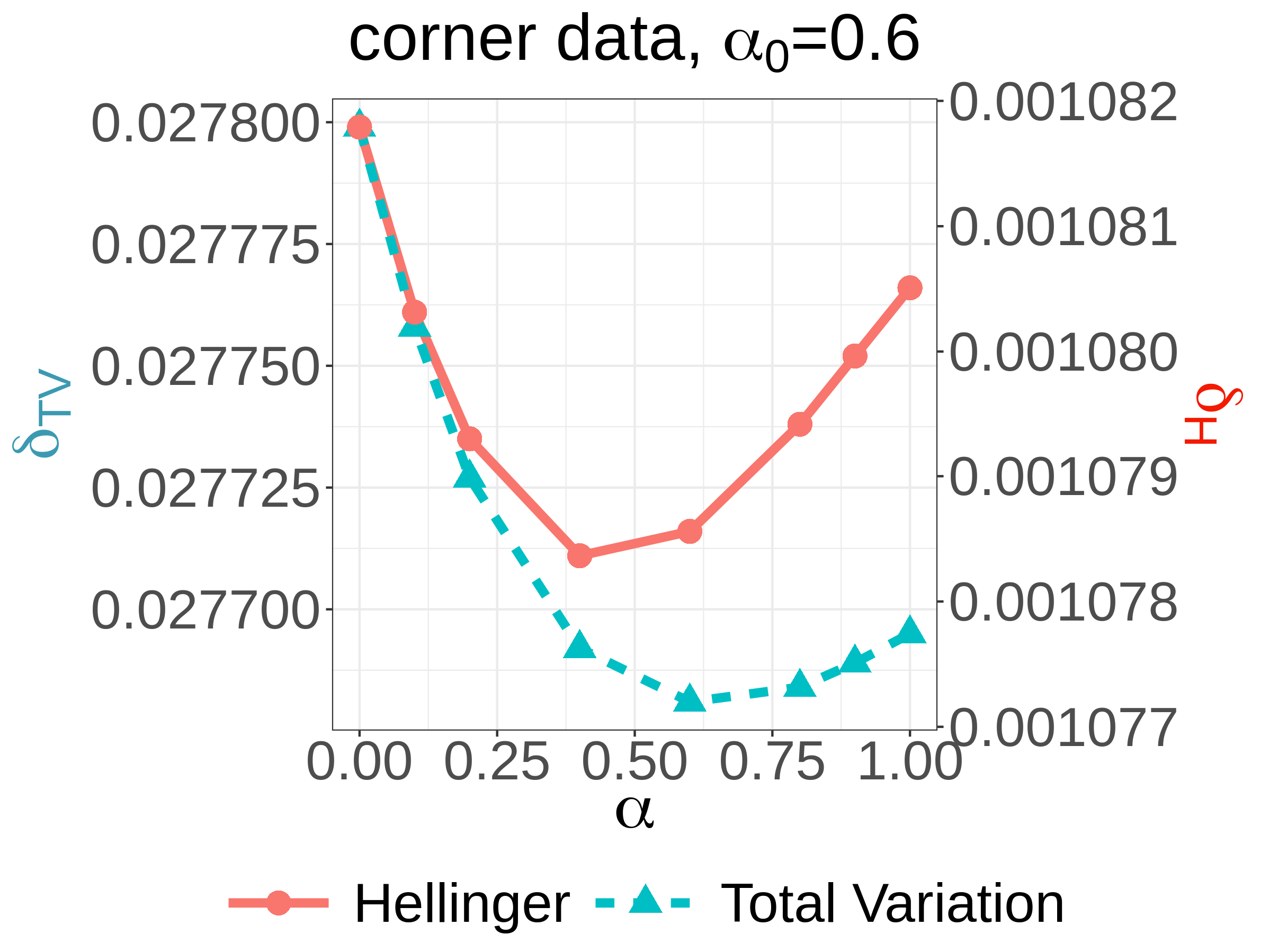}
	\caption{Average error over $B=100$ tests for border (up) and corner (down) data after $\alpha$-IT with different $\alpha$ and $\alpha_0=0.2$ (left) and $\alpha_0=0.6$ (right): Total Variation metric in blue dashed lines and Hellinger metric in red continuous lines.}
	\label{fig:means_border_corner}
\end{figure}

\clearpage

\begin{figure}[!htp]
	\centering
	$\vcenter{\hbox{\includegraphics[width=0.48\linewidth]{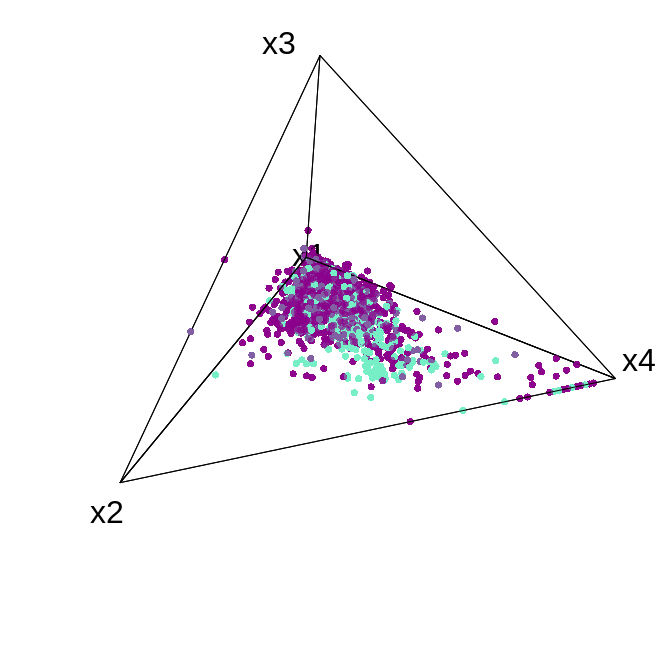}}}$
    $\vcenter{\hbox{\includegraphics[width=0.48\linewidth]{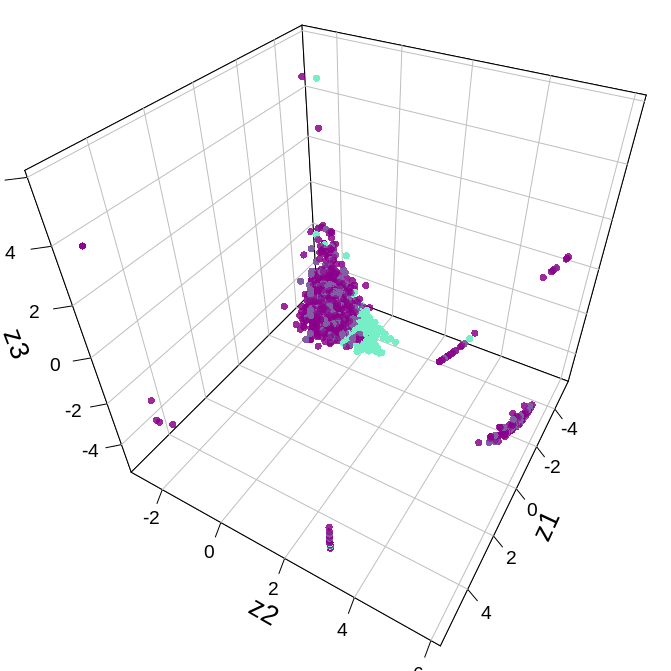}}}$\\
    $\vcenter{\hbox{\includegraphics[width=0.48\linewidth]{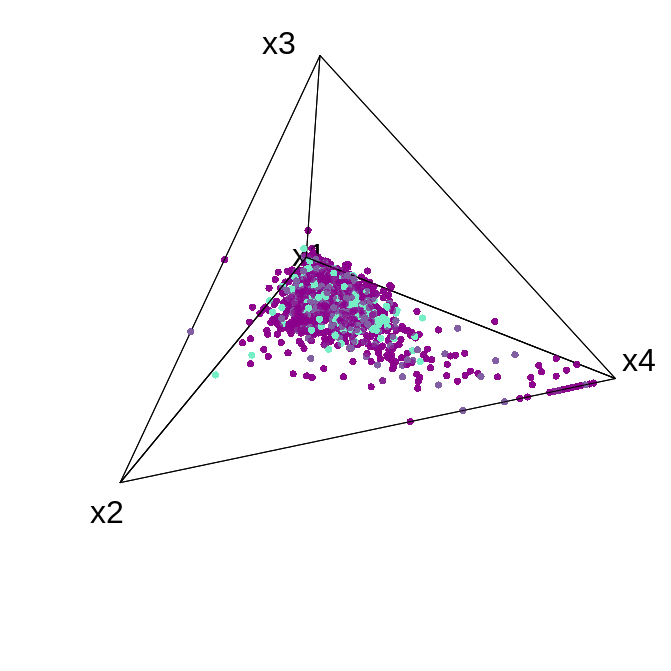}}}$
    $\vcenter{\hbox{\includegraphics[width=0.48\linewidth]{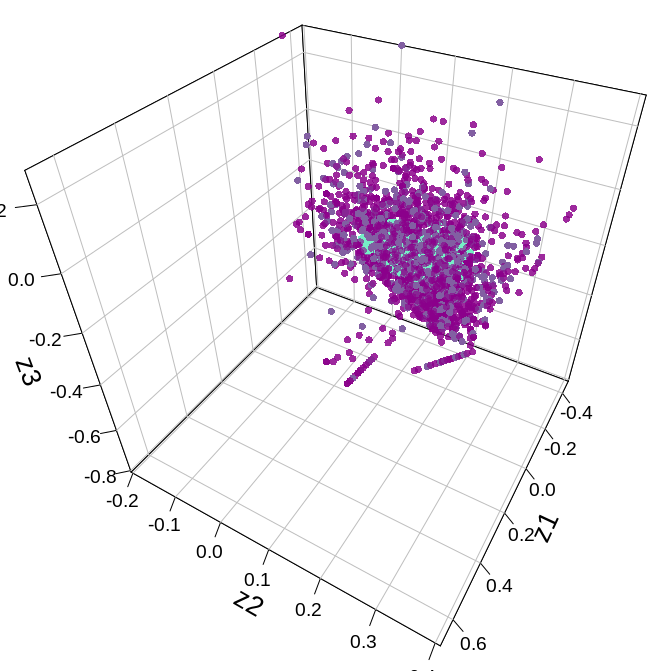}}}$
	\caption{Original data (violet) and kriging predictions (light blue)  with a moderate (10\%) fraction of data with 0s (first setting). Left: simplex. Right: Euclidean space. Top: $\alpha^\star=0.12$. Bottom: $\alpha=1$.}
	\label{fig:euc_cop_zeros10pc_3D}
\end{figure}

\end{document}